\documentclass[a4paper,USenglish,cleveref, autoref, thm-restate,numberwithinsect,pdfa]{lipics-v2021}

\usepackage{mathtools}
\usepackage{stmaryrd}
\usepackage{tikz}
\usetikzlibrary{arrows,automata}

\nolinenumbers 

\hideLIPIcs  
\newcommand{\stateSpace}{S}
\newcommand{\prob}{P}
\newcommand{\probMeasure}{\mathrm{Pr}}
\newcommand{\dtmc}{\mathcal{M}}

\newcommand{\labelFunction}{\ell}
\newcommand{\initialState}{s_{init}}

\newcommand{\Paths}{\mathrm{Paths}}
\newcommand{\trace}{\mathrm{tr}}

\newcommand{\Words}{\mathrm{Words}}

\newcommand{\automaton}{\mathcal{A}}
\newcommand{\automatonB}{\mathcal{B}}
\newcommand{\automatonD}{\mathcal{D}}
\newcommand{\astates}{Q}
\newcommand{\ainit}{q_{init}}
\newcommand{\afinal}{F}
\newcommand{\accLanguage}{\mathcal{L}}
\newcommand{\UP}{\mathrm{UP}}
\newcommand{\resLang}{Res^E}

\newcommand{\fdfa}{\mathcal{F}}
\newcommand{\lfdfa}{\mathcal{Q}}
\newcommand{\pfdfa}{\mathcal{P}}

\newcommand{\fufa}{\mathcal{U}}
\newcommand{\lfufa}{\mathcal{Q}}
\newcommand{\pfufa}{\mathcal{P}}
\newcommand{\fufaH}{\mathcal{H}}

\newcommand{\mstates}{A}

\title{\texorpdfstring{Probabilistic Model Checking via Families of Deterministic and Unambiguous Finite Automata\thanks{Authors are ordered alphabetically.}}{Probabilistic Model Checking via Families of Deterministic and Unambiguous Finite Automata}}
\titlerunning{PMC via Families of Deterministic and Unambiguous Finite Automata}

\author{Christel Baier}{Technische Universität Dresden, Dresden, Germany}{christel.baier@tu-dresden.de}{https://orcid.org/0000-0002-5321-9343}{}
\author{Sascha Kl\"uppelholz}{Technische Universität Dresden, Dresden, Germany}{sascha.klueppelholz@tu-dresden.de}{https://orcid.org/0000-0003-1724-2586}{}

\author{Timm Spork}{Technische Universität Dresden, Dresden, Germany}{timm.spork@tu-dresden.de}{https://orcid.org/0009-0008-4461-0667}{}

\authorrunning{C. Baier, S. Klüppelholz, and T. Spork}

\Copyright{Christel Baier, Sascha Klüppelholz, and Timm Spork} 

\ccsdesc[500]{Theory of computation~Automata over infinite objects}
\ccsdesc[500]{Theory of computation~Verification by model checking} 
\keywords{Families of Finite Automata, FDFA, Unambiguous Automata, Discrete-time Markov Chains, Probabilistic Model Checking, Verification} 
\relatedversion{Full version of a paper accepted for publication at CONCUR 2026.}

\funding{This work was partly funded by the DFG grant 389792660 as part of TRR~248 (Foundations of Perspicuous Software Systems, see \url{https://perspicuous-computing.science}), the Cluster of Excellence EXC 2050/2 (CeTI, project ID 390696704, as part of Germany's Excellence Strategy), and by the
BMFTR (Federal Ministry of Research, Technology and Space) in DAAD project 57616814 (SECAI, School of Embedded and Composite AI) as
part of the program Konrad Zuse Schools of Excellence in Artificial Intelligence.
}

\acknowledgements{We thank Dana Fisman for telling us about the double-exponential lower bound for the translation from LTL to FDFA, Jianlin Li for providing us with a copy of \cite{BALi}, and Robin Ziemek for many interesting discussions on the topic.}

\EventEditors{Ana Sokolova and Patrick Totzke}
\EventNoEds{2}
\EventLongTitle{37th International Conference on Concurrency Theory (CONCUR 2026)}
\EventShortTitle{CONCUR 2026}
\EventAcronym{CONCUR}
\EventYear{2026}
\EventDate{September 1--4, 2026}
\EventLocation{Liverpool, UK}
\EventLogo{}
\SeriesVolume{391}
\ArticleNo{23}

\begin{document}
	
\maketitle

\begin{abstract}
Families of deterministic finite automata (FDFA) have been introduced as a concise automaton model that characterizes $\omega$-regular languages by processing their ultimately periodic words. FDFA are known to enjoy many good properties and can be exponentially more succinct than deterministic $\omega$-automata with Rabin, Streett or parity acceptance.
This paper addresses two main questions: (1) Are FDFA suitable for probabilistic model checking purposes? and (2) Is it possible to obtain an even more compact representation of $\omega$-regular languages by allowing the components of an FDFA to be unambiguous instead of deterministic?
Question (1) is answered in the affirmative by presenting the first polynomial-time algorithm for computing the probability that a discrete-time Markov chain satisfies an $\omega$-regular property represented as an FDFA.
Question (2) is motivated by the fact that unambiguous finite automata may require exponentially fewer states than deterministic ones. This paper introduces a model of families of unambiguous finite automata (FUFA) that captures the class of $\omega$-regular languages. FUFA can be exponentially more succinct than both FDFA and unambiguous Büchi automata, and there is a single-exponential translation from linear temporal logic (LTL) to FUFA. This stands in contrast to a double-exponential lower bound for the translation from LTL to FDFA. Moreover, the polynomial-time probabilistic model checking algorithm for discrete-time Markov chains against FDFA-specifications is extended to the case where the property is represented by an FUFA with a deterministic leading automaton.
\end{abstract} 
\section{Introduction}
Regular languages over infinite words, also called \emph{$\omega$-regular languages}, are widely used in fields such as verification and synthesis \cite{SRM,ATAAPV,PoMC,CSRCP}. There are many ways to represent $\omega$-regular languages like, e.g., $\omega$-automata, $\omega$-regular expressions, or logics \cite{ALIG,OA}. 

\emph{Families of deterministic finite automata} (\emph{FDFA}) \cite{LROL,FDFAAORL} gained popularity in recent years as a concise description of $\omega$-regular languages that is well-suited for automata learning \cite{LROL,NLABAFDFACT,NFFARLORL,SPFA}. The central idea behind FDFA is to exploit the fact that $\omega$-regular languages are uniquely characterized by their \emph{ultimately periodic words}, i.e., the words in the language that are of the form $w = uv^\omega$ for finite words $u, v$, where $v$ is nonempty \cite{DMRSOA,UPWROL}. 

Formally, an FDFA $\fdfa = (\lfdfa, \{\pfdfa^q\}_{q \in \lfdfa})$ consists of a \emph{leading automaton} $\lfdfa$ and a set of \emph{progress automata} $\pfdfa^q$, one for each state $q$ of $\lfdfa$. Here, $\lfdfa$ and all of the $\pfdfa^q$ are \emph{deterministic finite automata} (DFA). The inputs of $\fdfa$ are \emph{decompositions} $(u,v)$ that represent ultimately periodic words $xy^\omega$, i.e., that satisfy $uv^\omega = xy^\omega$. Intuitively, $\fdfa$ accepts $(u,v)$ if (1) $v$ is accepted by the progress automaton $\pfdfa^q$, for $q$ the state reached in $\lfdfa$ after reading $u$, and (2) $v$ closes a loop on $q$ in $\lfdfa$. It was shown in \cite{LROL,FDFAAORL} that \emph{saturated} FDFA, which are FDFA in which either \emph{all} decompositions of an ultimately periodic word that satisfy (1) and (2) are accepted or \emph{none} of them is, characterize the set of $\omega$-regular languages and enjoy many nice properties like constant space complementation and efficiently implementable Boolean operations and decision problems. Moreover, saturated FDFA can be exponentially more succinct than deterministic Rabin, Streett or parity automata \cite{FDFAAORL}, making them a promising automaton model for applications in the field of (probabilistic) model checking.

To the best of our knowledge, the only available works that discuss the use of FDFA in the context of model checking are \cite{BALi,CRBA}. While the former gives a translation from the linear temporal logic (LTL) \cite{TLP} to FDFA and shows how to use FDFA for the verification of transition systems against LTL-formulas, the latter proposes to model both the system under consideration and the property of interest as FDFA, and check if the system satisfies the property by deciding language containment between the two FDFA, which can be done efficiently \cite{FDFAAORL}. Both \cite{BALi,CRBA} do, however, not consider \emph{probabilistic} model checking. 

Given an $\omega$-regular specification $E$ and a discrete-time Markov chain (DTMC) $\dtmc$, the \emph{probabilistic model checking problem} asks to compute the probability that $\dtmc$ generates a path that satisfies $E$ \cite{PoMC}. Classical solutions to this problem translate the $\omega$-regular specification into a deterministic $\omega$-automaton $\automaton$ and compute the probability of reaching specific bottom strongly connected components (BSCCs) in a product of $\dtmc$ and $\automaton$ \cite{ATAAPV,CPV,PoMC}. Because of the succinctness of saturated FDFA compared to types of deterministic $\omega$-automata typically used in probabilistic verification \cite{FDFAAORL}, it is of interest to investigate if saturated FDFA are a suitable model for the representation of $\omega$-regular properties in probabilistic model checking. 

We answer this question in the affirmative by providing the \emph{first} probabilistic model checking algorithm for DTMCs $\dtmc$ against $\omega$-regular specifications represented as saturated FDFA $\fdfa = (\lfdfa, \{\pfdfa^q\}_{q \in \lfdfa})$. The algorithm is based on a classification and subsequent reachability analysis of specific \emph{good} BSCCs in a product of $\dtmc$ and $\lfdfa$, and its runtime is polynomial in the size of $\dtmc$ and the sizes of the components of $\fdfa$. 

By definition, all components of an FDFA are deterministic. It is, however, well-known that nondeterministic and unambiguous finite automata can require exponentially fewer states than minimal DFA for the same language \cite{SDCFRFL,ECPURERGFA,DCNFADA}. Here, an automaton is unambiguous if every word has at most one accepting run \cite{UAT}. Replacing the deterministic finite automata in an FDFA with nondeterministic and/or unambiguous finite automata can therefore yield up to exponential savings in the individual components of the resulting family of automata. 

Following this observation, we introduce the novel notion of \emph{families of unambiguous finite automata} (\emph{FUFA}), which have the same structure as FDFA, but allow a nondeterministic leading automaton and unambiguous progress automata. The notion from the literature closest to FUFA are families of nondeterministic finite automata (FNFA) \cite{SPFA}, which require a deterministic leading automaton but allow  \emph{nondeterministic} progress automata. 

Similar to the case of FDFA, \emph{saturated} FUFA characterize the set of $\omega$-regular languages. Furthermore, FUFA can be exponentially more succinct than FDFA and unambiguous Büchi automata (UBA), and there is a single-exponential translation from LTL to FUFA. This stands in contrast to the case of saturated FDFA, where a double-exponential blow-up is inevitable \cite{DFpersCom,CLTLFE}. Additionally, it is shown how to combine the probabilistic model checking algorithm for DTMCs against  FDFA-specifications developed in this paper with methods from \cite{MCUA} for probabilistic model checking of DTMCs against UBA-specifications to obtain a polynomial-time algorithm that computes the probability of a DTMC to generate an infinite path that satisfies an $\omega$-regular specification represented as a \emph{d}FUFA, which is an FUFA with a \emph{deterministic} leading automaton. While the complexity of probabilistic model checking of DTMCs against general FUFA-specifications remains open, the corresponding polynomial-time algorithm for dFUFA is a first step in the direction of applying FUFA in probabilistic verification. 

\subparagraph*{Main Contributions.} 
In summary, the main contributions are 
\begin{itemize}
	\item a poly-time model checking algorithm for DTMC against FDFA-specifications (\Cref{sec:fdfa-mc}). 
	\item the introduction of FUFA (families of unambiguous finite automata) as a new model to represent $\omega$-regular languages that can be exponentially more succinct than FDFA and UBA, and for which there is a single-exponential translation from LTL (\Cref{sec:fufa}).
	\item a poly-time model checking algorithm for DTMC against dFUFA-specifications (\Cref{sec:dfufa-mc}).
\end{itemize}\vspace{0.1cm}
\noindent
Full proofs can be found in the Appendix. 
\section{Preliminaries}\label{sec:preliminaries}
\subparagraph*{Words and languages.} A finite set $\Sigma$ is called an \emph{alphabet}. Let $\varepsilon$ be the \emph{empty word}, $\Sigma^*$ the set of \emph{finite words} over $\Sigma$, $\Sigma^+ = \Sigma^* \setminus \{\varepsilon\}$, and $\Sigma^{*+} = \Sigma^* \times \Sigma^+$. The \emph{concatenation} of $w_1, w_2 \in \Sigma^*$ is $w_1 \cdot w_2$, where $w \cdot \varepsilon = w$, and we often write $w_1w_2$ instead of $w_1 \cdot w_2$. For $k \in \mathbb{N}$, $w^k$ is the $k$-fold repetition of $w$, where $w^0 = \varepsilon$. The \emph{length} of $w = w_1 \ldots w_n \in \Sigma^*$ is $\vert w \vert = 0$ if $w = \varepsilon$ and $\vert w \vert = n$ otherwise. 
$\Sigma^\omega$ is the set of \emph{infinite} or \emph{$\omega$-words} over $\Sigma$. A \emph{language} is a subset of $\Sigma^*$, and an \emph{$\omega$-language} is a subset of $\Sigma^\omega$. 

\subparagraph*{Automata over finite words.} A \emph{non-deterministic finite automaton} (\emph{NFA}) is a $5$-tuple $\automaton = (\astates, \Sigma, \ainit, \delta, \afinal)$ with $\astates$ a finite set of \emph{states}, $\Sigma$ an \emph{alphabet}, $\ainit \in \astates$ a unique \emph{initial state}, $\delta \colon \astates \times \Sigma \to 2^\astates$ a \emph{transition function} and $\afinal \subseteq \astates$ a set of \emph{final states}. To emphasize the automaton $\automaton$ under consideration we may write, e.g., $\afinal^\automaton$ instead of $\afinal$. By slight abuse of notation we extend $\delta$ to finite words via $\delta \colon \astates \times \Sigma^* \to 2^{\astates}$ with $\delta(q, \varepsilon) = q$ and $\delta(q, aw) = \bigcup_{q' \in \delta(q, a)} \delta(q', w)$ for $q \in \astates, a \in \Sigma$ and $w \in \Sigma^*$. A \emph{run} of $\automaton$ on a word $w = a_1 \ldots a_n \in \Sigma^*$ is a sequence of states $q_0q_1 \ldots q_{n}$ such that $q_0 = \ainit$ and $q_{i+1} \in \delta(q_i, a_{i+1})$ for all $0 \leq i < n$. $\automaton$ \emph{accepts} the word $w \in \Sigma^*$ if there is a run $r$ of $\automaton$ on $w$ that ends in a final state, and in this case $r$ is \emph{accepting}. $\accLanguage(\automaton) = \{w \in \Sigma^* \mid \automaton \text{ accepts} w\}$ is the \emph{language of $\automaton$}. 

Given $w \in \Sigma^*$, let $\automaton(w) = \delta(\ainit, w)$. The \emph{size of $\automaton$} is $\vert \automaton \vert = \vert \astates \vert$. We often, by slight abuse of notation, identify $\astates$ with $\automaton$ and write $q \in \automaton$ instead of $q \in \astates$. For $q \in \astates$ let $\automaton_q$ be like $\automaton$, but with initial state $q$. $\automaton$ is \emph{total} if $\vert \delta(q,a) \vert \geq 1$ for all $q \in \astates$ and $a \in \Sigma$. Every automaton $\automaton$ can be made total by adding a rejecting sink state reached via all transitions for which $\delta(q,a) = \emptyset$ in $\automaton$. W.l.o.g. we assume all states of $\automaton$ to be reachable from $\ainit$. 

A \emph{deterministic finite automaton} (\emph{DFA}) is an NFA that satisfies $\vert \delta(q, a) \vert \leq 1$ for every $q \in \astates$ and $a \in \Sigma$. 
An \emph{unambiguous finite automaton} (\emph{UFA}) is an NFA that has at most one accepting run on every $w \in \Sigma^*$. NFA, UFA and DFA all characterize the regular languages.

\subparagraph*{$\boldsymbol{\omega}$-Automata.}
A \emph{Nondeterministic Büchi automaton} (\emph{NBA}) has the same structure as an NFA, but operates on $\omega$-words. A \emph{run} of an NBA $\automaton$ on $w = a_1a_2 \ldots \in \Sigma^\omega$ is an infinite sequence $\rho = q_0q_1 \ldots \in \astates^\omega$ with $q_0 = \ainit$ and $q_{i+1} \in \delta(q_i, a_{i+1})$ for all $i \in \mathbb{N}$. The run $\rho$ is \emph{accepted} by $\automaton$ if $\mathrm{inf}(\rho) \cap \afinal \neq \emptyset$, where $\mathrm{inf}(\rho) = \{q \in \astates \mid q \text{ is visited infinitely often along } \rho \}$, and in this case $\rho$ is called \emph{accepting}. The \emph{language} of $\automaton$, denoted $\accLanguage(\automaton)$, is defined as for NFA. \emph{Unambiguous Büchi automata} (\emph{UBA}) are NBA in which every word $w \in \Sigma^\omega$ has at most one accepting run. Both NBA and UBA characterize the set of $\omega$-regular languages.

\emph{Deterministic Rabin automata} (\emph{DRA}) are deterministic $\omega$-automata that also characterize the set of $\omega$-regular languages. Instead of final states, a DRA $\automaton$ has $n$ so-called \emph{Rabin pairs} $(T_i, R_i)$ with $T_i, R_i \subseteq \astates$. A word $w \in \Sigma^\omega$ is \emph{accepted} by $\automaton$ if the unique run $\rho$ of $\automaton$ on $w$ satisfies $\mathrm{inf}(\rho) \cap T_i = \emptyset$ and $\mathrm{inf}(\rho) \cap R_i \neq \emptyset$ for some $1 \leq i \leq n$. The language $\accLanguage(\automaton)$ of a DRA $\automaton$ is defined as for NBA. W.l.o.g. we assume all DRA and DFA to be total. 

\subparagraph*{Ultimately periodic words and FDFA.} A word $w \in \Sigma^\omega$ is \emph{ultimately periodic} if $w = uv^\omega$ for a $(u,v) \in \Sigma^{*+}$. For every $\omega$-regular language $L$, $\UP(L) = \{uv^\omega \in L \mid (u,v) \in \Sigma^{*+}\} \neq \emptyset$~\cite{DMRSOA}. Moreover, for $\omega$-regular languages $L, L'$ it holds that $L = L'$ iff $\UP(L) = \UP(L')$ \cite{UPWROL}. 

A \emph{family of deterministic finite automata} (\emph{FDFA}) is a tuple $\fdfa = (\lfdfa, \{\pfdfa^q\}_{q \in \lfdfa})$. $\lfdfa$ is the \emph{leading automaton} and satisfies $\afinal^\lfdfa = \emptyset$, and each $\pfdfa^q$, $q \in \lfdfa$, is a \emph{progress automaton}. In an FDFA, $\lfdfa$ and all $\pfdfa^q$ are DFA. The \emph{size} of $\fdfa$ is $\vert \fdfa \vert = (\vert \lfdfa \vert, \max_{q \in \lfdfa} \vert \pfdfa^q \vert)$. FDFA process ultimately periodic words represented by \emph{decompositions} $(u,v) \in \Sigma^{*+}$. A decomposition $(u,v)$ is \emph{normalized} (w.r.t. $\lfdfa$) if $\lfdfa(u) = \lfdfa(uv)$, and $\fdfa$ \emph{accepts} $(u,v)$ if it is normalized and $v \in \accLanguage(\pfdfa^{q})$ for $q = \lfdfa(u)$. 
Let $\llbracket \fdfa \rrbracket = \{(u,v) \in \Sigma^{*+} \mid \fdfa \text{ accepts } (u,v)\}$ and $\accLanguage(\fdfa) = \{uv^\omega \mid \exists (x,y) \in \Sigma^{*+}\colon xy^\omega = uv^\omega \text{ and} (x,y) \in \llbracket \fdfa \rrbracket\}$. $\fdfa$ \emph{accepts} $w = uv^\omega$ if $w \in \accLanguage(\fdfa)$. 

An FDFA is \emph{saturated} if for all $w=uv^\omega$ it accepts either every or none of the normalized decompositions of $w$. Saturated FDFA characterize the $\omega$-regular languages \cite{FDFAAORL,SPFA}, i.e., for every saturated FDFA $\fdfa$ there is a unique $\omega$-regular language $\accLanguage_{\omega}(\fdfa)$ with $\UP(\accLanguage_{\omega}(\fdfa)) = \accLanguage(\fdfa)$, and for every $\omega$-regular language $L$ there is a saturated FDFA $\fdfa_L$ with $\accLanguage(\fdfa_L) = \UP(L)$. 

\subparagraph*{Markov chains.} We fix a countable set $AP$ of \emph{atomic propositions}. For a finite set $S$, let $\mathrm{Distr}(S) = \{\mu \colon S \to [0,1] \mid \sum_{s \in S} \mu(s) = 1\}$ contain the \emph{probability distributions} on $S$. 

A \emph{discrete-time Markov chain} (\emph{DTMC}) is a tuple $\dtmc = (\stateSpace, \prob, \initialState, \labelFunction)$ with $\stateSpace$ a finite set of \emph{states}, $\prob \colon \stateSpace \to \mathrm{Distr}(\stateSpace)$ a \emph{transition distribution function}, $\initialState \in \stateSpace$ a unique \emph{initial state}, and $\labelFunction\colon \stateSpace \to 2^{AP}$ a \emph{labeling function}. $\prob(s,s') = \prob(s)(s')$ denotes the probability to move from $s$ to $s'$ in one step, and $\mathrm{Succ}(s) = \{s' \in S \mid \prob(s,s') > 0\}$ is the set of \emph{successors} of $s$. For $s \in \stateSpace$, let $\dtmc_s$ be like $\dtmc$, but with initial state $s$. The \emph{size} of $\dtmc$ is $\vert \dtmc \vert = \vert \stateSpace \vert$.

A sequence $\pi = s_0s_1\ldots \in \stateSpace^{\omega}$ is a (\emph{infinite}) \emph{path} of $\dtmc$ if $s_0 = \initialState$ and $s_{i+1} \in \mathrm{Succ}(s_i)$ for all $i \in \mathbb{N}$.
$\pi[i]=s_i$ is the state at position $i$ of $\pi$, and  $\trace(\pi) = \labelFunction(s_0)\labelFunction(s_1)\ldots \in (2^{AP})^\omega$ is the \emph{trace} of $\pi$. \emph{Finite} paths $\pi = s_0s_1\ldots s_n \in \stateSpace^{n+1}$ and their traces are defined analogously, and if $\pi$ is finite we set $\mathrm{last}(\pi) = s_n$. $\Paths(\dtmc)$ is the set of all paths of $\dtmc$, and $\Paths^*(\dtmc)$ is the set of all finite paths of $\dtmc$. For $s \in \stateSpace$, let $\Paths(s) = \Paths(\dtmc_s)$ and $\Paths^*(s) = \Paths^*(\dtmc_s)$. 

$C \subseteq \stateSpace$ is a \emph{strongly connected component} (\emph{SCC}) of $\dtmc$ if all $s,s' \in C$ are reachable from one another via finite paths. An SCC $C$ is \emph{bottom} (\emph{BSCC}) if $\mathrm{Succ}(s) \subseteq C$ for all $s \in C$. 

For a DTMC $\dtmc$ and a deterministic automaton $\automaton$ with alphabet $2^{AP}$, the \emph{product} $\dtmc \otimes \automaton$ is a DTMC with state space $\stateSpace \times \astates$, initial state $\langle \initialState, \delta(\ainit, \labelFunction(\initialState))\rangle$, labeling function $\labelFunction'(\langle s, q \rangle) = \labelFunction(s)$ and transition probabilities $\prob'(\langle s,q \rangle, \langle s', q' \rangle) = \prob(s,s')$ if $q' = \delta(q, \labelFunction(s'))$ and $0$ otherwise. The DTMC $\dtmc \times \automaton$ is build similar to $\dtmc \otimes \automaton$, but with initial state $\langle \initialState, \ainit \rangle$. 

We consider the standard probability measure $\probMeasure^{\dtmc}$ on subsets of $\Paths(\dtmc)$, defined via \emph{cylinder sets} $\mathit{Cyl}(\pi) = \{ \sigma \in \Paths(\dtmc) \mid \pi \text{ is a prefix of } \sigma \}$ for $\pi = s_0s_{1} \dots s_n \in \Paths^*(\dtmc)$. We abbreviate $\probMeasure^{\dtmc}(\mathit{Cyl}(\pi))$ by $\probMeasure^{\dtmc}(\pi)$ and the measure yields $\probMeasure^{\dtmc}(\pi) = 0$ if $s_0 \neq \initialState$ and $\probMeasure^{\dtmc}(\pi)= \prod_{j = 0}^{n-1} \prob(s_j,s_{j+1})$ otherwise. Given an $\omega$-regular property $E$, $\probMeasure^{\dtmc}(E)$ is the probability of $\dtmc$ to generate a path with trace in $E$. For details on DTMCs see, e.g., \cite{FMC,PoMC}.

\subparagraph*{LTL.} For $a \in AP$, formulas of the \emph{linear temporal logic} (\emph{LTL}) are formed w.r.t. the grammar 
\begin{align*}
	\varphi \Coloneqq \mathit{true} \mid a \mid \lnot \varphi \mid \varphi_1 \lor \varphi_2 \mid \bigcirc \varphi \mid \varphi_1 \mathsf{U} \varphi_2.
\end{align*}
Here, $\bigcirc$ is the \emph{next}-operator, so $\sigma \in (2^{AP})^\omega$ satisfies $\bigcirc \varphi$ iff $\varphi$ holds in $\sigma[1]$, and $\mathsf{U}$ is the \emph{until}-operator, so $\sigma$ satisfies $\varphi_1 \mathsf{U} \varphi_2$ iff, along $\sigma$, $\varphi_1$ holds until $\varphi_2$ is true. As syntactic sugar we define $\lozenge \varphi \equiv \mathit{true} \mathsf{U} \varphi$ which is satisfied by $\sigma$ iff $\varphi$ holds \emph{eventually} along $\sigma$.  Let $\Words(\varphi) = \{w \in (2^{AP})^\omega \mid w \text{ satisfies } \varphi \}$. For a DTMC $\dtmc$, $\probMeasure^\dtmc(\varphi)$ is the probability that $\dtmc$ generates a path $\pi$ with $\trace(\pi) \in \Words(\varphi)$. 
For details on LTL see, e.g., \cite{TLP,PoMC}.
\section{Model Checking DTMCs against Saturated FDFA}\label{sec:fdfa-mc}
Let $\dtmc$ be a DTMC and $E$ an $\omega$-regular property.
The \emph{probabilistic model checking problem} for $\dtmc$ and $E$ asks to compute $\probMeasure^{\dtmc}(E)$, i.e., the probability that $\dtmc$ generates a path $\pi$ with $\trace(\pi) \in E$. In this section we show how to efficiently solve the probabilistic model checking problem in the case that $E$ is represented by a saturated FDFA $\fdfa = (\lfdfa, \{\pfdfa^q\}_{q \in \lfdfa})$.
\begin{theorem}\label{thm:poly-time-mc-for-fdfa}
	For a given DTMC $\dtmc$ and a given saturated FDFA $\fdfa$ with $\accLanguage_\omega(\fdfa) = E$, the probability $\probMeasure^{\dtmc}(E)$ can be computed in time polynomial in the sizes of $\dtmc$ and $\fdfa$. 
\end{theorem}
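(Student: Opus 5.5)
The plan is the classical one: build the product $\dtmc \otimes \lfdfa$ and compute the probability of reaching a set of \emph{good} BSCCs, so everything reduces to deciding in polynomial time whether a BSCC is good. The first step is a right-congruence lemma for saturated FDFA: if $\lfdfa(u) = \lfdfa(u')$, then $uv^\omega \in E \iff u'v^\omega \in E$ for all $v \in \Sigma^+$. To prove it, pick $k \ge 1$ with $\lfdfa(uv^k) = \lfdfa(uv^{2k})$. Then $(uv^k, v^k)$ and $(u'v^k, v^k)$ are both normalized decompositions. They lead to the same leading state, hence to the same progress automaton, so either both are accepted or neither is. Saturation then transfers this to membership of $uv^\omega$ and $u'v^\omega$ in $E$. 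Consequently, the residual language of $E$ after a prefix depends only on the $\lfdfa$-state reached.

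Next I would fix a DRA $\automatonD$ for $E$, used only in the proofs and never constructed. A path almost surely enters a BSCC $C$ of $\dtmc \otimes \lfdfa$. By the congruence lemma, acceptance of the remaining path depends only on the entry state $c = \langle s, q\rangle$ and on the tail of the path. Inside $C$, the run of $\automatonD$ almost surely ends in one BSCC of the larger product $C \otimes \automatonD$ and visits all its states infinitely often. The difficulty is that different BSCCs of $C \otimes \automatonD$ may be reached with positive probability. So I will define $C$ to have a \emph{good part} of probability $p_C(c)$, and I expect $p_C(c)$ to be computable from a polynomial product that replaces $\automatonD$ by the deterministic progress automaton. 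Concretely, let $G_c = C \times \pfdfa^{q}$, starting from $\langle c, \ainit^{\pfdfa^q}\rangle$. I would prove that a BSCC $B$ of $G_c$ is good, meaning almost all paths ending in $B$ satisfy $E$, iff $B$ contains a state $\langle c, p\rangle$ with $p \in \afinal^{\pfdfa^q}$.

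For that equivalence, the plan is as follows. Take a cycle of $G_c$ in $B$ that starts and ends at $\langle c, p\rangle$, passes through every state of $B$, and has trace $v$. Since $c$ lies on it, $(u, v)$ is normalized for every prefix $u$ leading to $c$. Moreover $v$ reaches $p$ from some $B$-state, and after pumping (replacing $v$ by $v^j$), $v \in \accLanguage(\pfdfa^q)$ exactly when $B$ contains a final state at $c$. By saturation, this decides whether $uv^\omega \in E$. The covering property of the cycle, together with a suitable power $v^j$ that also covers the corresponding BSCC of $\automatonD$, should show that the acceptance of $uv^\omega$ coincides with the acceptance of almost all paths that stay in $B$. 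The probability is then obtained by standard linear equations on $\dtmc \otimes \lfdfa$ combined with the products $G_c$. There is one $G_c$ per BSCC, of size $|\dtmc|\cdot|\lfdfa|\cdot\max_q|\pfdfa^q|$, and each $G_c$ is solved by reachability, so the total running time is polynomial.

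The main obstacle is the last equivalence: linking a single covering ultimately periodic witness in $B$ to almost-sure behaviour. This needs an argument that, for a suitable power $j$, the witness word $v^j$ drives $\automatonD$ through a full BSCC of the $\automatonD$-product reached from $B$, and that this BSCC is almost surely the one the random path ends in. I would attempt this by a pumping argument on the finite product $B \times \automatonD$: choose $j$ such that $v^j$ induces an idempotent transformation on the states of $\automatonD$. If that fails, the fallback is to refine $C$ into the BSCCs of $C\times\pfdfa^{q}$ restarted at every visit to $c$, and to show via saturation that the restarts do not change the verdict.
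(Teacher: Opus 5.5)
Your $G_c$ is the paper's $I_{\langle s,q\rangle}$, and your congruence lemma is correct; it is a more direct argument than the one given for \Cref{lem:residual-languages-coincide}. The criterion you aim for is also true. However, the step you flag as the main obstacle is the whole content of \Cref{lem:accepting-bscc-iff-good-bscc}, and the argument you sketch for it fails. Whether $v^j\in\accLanguage(\pfdfa^q)$ holds is decided by the run of $G_c$ from $\langle c,\ainit^{\pfdfa^q}\rangle$. On a cycle $v$ of $B$, or on any power of it, this run need not enter $B$ at all. Concretely, take the DTMC with two states labelled $a$ and $b$ and all transition probabilities $\frac12$, and let $E$ say that $aa$ occurs infinitely often. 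Let $\lfdfa$ have a single state $q$, and let $\pfdfa^q$ be the minimal DFA accepting $v$ iff $aa$ is a factor of $vv$; this is a saturated FDFA for $E$. Take $c$ to be the $a$-state. The only BSCC of $G_c$ pairs the absorbing final state ``$aa$ seen'' with both DTMC states, and the cycle at $c$ with trace $ba$ covers it. Yet $(ba)^j\notin\accLanguage(\pfdfa^q)$ for every $j$, and $u(ba)^\omega\notin E$, while almost every path satisfies $E$. No idempotent power for $\automatonD$ repairs this. Independently, a cycle covering $B$ carries no information about $\automatonD$. Its lift to $G_c\times\automatonD$ need not lie in, or cover, a BSCC of that product. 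Moreover, several BSCCs of $G_c\times\automatonD$ above the same $B$ may be reached with positive probability, so there is no single BSCC that ``the random path ends in''.

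The missing idea is to build a witness inside every BSCC $K$ of the joint product $H=G_c\times\automatonD$, with $\automatonD$ started where $u$ leaves it. The witness must begin with a prefix that drives $\pfdfa^q$ from its initial state into $K$.
\begin{itemize}
\item Since the projection of $K$ onto $G_c$ is all of $B$, $K$ contains some $\langle c,p,d\rangle$. If $B$ has a final state at $c$, you can choose $p\in\afinal^{\pfdfa^q}$.
\item Take $x$ from the initial state of $H$ to $\langle c,p,d\rangle$, and a loop $y$ there covering $K$. Set $z_0=xy$.
\item Choose $m$ such that $z_0^m$ loops on the state $h\in K$ that it reaches from the initial state of $H$. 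Append a loop $y'$ at $h$ covering $K$, and set $z=z_0^my'$.
\end{itemize}
Then $z$ loops on $q$, and the run of $\pfdfa^q$ on $z$ ends in the $\pfdfa^q$-component of $h$. In the first case this component is final: $(u,z_0)$ is accepted, and $(u,z_0^m)$ is a normalized decomposition of the same word, so saturation applies. In the second case it is non-final, because $h$ lies over $B$ at $c$. Finally, the $\automatonD$-run on $uz^\omega$ eventually cycles through exactly $K$. Saturation then ties the Rabin status of each such $K$ to your criterion. This is essentially the construction in \Cref{lem:acc-bscc-implies-good-bscc,lem:bscc-in-full-product,cor:duo-normalized-word-into-bscc-of-full-product,lem:good-bscc-implies-acc-bscc}.

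You also missed that your own congruence lemma already yields a zero--one law on each BSCC $C$ of $\dtmc\otimes\lfdfa$. The conditional probability of $E$ depends only on the current product state, and every state of $C$ is revisited almost surely; see the discussion after \Cref{lem:residual-languages-coincide}. So the several $\automatonD$-BSCCs you worry about are all accepting or all rejecting. That law, or \Cref{lem:single-inner-product-sufficient}, is what justifies ``one $G_c$ per BSCC''. As written, your $p_C(c)$ depends on the entry state $c$. You would then need one $G_c$ per entry state and the first-entry distribution, which is still polynomial.
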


To prove \Cref{thm:poly-time-mc-for-fdfa} we present a polynomial-time probabilistic model checking algorithm for DTMCs $\dtmc$ against saturated FDFA $\fdfa$. The algorithm follows the standard approach for model checking $\dtmc$ against deterministic $\omega$-automata $\automaton$ \cite{PoMC}, which boils down to (1) constructing the product $\dtmc \otimes \automaton$, (2) computing the BSCCs $B_1, \ldots, B_k$ of $\dtmc \otimes \automaton$ that satisfy the acceptance condition of $\automaton$, and (3) computing the probability of reaching one of these BSCCs $B_i$ in the product. When dealing with (saturated) FDFA instead of classical deterministic $\omega$-automata, the challenge is that FDFA can only process the ultimately periodic words of a given $\omega$-regular language. Hence, the decision in (2) if a BSCC of the product of $\dtmc$ and (in this case) the leading automaton $\lfdfa$ of $\fdfa$ is accepting requires some extra care. 

Intuitively, our algorithm works as follows: First, the set of BSCCs of $\dtmc \otimes \lfdfa$ is computed. Afterwards, for each BSCC $B$ of $\dtmc \otimes \lfdfa$, an arbitrary state $\langle s, q \rangle \in B$ is picked, and the product $I_B = \dtmc_s \times \lfdfa_q \times \pfdfa^q$ is analyzed for the existence of a BSCC that does not contain a state $\langle s, q, f \rangle$ with $f \in \afinal^{\pfdfa^q}$. Importantly, the existence of such a \emph{rejecting} BSCC in $I_B$ does not depend on the specific choice of $\langle s, q \rangle \in B$ (\Cref{lem:single-inner-product-sufficient}), and checking if $I_B$ has a rejecting BSCC can be done by a simple graph analysis in time linear in $\vert I_B \vert$ (\Cref{lem:good-or-bad-via-graph-analysis}). If $I_B$ does not contain a rejecting BSCC then $B$ is classified as \emph{good}. In the end, $\probMeasure^{\dtmc}(E)$ equals the probability of reaching a good BSCC in $\dtmc \otimes \lfdfa$ (\Cref{prop:fdfa-mc-preservation-of-probabilities}).
The majority of this section is devoted to the proof of correctness of this algorithm, and hence of \Cref{thm:poly-time-mc-for-fdfa}.

\smallskip

The upcoming proofs make use of an auxiliary DRA $\automaton$ with $\accLanguage(\automaton) = E$ and exploit the fact that $\probMeasure^\dtmc(E)$ equals the probability to reach an \emph{accepting} BSCC in $\dtmc \otimes \automaton$ \cite{PoMC}. Here, a BSCC $B$ of $\dtmc \otimes \automaton$ is accepting if there is a Rabin pair $(T_i, R_i)$ of $\automaton$ with $B \vert_\automaton \cap T_i = \emptyset$ and $B\vert_\automaton \cap R_i \neq \emptyset$, where $B \vert_\automaton = \{q \in \automaton \mid \exists \ s \in \dtmc\colon \langle s, q \rangle \in B\}$. Otherwise, $B$ is \emph{rejecting}. 

The following lemma shows that if a BSCC $B$ of $\dtmc \otimes \automaton$ is accepting, all paths that reach $B$ and visit all its states infinitely often satisfy $E$, while if $B$ is rejecting all these paths violate $E$. Recall that for $\sigma \in \Paths(\dtmc \otimes \automaton)$, $\trace(\sigma)$ equals the trace of the path of $\dtmc$ obtained from the projection of $\sigma$ to its first components. 

\begin{restatable}{lemma}{LemAcceptingAndRejectingBSCCinDRAProduct}\label{lem:accepting-and-rejecting-bscc-dra-product}
	Let $u \in \Paths^*(\dtmc \otimes \automaton)$ end in $\langle s, p \rangle \in B$ for some BSCC $B$ of $\dtmc \otimes \automaton$, and let $\pi \in \Paths^{\dtmc \otimes \automaton}(\langle s, p \rangle)$ be such that, along $\pi$, every state of $B$ is visited infinitely often.  
	\begin{enumerate}
		\item If $B$ is accepting then $\trace(u \pi) \in E$. 
		\item If $B$ is rejecting then $\trace (u \pi) \notin E$.
	\end{enumerate}
\end{restatable}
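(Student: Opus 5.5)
The plan is to reduce both claims to the Rabin acceptance condition evaluated on the unique run of $\automaton$ on $\trace(u\pi)$, and to identify the set of states this run visits infinitely often with $B\vert_\automaton$.

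First, fix notation. Let $\sigma = u\pi$, understood as the path of $\dtmc \otimes \automaton$ obtained by appending $\pi$ to $u$ without repeating the shared state $\langle s,p\rangle$. Write $\sigma = \langle s_0,q_0\rangle\langle s_1,q_1\rangle\ldots$. By construction of $\dtmc\otimes\automaton$, $s_0 s_1\ldots$ is a path of $\dtmc$ with trace $\trace(\sigma)$. Moreover $q_0 = \delta(\ainit,\labelFunction(s_0))$ and $q_{i+1} = \delta(q_i,\labelFunction(s_{i+1}))$ for all $i$. Since $\automaton$ is deterministic and total, $\ainit q_0 q_1 \ldots$ is exactly the unique run $\rho$ of $\automaton$ on $\trace(\sigma)$. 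Dropping the single leading state $\ainit$ does not change which states occur infinitely often, so $\mathrm{inf}(\rho) = \{q \mid q = q_i \text{ for infinitely many } i\}$.

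Second, show $\mathrm{inf}(\rho) = B\vert_\automaton$. Because $B$ is a BSCC and $\pi$ starts in $\langle s,p\rangle \in B$, every state of $\pi$ lies in $B$. The prefix $u$ is finite, so the states visited infinitely often by $\sigma$ are exactly those visited infinitely often by $\pi$. By hypothesis, every state of $B$ is visited infinitely often along $\pi$, and no state outside $B$ is visited at all. Hence the set of product states visited infinitely often by $\sigma$ is exactly $B$. Projecting to the second component, and using that $B$ is finite, gives $\mathrm{inf}(\rho) = \{q \mid \exists s'\colon \langle s',q\rangle \in B\} = B\vert_\automaton$.

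Finally, conclude. If $B$ is accepting, some Rabin pair $(T_i,R_i)$ satisfies $B\vert_\automaton\cap T_i=\emptyset$ and $B\vert_\automaton\cap R_i\neq\emptyset$, so $\rho$ is accepting and $\trace(u\pi)\in\accLanguage(\automaton)=E$. If $B$ is rejecting, no Rabin pair satisfies these conditions for $B\vert_\automaton = \mathrm{inf}(\rho)$; the unique run is then not accepting, and $\trace(u\pi)\notin E$.

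The only delicate point is the bookkeeping of the initial state of $\dtmc \otimes \automaton$: its automaton component is already $\delta(\ainit,\labelFunction(\initialState))$, so the run $\rho$ is offset by one position relative to $\sigma$. This offset affects only finitely many positions and therefore leaves $\mathrm{inf}(\rho)$ unchanged.
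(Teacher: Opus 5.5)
Your proposal is correct and follows essentially the same route as the paper: view the projection of $u\pi$ as the unique run of the DRA on $\trace(u\pi)$, then evaluate the Rabin condition against $B\vert_\automaton$. Your explicit identification $\mathrm{inf}(\rho) = B\vert_\automaton$, together with your handling of the one-step offset from the product's initial state, makes precise what the paper's proof uses only implicitly, particularly in the rejecting case.
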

\begin{proof}[Proof Sketch.]
	If $B$ is accepting then there is a Rabin pair $(T_i, R_i)$ with $B \vert_\automaton \cap T_i = \emptyset$ and $B \vert_\automaton \cap R_i \neq \emptyset$. Because all states of $B$ are visited infinitely often along $\pi$, so is a state in $R_i$, while no state in $T_i$ is visited at all. Thus, $\trace(u\pi) \in \accLanguage(\automaton)= E$. If $B$ is rejecting then for every $(T_i, R_i)$ we have $T_i \cap B\vert_\automaton \neq \emptyset$, so $\inf(u\pi) \cap T_i \neq \emptyset$, or $R_i \cap  B \vert_\automaton= \emptyset$. Hence, $\trace(u\pi) \notin E$. 
\end{proof}

Let $\langle s, q \rangle$ be a state in a BSCC of $\dtmc \otimes \lfdfa$. We define $I_{\langle s, q \rangle} = \dtmc_s \times \lfdfa_q \times \pfdfa^q$ and say that a state $\langle s, q, f \rangle$ in a BSCC of $I_{\langle s, q \rangle}$ is \emph{accepting} if $f \in \afinal^{\pfdfa^q}$. Note that any accepting state of $I_{\langle s, q \rangle}$ has $s$ in its first and $q$ in its second component. W.l.o.g. we assume that accepting states in $I_{\langle s, q \rangle}$ are labeled with a fresh atomic proposition $\mathit{accept}$. A BSCC $B$ of $I_{\langle s,q \rangle}$ is \emph{rejecting} if it does not contain an accepting state. 
It turns out that the specific choice of $\langle s, q \rangle$ in a BSCC of $\dtmc \otimes \lfdfa$ has no influence on the existence of rejecting BSCCs in $I_{\langle s, q \rangle}$. 
\begin{restatable}{lemma}{LemSingleInnerProduct}\label{lem:single-inner-product-sufficient}
	Let $C$ be a BSCC of $\dtmc \otimes \lfdfa$ and $\langle s,q \rangle, \langle s', q' \rangle \in C$. If $I_{\langle s, q \rangle}$ has a rejecting BSCC then $I_{\langle s', q' \rangle}$ does as well. 
\end{restatable}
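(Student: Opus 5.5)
The plan is to go through the auxiliary DRA $\automaton$ with $\accLanguage(\automaton) = E$, and to show that the existence of a rejecting BSCC in $I_{\langle s,q \rangle}$ is equivalent to a property of $C$ that does not depend on the chosen state. The property is that some BSCC of $\dtmc \otimes \lfdfa \otimes \automaton$ reachable inside $C$ is rejecting (for $\automaton$). Since all states of $C$ reach one another, the set of BSCCs of the triple product reachable from $\langle s,q,p\rangle$ over $C$ is essentially the same for all choices (up to the choice of $p$). This suggests a direct argument: transfer a rejecting BSCC of $I_{\langle s,q \rangle}$ into a word violating $E$, and then use saturation to obtain a rejecting BSCC of $I_{\langle s',q' \rangle}$.

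\emph{Step 1: from a rejecting BSCC of $I_{\langle s,q \rangle}$ to a bad ultimately periodic word.} Let $D$ be a rejecting BSCC of $I_{\langle s,q \rangle}$. Since $C$ is a BSCC, $D$ contains some state $\langle s,q,f\rangle$, because $\langle s,q\rangle$ is reachable from every state of $C$ and $I$ projects onto $C$. Fix a finite path $u$ of $\dtmc\otimes\lfdfa$ from the initial state to $\langle s,q\rangle$. Inside $D$, consider any cycle $v$ from $\langle s,q,f\rangle$ back to a state with first two components $s,q$. The decomposition $(\trace(u),\trace(v))$, with the appropriate label-offset convention between $\otimes$ and $\times$, is normalized, since $v$ loops on $q$ in $\lfdfa$. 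It is rejected by $\pfdfa^q$, since $\pfdfa^q$ starting from its initial state reaches only non-final states in $D$. To get a violating word for $E$, I choose $v$ so that it also fixes a rejecting BSCC of $\dtmc\otimes\automaton$. Concretely, I take the product $I \times \automaton$, pass to a BSCC $D'$ of it above $D$, and pick $v$ as a cycle through all states of $D'$ that returns to the same $\automaton$-state. By saturation, the whole word $\trace(u)\trace(v)^\omega$ is then either in $E$ or not; since this normalized decomposition is rejected, it is not in $E$. Lemma~\ref{lem:accepting-and-rejecting-bscc-dra-product} then implies that the BSCC of $\dtmc\otimes\automaton$ visited is rejecting. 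Equivalently, saturation gives that every normalized decomposition of this path's trace is rejected.

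\emph{Step 2: transfer to $\langle s',q'\rangle$.} The cycle through $D'$ visits every state of $C$, because $D$ projects onto all of $C$ (it is a BSCC lying above the BSCC $C$). In particular it passes through $\langle s',q'\rangle$. Rotating the periodic part yields a decomposition $(u',v')$ of the same $\omega$-word with $\lfdfa(u')=\lfdfa(u'v')=q'$ and the DTMC at $s'$. By saturation, $v'$ is rejected by $\pfdfa^{q'}$, and so are all powers $v'^k$, since $(u',v'^k)$ is also a normalized decomposition of the same word. Now consider $I_{\langle s',q'\rangle}$ and walk it along $v'$ repeatedly: $\dtmc_{s'}\times\lfdfa_{q'}$ returns to $\langle s',q'\rangle$ each time. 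The sequence of $\pfdfa^{q'}$-states after $v'^k$ eventually cycles, so we reach some state $x$ with $x$ reachable from itself via $v'^m$, never accepting at the loop boundaries. This is not yet enough, so instead I repeat the whole argument using all paths. Any BSCC $D''$ of $I_{\langle s',q'\rangle}$ reachable along these iterates contains states $\langle s',q',g\rangle$, and if $g$ were final for some such state, the corresponding path would give an accepted normalized decomposition of a word in the same rejecting DRA-BSCC. That would contradict Lemma~\ref{lem:accepting-and-rejecting-bscc-dra-product} combined with saturation. To make this sound, I again take a BSCC of $I_{\langle s',q'\rangle}\times\automaton$ that is reachable from the point after $u'$ and lies over the same rejecting $\automaton$-BSCC, and argue that all its decompositions are rejected.

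\emph{Main obstacle.} The delicate part is ensuring that, when moving to $I_{\langle s',q'\rangle}$, the BSCC we land in still lives over a rejecting BSCC of $\dtmc\otimes\automaton$. Paths in $I_{\langle s',q'\rangle}$ may drift to different $\automaton$-BSCCs than the one fixed in Step~1. I would handle this by working throughout in the fourfold product $\dtmc\times\lfdfa\times\pfdfa\times\automaton$, starting the $\automaton$-component at the state reached after $u'$, and using that this $\automaton$-state lies in the fixed rejecting BSCC $R$ of $\dtmc\otimes\automaton$. Because $R$ is bottom, every continuation stays in $R$. A BSCC of the fourfold product projects onto all of $R$, so its cycles yield words not in $E$. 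By saturation, such cycles can never pass an accepting state $\langle s',q',g\rangle$ with $g\in\afinal^{\pfdfa^{q'}}$. Its projection to $I_{\langle s',q'\rangle}$ is then a rejecting BSCC.
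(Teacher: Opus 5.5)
Your detour through the DRA can work in principle. Step~1 is the contrapositive of \Cref{lem:acc-bscc-implies-good-bscc} at $\langle s,q\rangle$, and Step~2 is the contrapositive of \Cref{lem:good-bscc-implies-acc-bscc} at $\langle s',q'\rangle$. The rejecting $\automaton$-BSCC $R$ survives the move inside $C$ because it is bottom, so the drift you call the main obstacle is harmless, as you note yourself.

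The gap is in the step that carries all the weight, ``by saturation, such cycles can never pass an accepting state''. The claim is true, but your justification does not establish it. In a decomposition $(y,z)$ the progress automaton reads $z$ from its initial state and, in $yz^\omega$, restarts at every period boundary. In the fourfold product, by contrast, the $\pfdfa^{q'}$-component keeps running along a cycle. Concretely, let $x$ lead from the initial state of $G = I_{\langle s',q'\rangle}\times\automaton_a$ to a state $\tau=\langle s',q',g,a''\rangle\in D''$ with $g$ final, and let $c$ be a cycle at $\tau$ covering $D''$.
\begin{itemize}
\item The $\automaton$-run on $\trace(u'xc^\omega)$ sweeps $R$. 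But nothing tells you that any normalized decomposition of this word is accepted, since $\pfdfa^{q'}$ reads $c$ from $\ainit^{\pfdfa^{q'}}$, not from $g$.
\item Conversely, $(\trace(u'),\trace(xc))$ is accepted, but it decomposes the different word $u'(xc)^\omega$. From the second period on, $x$ is read from inside $D''$, so the $\automaton$-run need not return to the covering cycle or visit all of $R$, and \Cref{lem:accepting-and-rejecting-bscc-dra-product} cannot be applied.
\end{itemize}
Step~1 has the same conflation. The run of $\pfdfa^q$ from its initial state on a cycle that starts inside $D$ is not the run along $D$. If you instead prefix an entry path from $\langle s,q,\ainit^{\pfdfa^q}\rangle$ into $D$, repeating that period no longer keeps the $\automaton$-component on the covering cycle.

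What is missing is a synchronizing word. Since $(\trace(u'),\trace(x))$ is normalized and accepted, so is $(\trace(u'),\trace(x^i))$ for every $i\geq 1$, as it decomposes the same word. Hence every $G(x^i)$ has a final progress component. By pigeonhole in the deterministic product $G$ there is $v=x^m$ with $G(v)=G(v^2)=\sigma\in D''$. For a cycle $c$ at $\sigma$ covering $D''$, the decomposition $(\trace(u'),\trace(vc))$ is accepted, while the $\automaton$-run on $u'(vc)^\omega$ visits all of $R$ infinitely often. That is the contradiction you need. This is exactly the content of \Cref{lem:duo-normalized-word-reaches-bscc}, \Cref{cor:duo-normalized-word-into-bscc-of-full-product} and \Cref{lem:good-bscc-implies-acc-bscc}, and Step~1 needs an analogous construction.

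The paper avoids all of this with a direct argument. Assume $I_{\langle s',q'\rangle}$ has no rejecting BSCC, and take:
\begin{itemize}
\item $u$ reaching $\langle s',q'\rangle$;
\item $v$ from $\langle s',q'\rangle$ to $\langle s,q\rangle$;
\item $w$ from the initial state of $I_{\langle s,q\rangle}$ into its rejecting BSCC;
\item $x$ from the state of $I_{\langle s',q'\rangle}$ reached on $vw$ to an accepting state.
\end{itemize}
Then $(\trace(u),\trace(vwx))$ is accepted while its rotation $(\trace(uv),\trace(wxv))$ is rejected, contradicting saturation. Each period is read from the initial progress state of its own product, so no DRA, no covering cycles and no restart problem arise.
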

\begin{proof}[Proof Sketch.]
	Let $u, v, w$ be such that $u$ reaches $\langle s', q' \rangle$ in $\dtmc \otimes \lfufa$, $v$ takes $\dtmc \otimes \lfufa$ from $\langle s', q' \rangle$ to $\langle s, q \rangle$, and $w$ reaches a rejecting BSCC in $I_{\langle s, q \rangle}$. Assume that $I_{\langle s', q' \rangle}$ does not have a rejecting BSCC. Then there is an $x$ such that $vwx$ reaches an accepting state $\langle s', q', f \rangle$ in $I_{\langle s', q' \rangle}$, and so $(\trace(u), \trace(vwx)) \in \llbracket \fdfa \rrbracket$. However, $\lfdfa(\trace(uv)) = q = \delta^\lfdfa(q, \trace(wxv))$ and $\trace(wxv) \notin \accLanguage(\pfdfa^q)$, where the latter follows since a rejecting BSCC of $I_{\langle s, q \rangle}$ is reached upon $w$. Hence, $(\trace(uv), \trace(wxv)) \notin \llbracket \fdfa \rrbracket$, contradicting the saturation of $\fdfa$. 
\end{proof}

Let $B$ be a BSCC of $\dtmc \otimes \lfdfa$ and let $\langle s, q \rangle$ be an arbitrary state in $B$. We set $I_B = I_{\langle s, q \rangle}$ and say that $B$ is \emph{good} iff $I_B$ does not have a rejecting BSCC. By \Cref{lem:single-inner-product-sufficient}, the classification of $B$ as \emph{good} is independent of the choice of $\langle s, q \rangle$. Deciding if $B$ is good can be done efficiently.\unskip 
\begin{restatable}{lemma}{LemGraphAnalysisForGoodBad}\label{lem:good-or-bad-via-graph-analysis}
	Deciding if a BSCC $B$ of $\dtmc \otimes \lfdfa$ is good is possible in time linear in $\vert I_B \vert$. 
\end{restatable}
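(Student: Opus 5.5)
Deciding goodness reduces to a standard graph problem on the underlying directed graph of $I_B = \dtmc_s \times \lfdfa_q \times \pfdfa^q$ for the fixed representative $\langle s, q \rangle \in B$. By definition, $B$ is good iff $I_B$ has no BSCC without an accepting state. The existence of a rejecting BSCC depends only on the graph structure of $I_B$: which pairs of states have a positive-probability edge between them, and which states carry the label $\mathit{accept}$. The actual transition probabilities play no role. So the plan is to build this graph explicitly and run a linear-time SCC algorithm.

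\textbf{Step 1: build the reachable graph.} I would construct the reachable part of $I_B$ by a forward search from the initial state $\langle s, q, \ainit^{\pfdfa^q}\rangle$. States of the form $\langle t, p, f\rangle$ get an edge to $\langle t', \delta^\lfdfa(p,\labelFunction(t')), \delta^{\pfdfa^q}(f,\labelFunction(t'))\rangle$ for each $t' \in \mathrm{Succ}(t)$. Since $\lfdfa$ and $\pfdfa^q$ are deterministic (and total), every state has exactly $|\mathrm{Succ}(t)|$ successors, each computable in constant time. The number of edges is therefore at most $|\stateSpace|$ times the number of states, so we measure $|I_B|$ in terms of its graph including edges, as is standard. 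While generating states, each state $\langle t,p,f\rangle$ with $t=s$, $p=q$ and $f\in\afinal^{\pfdfa^q}$ is marked.

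\textbf{Step 2: find the BSCCs.} I would run Tarjan's algorithm to compute all SCCs in linear time, and then compute the condensation DAG. An SCC is bottom iff no edge leaves it, which can be checked with one pass over all edges. For each BSCC, one more pass over its members checks whether any member is marked accepting. By definition, $B$ is good iff every BSCC contains a marked state. All of these passes are linear.

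\textbf{Main obstacle.} The only subtle point is that "accepting" is defined for states lying in a BSCC of $I_B$. Marking all states with $f\in\afinal^{\pfdfa^q}$ and the right first two components is harmless, because we only inspect marks inside BSCCs. The linearity claim also relies on measuring $|I_B|$ by the size of its transition graph; if only $|\stateSpace|\cdot|\lfdfa|\cdot|\pfdfa^q|$ states are counted, the bound becomes linear up to the factor $|\stateSpace|$ for edges, which I would state explicitly. Independence from the choice of $\langle s,q\rangle$ is already provided by \Cref{lem:single-inner-product-sufficient}, so one representative suffices.
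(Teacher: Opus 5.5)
Your argument is correct, but it takes a more direct route than the paper. The paper first proves the equivalence ``$I_B$ has a rejecting BSCC iff $\probMeasure^{I_B}(\lozenge \mathit{accept}) < 1$''. One direction uses a positive-probability finite path into the rejecting BSCC that avoids $\mathit{accept}$. The other uses the fact that a DTMC almost surely reaches a BSCC and visits all its states infinitely often. The paper then cites the standard result that almost-sure reachability is decidable in linear time by graph analysis. You skip the probabilistic characterization and simply compute the SCC decomposition, identify the bottom SCCs, and scan them for accepting states. The paper's route phrases goodness as a standard qualitative query that existing model checkers support. Yours is self-contained and makes the algorithm explicit.

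There is a further point in your favour. The paper places the label $\mathit{accept}$ only on accepting states, which by definition lie in BSCCs, so its reduction implicitly needs the very BSCC decomposition you compute. Labeling every $\langle s,q,f\rangle$ with $f \in \afinal^{\pfdfa^q}$ would not work, since a path may pass through such a transient state before entering a rejecting BSCC. Your observation that marks outside BSCCs are harmless is exactly why your version needs no such care.

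Your caveat about the size measure is also well taken, and it applies equally to the paper's argument. The paper defines the size of a DTMC as its number of states. The cited linear bound, like yours, is linear in the size of the underlying graph including transitions.
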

\begin{proof}[Proof Sketch.]
	It holds that $I_B$ has a rejecting BSCC iff $\probMeasure^{I_B}(\lozenge \mathit{accept}) < 1$, which is decidable in time linear in $\vert I_B \vert$, see, e.g., \cite{PoMC}. 
\end{proof}

Recall that $\automaton$ is an auxiliary DRA with $\accLanguage(\automaton) = E = \accLanguage_\omega(\fdfa)$. 
The next lemma shows a direct connection between reaching good BSCCs of $\dtmc \otimes \lfdfa$ and accepting BSCCs of $\dtmc \otimes \automaton$.  
\begin{restatable}{lemma}{LemAccBSCCiffGoodBSCCFDFA}\label{lem:accepting-bscc-iff-good-bscc}
	Let $u\in \Paths^*(\dtmc)$ such that the lifting of $u$ to $\dtmc \otimes \automaton$ reaches a BSCC $B_\automaton$, and the lifting of $u$ to  $\dtmc \otimes \lfdfa$ reaches a BSCC $B_\lfdfa$. Then $B_\automaton$ is accepting iff $B_\lfdfa$ is good.
\end{restatable}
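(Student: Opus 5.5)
We prove the two directions by contraposition. In each, we build a single finite cycle in $\dtmc$ whose periodic trace can be evaluated both by $\automaton$ and by $\fdfa$, so saturation yields a contradiction.

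\emph{Setup.} Let $s$ be the last state of $u$, $p$ the state of $\automaton$ reached, and $q$ the state of $\lfdfa$ reached. Then $\langle s,p\rangle \in B_\automaton$ and $\langle s,q\rangle \in B_\lfdfa$. Consider the product $\dtmc\otimes\automaton\otimes\lfdfa$; from $\langle s,p,q\rangle$ it has a BSCC $D$. Projecting $D$ to $\dtmc\otimes\automaton$ and to $\dtmc\otimes\lfdfa$ gives subsets of $B_\automaton$ and $B_\lfdfa$. Each projection is in fact the whole BSCC: it is closed under successors and strongly connected in the projection, and a closed strongly connected subset of a BSCC is the BSCC itself. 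Pick $\langle s',p',q'\rangle\in D$ and let $u'$ be a finite path from $\langle s,p,q\rangle$ to it. Replacing $u$ by $uu'$ changes neither BSCC. By \Cref{lem:single-inner-product-sufficient}, goodness of $B_\lfdfa$ may be tested via $I_{\langle s',q'\rangle}$. So we may assume w.l.o.g.\ that $\langle s,p,q\rangle\in D$.

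\emph{Good $\Rightarrow$ accepting.} Suppose $B_\lfdfa$ is good and $B_\automaton$ is rejecting.
\begin{enumerate}
\item Choose a finite cycle $v$ in $D$ starting and ending at $\langle s,p,q\rangle$ that visits every state of $D$. Its projection to $\dtmc\otimes\automaton$ then visits every state of $B_\automaton$.
\item Since $I_{\langle s,q\rangle}$ has no rejecting BSCC, from every state of $I_{\langle s,q\rangle}$ an accepting state $\langle s,q,f\rangle$ with $f\in\afinal^{\pfdfa^q}$ is reachable. Start at $\langle s,q,\pfdfa^q(\trace(v))\rangle$ and extend by a path $x$ to such a state. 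The paths of $I_{\langle s,q\rangle}$ project to paths of $\dtmc$ from $s$ returning to $s$ with $\lfdfa_q$ returning to $q$. The path $x$ may leave $D$'s $\automaton$-component, but it still lies in $\dtmc\otimes\lfdfa$ within $B_\lfdfa$. To keep control of $\automaton$, append a path back to $\langle s,p,q\rangle$ in $D$ (possible since $D$ is a BSCC), then a copy of $v$, then close again through $I$.
\end{enumerate}

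Step 2 is the delicate part. The clean way is iterative. Let $y=v$. While $\pfdfa^q(\trace(y))\notin\afinal$, extend $y$ by a path inside $D$ back to $\langle s,p,q\rangle$ that makes progress toward acceptance. This works because every path of $\dtmc$ from $s$ lifts uniquely into $D$: $D$ is closed, so the $\dtmc$-path together with its deterministic $\lfdfa$ and $\automaton$ runs stays in $D$. Hence any path $x$ in $I_{\langle s,q\rangle}$ to an accepting state already lives in $D$ and ends at some $\langle s,p'',q\rangle\in D$. Then append a path in $D$ from $\langle s,p'',q\rangle$ to $\langle s,p,q\rangle$. The result may leave $\pfdfa^q$ in a non-final state, so apply the no-rejecting-BSCC property again. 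To guarantee termination, use $\automaton$-cycles from $\langle s,p,q\rangle$ in $D$ traversed repeatedly: the set of $(p,f)$ pairs is finite, so pumping gives a cycle $z$ at $\langle s,p,q\rangle$ with $\pfdfa^q(\trace(z))\in\afinal$ and containing $v$. I expect this bookkeeping to be the main obstacle. An alternative is to argue in the finite product $D\times\pfdfa^q$, restricted to states with first components $s,p,q$, reaching accepting $f$ from the initial state after covering $v$.

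With such a $z$, the pair $(\trace(u),\trace(z))$ is normalized ($\lfdfa$ returns to $q$) and accepted, so $\trace(u)\trace(z)^\omega\in\accLanguage(\fdfa)=\UP(E)$. On the other hand, the path $u z^\omega$ stays in $B_\automaton$ and visits all its states infinitely often, so by \Cref{lem:accepting-and-rejecting-bscc-dra-product} its trace is not in $E$. This is a contradiction.

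\emph{Accepting $\Rightarrow$ good.} Suppose $B_\automaton$ is accepting and $B_\lfdfa$ is not good. Then $I_{\langle s,q\rangle}$ has a rejecting BSCC reached via some $w$. Every return of that BSCC to first and second components $(s,q)$ is non-accepting, and since it is a BSCC of a product with $\dtmc_s$ and $s$ is recurrent in $B_\lfdfa$, such returns exist. By the same lifting argument, paths of $I$ live in $D$. Choose a cycle $z$ at $\langle s,p,q\rangle$ in $D$ that starts with $w$, then covers all of $D$ (which keeps $\pfdfa^q$ within the rejecting BSCC), then returns to $\langle s,p,q\rangle$. Iterate, using pigeonhole on $\automaton$-states as above, so that the endpoint has $\automaton$-state $p$. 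Then $\trace(z)\notin\accLanguage(\pfdfa^q)$ and $(\trace(u),\trace(z))$ is normalized. By saturation, $\trace(u)\trace(z)^\omega\notin\accLanguage(\fdfa)$. But \Cref{lem:accepting-and-rejecting-bscc-dra-product} puts this word in $E$, a contradiction.
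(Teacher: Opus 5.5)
Your direction ``accepting $\Rightarrow$ good'' is correct, and it is simpler than the paper's. Since your $z$ is a cycle at $\langle s,p,q\rangle$ inside the BSCC $D$ of $\dtmc\otimes\automaton\otimes\lfdfa$, it returns $\automaton$ and $\lfdfa$ to $p$ and $q$ at the same time. This removes the paper's sequences $w_i, r_i$ and its pigeonhole synchronization; the pigeonhole step you mention there is not needed either.

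The gap is in ``good $\Rightarrow$ accepting''. That direction rests entirely on a cycle $z$ at the \emph{fixed} base point $\langle s,p,q\rangle$ that contains $v$ and satisfies $\pfdfa^q(\trace(z))\in\afinal^{\pfdfa^q}$, and you never establish that such a $z$ exists.

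Goodness of $I_{\langle s,q\rangle}$ only says that from every state some accepting $\langle s,q,f\rangle$ is reachable. Lifted into $D\times\pfdfa^q$, this lands in $\langle s,p'',q,f\rangle$ for whatever $\automaton$-state $p''$ you happen to be in. Walking back to $p$ can then destroy finality again. Pigeonhole on pairs $(p',r)$ only closes a cycle in $D\times\pfdfa^q$. It does not force a pair $(p,f)$ with $f$ final to occur, so your iteration need not terminate. The ``restricted product'' alternative you mention is the same claim restated.

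In fact, under your standing assumption that $B_\automaton$ is rejecting, such a $z$ cannot exist. The path $uz^\omega$ covers $B_\automaton$, so its trace is not in $E$, so $(\trace(u),\trace(z))$ cannot be in $\llbracket\fdfa\rrbracket$. Producing $z$ from goodness is therefore the whole content of this direction, not bookkeeping.

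The missing idea is to stop requiring $\automaton$ to return to $p$. There are two ways to do this.

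\emph{Simpler route.} Choose $v$ as a path of $\dtmc$ from $s$ whose lifting from \emph{every} $\langle s,a\rangle\in B_\automaton$ visits all states of $B_\automaton$. Such a $v$ is built by concatenation, as in the paper's proof of \Cref{lem:characterization-bscc-dfufa}. Goodness then gives an $x$ that takes the $I_{\langle s,q\rangle}$-run on $v$ to an accepting state $\langle s,q,f\rangle$. So $vx$ closes a loop on $\langle s,q\rangle$, and $(\trace(u),\trace(vx))\in\llbracket\fdfa\rrbracket$. On the other hand, every block $vx$ of $u(vx)^\omega$ starts in some $\langle s,a\rangle\in B_\automaton$, hence covers $B_\automaton$. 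This contradicts \Cref{lem:accepting-and-rejecting-bscc-dra-product}.

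\emph{The paper's route.} It moves the base point instead. \Cref{lem:bscc-in-full-product} finds a BSCC of $I_B\times\automaton_p$ containing some $\langle s,q,f,a\rangle$ with $f$ final. \Cref{cor:duo-normalized-word-into-bscc-of-full-product} uses an idempotent power, via power-stability of saturated FDFA, to get a $v$ that reaches that state and loops on it. The contradiction is then drawn for the decomposition $(uv,wy)$, whose base $\automaton$-state is $a$ rather than $p$.
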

\begin{proof}[Proof Sketch.]
	Let $B_\automaton$ be accepting and assume that $I_{B_\lfdfa} = I_{\langle s, q \rangle}$ has a rejecting BSCC. Then there is a finite cycle $v$ in $B_\automaton$ visiting all states of $B_\automaton$, while if followed in $I_{B_\lfdfa}$ ends in a state $\langle s, q, r \rangle$ in the rejecting BSCC. Hence, $uv^\omega \in \accLanguage(\automaton) =  E$ by \Cref{lem:accepting-and-rejecting-bscc-dra-product}, but $uv^\omega \notin \accLanguage_\omega(\fdfa) = E$, a contradiction. 
	The reverse direction can be shown similarly. 
\end{proof}

The precondition of \Cref{lem:accepting-bscc-iff-good-bscc} requires that the finite path $u$ is such that its lifting to both $\dtmc \otimes \lfdfa$ and $\dtmc \otimes \automaton$ reaches a BSCC. However, the final model checking algorithm does not have access to the DRA $\automaton$, and hence has no information about the runs of $\dtmc \otimes \automaton$ on $u$. It turns out that this is not a problem, since after reaching a BSCC in $\dtmc \otimes \lfdfa$ on $u$ either all BSCCs of $\dtmc \otimes \automaton$ reachable after $u$ are accepting, or all of them are rejecting.  

This is the case because of the following technical lemma, which states that all finite words reaching the same state in the leading automaton $\lfdfa$ of an FDFA $\fdfa$ for an $\omega$-regular property $E$ have the same \emph{residual languages} w.r.t. $E$. For a given $x \in \Sigma^*$, the residual language of $x$ w.r.t. $E$ is $\resLang_x = \{z \in \Sigma^\omega \mid xz \in E\}$.  \begin{restatable}{lemma}{LemResidualLanguages}\label{lem:residual-languages-coincide}
	Let $x, y \in \Sigma^*$ and let $\lfdfa$ be the leading automaton of an FDFA $\fdfa$ for $E$. If $\lfdfa(x) = \lfdfa(y)$ then $\resLang_x = \resLang_y$.
\end{restatable}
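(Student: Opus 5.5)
The plan is to reduce the claim to ultimately periodic words and then transfer accepted decompositions between $x$ and $y$ using determinism of $\lfdfa$ and saturation of $\fdfa$. We take $\fdfa$ to be saturated with $\accLanguage_\omega(\fdfa) = E$, as throughout this section.

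\textbf{Step 1: reduction to ultimately periodic words.} Since $E$ is $\omega$-regular, both residual languages $\resLang_x$ and $\resLang_y$ are $\omega$-regular. By the characterization recalled in the preliminaries (\cite{UPWROL}), it therefore suffices to show $\UP(\resLang_x) = \UP(\resLang_y)$. This also covers the case where one of the residuals is empty. By symmetry it is enough to prove one inclusion: if $z = uv^\omega$ with $(u,v) \in \Sigma^{*+}$ and $xuv^\omega \in E$, then $yuv^\omega \in E$.

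\textbf{Step 2: choosing a common normalized decomposition.} Because $\lfdfa$ is deterministic and total and $\lfdfa(x) = \lfdfa(y)$, we get $\lfdfa(xw) = \lfdfa(yw)$ for every finite word $w$. The sequence $\lfdfa(xuv^i)$, $i \in \mathbb{N}$, takes values in the finite set $\lfdfa$. By pigeonhole there are $k \geq 0$ and $m \geq 1$ with $\lfdfa(xuv^k) = \lfdfa(xuv^{k+m})$. Let $q$ denote this state. Then $(xuv^k, v^m)$ is a normalized decomposition of $xuv^\omega$. By the observation above, $(yuv^k, v^m)$ is a normalized decomposition of $yuv^\omega$ whose prefix $yuv^k$ also reaches $q$.

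\textbf{Step 3: transferring acceptance.} Since $xuv^\omega \in E$ is ultimately periodic and $\accLanguage(\fdfa) = \UP(E)$, some decomposition of $xuv^\omega$ is accepted by $\fdfa$. Saturation then implies that every normalized decomposition of $xuv^\omega$ is accepted, in particular $(xuv^k, v^m)$, so $v^m \in \accLanguage(\pfdfa^q)$. Acceptance of a normalized decomposition depends only on the state $q$ reached by its prefix and on its period. Hence $(yuv^k, v^m) \in \llbracket \fdfa \rrbracket$, so $yuv^\omega \in \accLanguage(\fdfa) \subseteq E$, which gives $uv^\omega \in \resLang_y$.

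The only delicate point is Step 2. Normalization must hold for both prefixes simultaneously, and saturation must be invoked for the word on the $x$-side. This works because the pigeonhole choice is made along $x$, and determinism of $\lfdfa$ transfers it verbatim to $y$. Once that is settled, Steps 1 and 3 are routine applications of saturation and of the characterization of $\omega$-regular languages by their ultimately periodic words.
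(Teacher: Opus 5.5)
Your proof is correct and follows essentially the same route as the paper. Both arguments reduce to ultimately periodic words via the Calbrix--Nivat--Podelski characterization, normalize $(xu,v)$ to $(xuv^k,v^m)$ by pigeonhole, invoke saturation of $\fdfa$ to get acceptance, and then use determinism of $\lfdfa$ to move the accepted decomposition to the $y$-side. The only difference is that the paper makes this transfer through the shifted family $\fdfa_p$ (with leading automaton $\lfdfa_p$, including a proof that $\fdfa_p$ is saturated), whereas you transfer $(xuv^k,v^m)$ to $(yuv^k,v^m)$ directly and so avoid that detour.
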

\begin{proof}[Proof Sketch.]
	Let $Q(x) = Q(y)$.	Then $xuv^\omega \in E$ iff $yuv^\omega \in E$, and so $\UP(\resLang_x) = \UP(\resLang_y)$. Because $\resLang_x$ and $\resLang_y$ are $\omega$-regular, this implies  $\resLang_x = \resLang_y$ \cite{UPWROL}. 
\end{proof}

Let $u \in \Paths^*(\dtmc)$ be such that the lifting of $u$ to $\dtmc \otimes \lfdfa$ reaches a state $\langle s, q \rangle$ in a BSCC $B_\lfdfa$. Because $\dtmc \otimes \automaton$ is a DTMC, almost surely a finite extension $v$ of $u$ is generated by $\dtmc$ such that the lifting of $uv$ to $\dtmc \otimes \automaton$ ends in a BSCC $B_\automaton$. As $\dtmc \otimes \lfdfa$ is in $\langle s, q \rangle \in B_\lfdfa$ after $u$, almost surely $\dtmc$ generates an extension $w$ of $uv$ such that the lifting of $uvw$ to $\dtmc \otimes \lfdfa$ also ends in $\langle s, q \rangle$. But then $\resLang_u = \resLang_{uvw}$ by \Cref{lem:residual-languages-coincide}, and so either almost surely a path with prefix $u$ is in $E$ (if $B_\automaton$ is accepting), or violates $E$ (if $B_\automaton$ is rejecting). 

Hence, after entering a BSCC $B_\lfdfa$ of $\dtmc \otimes \lfdfa$ the probability of $\dtmc$ to generate a path in $E$ equals $1$ (if $B_\lfdfa$ is good), or it equals $0$. 
Let $\mathit{accBSCC}$ contain all accepting BSCCs of $\dtmc \otimes \automaton$, and $\mathit{goodBSCC}$ contain all good BSCCs of $\dtmc \otimes \lfufa$.
Because $\probMeasure^\dtmc(E) = \probMeasure^{\dtmc \otimes \automaton}(\lozenge \mathit{accBSCC})$~\cite{PoMC}, \Cref{lem:accepting-bscc-iff-good-bscc} implies the following result. 
\begin{restatable}{proposition}{PropComputationOfProbabilites}\label{prop:fdfa-mc-preservation-of-probabilities}
	$\probMeasure^{\dtmc}(E) = \probMeasure^{\dtmc \otimes \lfdfa} (\lozenge \mathit{goodBSCC})$.
\end{restatable}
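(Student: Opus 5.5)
We prove the equality by working in a common refinement of the two products and transferring probabilities. We use the auxiliary DRA $\automaton$ with $\accLanguage(\automaton)=E$, the standard identity $\probMeasure^\dtmc(E)=\probMeasure^{\dtmc\otimes\automaton}(\lozenge \mathit{accBSCC})$ \cite{PoMC}, and \Cref{lem:accepting-bscc-iff-good-bscc}.

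Since both $\lfdfa$ and $\automaton$ are deterministic and total, every path $\pi\in\Paths(\dtmc)$ has unique liftings $\pi_\lfdfa$ to $\dtmc\otimes\lfdfa$ and $\pi_\automaton$ to $\dtmc\otimes\automaton$. These liftings preserve the measure of cylinder sets. We may therefore regard the events ``$\pi_\automaton$ reaches an accepting BSCC'' and ``$\pi_\lfdfa$ reaches a good BSCC'' as measurable subsets of $\Paths(\dtmc)$. Their probabilities are $\probMeasure^{\dtmc\otimes\automaton}(\lozenge\mathit{accBSCC})$ and $\probMeasure^{\dtmc\otimes\lfdfa}(\lozenge\mathit{goodBSCC})$, respectively. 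It thus suffices to show that these two events differ only by a null set.

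To see this, consider the product $\dtmc\otimes\lfdfa\otimes\automaton$, which is again a finite DTMC. By the standard fact that almost every path of a finite DTMC eventually enters a BSCC \cite{PoMC}, almost every $\pi$ has a finite prefix $u$ after which its lifting sits in a BSCC of this triple product. The projections of a BSCC of the triple product to $\dtmc\otimes\lfdfa$ and to $\dtmc\otimes\automaton$ are contained in BSCCs there. Hence after $u$ both liftings $\pi_\lfdfa$ and $\pi_\automaton$ have reached BSCCs, say $B_\lfdfa$ and $B_\automaton$. The prefix $u$ then satisfies the precondition of \Cref{lem:accepting-bscc-iff-good-bscc}, so $B_\automaton$ is accepting iff $B_\lfdfa$ is good.

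Moreover, a path that has entered a BSCC never leaves it. Consequently, for such $\pi$ the statements ``$\pi_\automaton$ eventually reaches an accepting BSCC'' and ``$\pi_\lfdfa$ eventually reaches a good BSCC'' are both equivalent to the respective classification of the BSCC entered after $u$. The two events therefore agree on a set of measure $1$, which yields
\[\probMeasure^\dtmc(E)=\probMeasure^{\dtmc\otimes\automaton}(\lozenge\mathit{accBSCC})=\probMeasure^{\dtmc\otimes\lfdfa}(\lozenge\mathit{goodBSCC}).\]

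The main technical point is the step producing a common prefix $u$ for which both liftings are simultaneously in BSCCs. The triple product handles this cleanly, and it also shows that the discussion via \Cref{lem:residual-languages-coincide} preceding the proposition is not strictly needed for this equality. One minor detail remains: the paper's product $\otimes$ starts in $\langle\initialState,\delta(\ainit,\labelFunction(\initialState))\rangle$, so the triple product must be set up with the same convention for both automata. Then the liftings of $u$ in the triple product coincide componentwise with those in the two-component products.
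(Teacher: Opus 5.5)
Your proof is correct and follows the paper's own argument: almost surely there is a common prefix after which both liftings sit in BSCCs, \Cref{lem:accepting-bscc-iff-good-bscc} then identifies good BSCCs of $\dtmc\otimes\lfdfa$ with accepting BSCCs of $\dtmc\otimes\automaton$, and the standard identity $\probMeasure^\dtmc(E)=\probMeasure^{\dtmc\otimes\automaton}(\lozenge\mathit{accBSCC})$ finishes the proof. Where the paper cites \Cref{lem:residual-languages-coincide} to obtain the common prefix, your triple-product argument is sound, and you are right that the residual-language lemma is not needed for this step; simply intersecting the two almost-sure events ``$\pi_\lfdfa$ enters a BSCC'' and ``$\pi_\automaton$ enters a BSCC'' would already suffice.
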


The following algorithm efficiently computes the satisfaction probability of a DTMC $\dtmc$ against an $\omega$-regular property $E$ represented by a saturated FDFA $\fdfa = (\lfdfa, \{\pfdfa^q\}_{q \in \lfdfa})$: \vspace{0.1cm}
\begin{enumerate}
	\item Compute the set $\mathcal{B}$ of BSCCs of $\dtmc \otimes \lfdfa$. 
	\item For every $B \in \mathcal{B}$ pick an arbitrary $\langle s, q \rangle \in B$ and check if $I_B = I_{\langle s, q \rangle} = \dtmc_s \times \lfdfa_q \times \pfdfa^q$ contains a rejecting BSCC. If this is not the case, add $B$ to the set $\mathit{goodBSCC}$. 
	\item Compute $\probMeasure^{\dtmc}(E) = \probMeasure^{\dtmc \otimes \lfdfa}(\lozenge \mathit{goodBSCC})$.
\end{enumerate} \vspace{0.1cm}
The correctness of the algorithm follows from \Cref{prop:fdfa-mc-preservation-of-probabilities}.
Regarding its complexity, the procedure boils down to obtaining the set of good BSCCs of $\dtmc \otimes \lfdfa$ and a subsequent computation of reachability probabilities in $\dtmc \otimes \lfdfa$. Deciding if a BSCC $B$ of $\dtmc \otimes \lfdfa$ is good can be done in time linear in the size of $I_B$ (\Cref{lem:good-or-bad-via-graph-analysis}), while the reachability probability of $\mathit{goodBSCC}$ in $\dtmc \otimes \lfdfa$ can be computed in time polynomial in $\vert \dtmc \otimes \lfdfa \vert = \vert \dtmc \vert \cdot \vert \lfdfa \vert$ by solving a system of linear equations \cite{PoMC}. Hence, the overall runtime of the algorithm is polynomial in the sizes of $\dtmc$ and $\fdfa$. This finishes the proof of \Cref{thm:poly-time-mc-for-fdfa}.

\smallskip 

To conclude this section we discuss how to obtain an FDFA for a given $\omega$-regular property. In the literature, FDFA are mainly applied in learning algorithms for $\omega$-regular languages \cite{LROL,NLABAFDFACT,NFFARLORL,SPFA}. These algorithms naturally yield a way to compute FDFA representations for a given $\omega$-regular specification. An orthogonal approach is the direct construction of saturated FDFA from, e.g., LTL formulas. The only direct translation from LTL to FDFA that we are aware of is presented in \cite{BALi}. It takes inspiration from the works of Esparza et al. \cite{LTLLDBADPA,UTLTLOA} on the construction of $\omega$-automata from LTL formulas and, in the worst case, incurs a double-exponential blow-up. 
Unfortunately, this blow-up is inevitable. 
\begin{restatable}[\cite{DFpersCom,CLTLFE}]{proposition}{LemDoubleExponentialBlowUp}\label{prop:double-exponential-blow-up-ltl-to-fdfa}
	There is a family of LTL-formulas $\varphi_n$ of length polynomial in $n$ such that any saturated FDFA for $\varphi_n$ has a leading automaton of size at least $2^{2^n}$.
\end{restatable}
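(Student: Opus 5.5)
The plan is to combine Lemma~\ref{lem:residual-languages-coincide} with a family of LTL formulas whose languages have doubly-exponentially many distinct residual languages. By Lemma~\ref{lem:residual-languages-coincide}, if two finite words $x,y$ have $\resLang_x \neq \resLang_y$, then $\lfdfa(x) \neq \lfdfa(y)$ in the leading automaton of \emph{any} saturated FDFA for $E$. The number of pairwise distinct residual languages of $E$ is therefore a lower bound on $\vert \lfdfa \vert$. It thus suffices to exhibit formulas $\varphi_n$ of size polynomial in $n$ such that $\Words(\varphi_n)$ has at least $2^{2^n}$ distinct residuals.

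For this I will use the classical construction underlying the double-exponential lower bound for LTL-to-deterministic-automata translations (as in \cite{DFpersCom,CLTLFE}). Work over propositions encoding $n$ address bits, one data bit, and a separator $\#$. A finite prefix is read as a sequence of blocks, and each block consists of an $n$-bit address $i \in \{0,\ldots,2^n-1\}$ together with a data bit. Any subset $S \subseteq \{0,1\}^n$ can then be written by a prefix $x_S$ that lists the addresses in $S$ with data bit $1$ and the others with data bit $0$.

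The formula $\varphi_n$ states: ``after the first $\#$, an address $a$ is given, and at some earlier block the address $a$ carried data bit $1$'' (wrapped in a suitable $\lozenge$/$\mathsf{U}$ structure). Equality of the $n$-bit addresses is expressed by a conjunction over the $n$ bit positions of formulas stating that bit $j$ agrees, using $\bigcirc^{j}$ offsets inside the fixed-length blocks. This keeps the size of $\varphi_n$ polynomial in $n$.

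For distinct sets $S \neq S'$, choose $a \in S \triangle S'$. The suffix $z = \# a \emptyset^\omega$ then lies in exactly one of $\resLang_{x_S}$ and $\resLang_{x_{S'}}$, so all $2^{2^n}$ residuals are distinct, and the bound follows.

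The main obstacle is writing $\varphi_n$ correctly in polynomial size. The difficulty is ensuring that the block structure is enforced (or harmlessly ignored) and that the address comparison refers only to blocks before the first $\#$. My first approach is to use a block-start marker proposition and the pattern $\lozenge(\mathit{start} \land \mathit{data} \land \bigwedge_{j}(\bigcirc^j b \leftrightarrow \lozenge(\#\land\bigcirc^{j} b)))$. Here $\lozenge(\# \land \ldots)$ refers to the unique $\#$, which I make unique by only considering suffixes containing a single $\#$. Since $\bigcirc^j$ with $j \le n$ has linear size, the whole formula has size $O(n^2)$.
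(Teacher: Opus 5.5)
Your proof is correct. It is built on the same combinatorial idea as the paper's (a prefix encodes an arbitrary subset of $\{0,1\}^n$, and a suffix queries one element), but it reaches the bound by a somewhat different route.

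The paper takes the Kupferman--Vardi language $L_n=\{x\#w\#y\$w\#^\omega\}$ with its quadratic-size formula from \cite{FLTBT}. It imports the $2^{2^n}$ lower bound for DFA recognizing the finite-word language $L_n'$, in the style of \cite{Alternation}. It then argues informally that the leading automaton must ``recognize'' $L_n'$, because the non-periodic part of every accepted decomposition corresponds to a word in $L_n'$.

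You instead design your own address/data-bit encoding and formula. You transfer the bound through \Cref{lem:residual-languages-coincide}: since $\resLang_{x_S}\neq\resLang_{x_{S'}}$ for $S\neq S'$, the states $\lfdfa(x_S)$ must be pairwise distinct in any saturated FDFA.

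What your route buys:
\begin{itemize}
\item It gives rigor at exactly the step the paper leaves informal. A leading automaton has no final states, so ``recognizing $L_n'$'' has to be made precise by a Myhill--Nerode-type argument of this kind anyway.
\item Your formula only has to behave correctly on the test words $x_S\#a\emptyset^\omega$. So, as you correctly observe, you never need to enforce the block structure or the uniqueness of $\#$ globally.
\end{itemize}
The paper's route, in turn, avoids writing and verifying a formula by reusing off-the-shelf results.

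When you write your version out, fix the offsets consistently. For example, put $\mathit{start}$ and $\mathit{data}$ at block position $0$, and put the address bits at offsets $1,\dots,n$ both inside blocks and after $\#$, so that the conjunction is $\bigwedge_{j=1}^{n}$. Also note that $\leftrightarrow$ is not nested, so the size is indeed $O(n^2)$.
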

\begin{proof}[Proof Sketch.]
	Let $L_n' = \{x \# w \# y \$ w \mid x, y \in \{0, 1, \#\}^*, w \in \{0,1\}^n\}$ and $L_n = L_n' \cdot \#^\omega$.  
	For each $n$ there is an LTL formula $\varphi_n$ of length quadratic in $n$ with $\Words(\varphi_n) = L_n$ \cite{FLTBT}. The leading automaton of any saturated FDFA for $L_n$ must ``recognize'' if a prefix is in $L_n'$. But this requires a DFA with at least $2^{2^n}$ states \cite{Alternation}.
\end{proof}

Nevertheless, it was shown in \cite{FDFAAORL} that saturated FDFA can be exponentially more succinct than deterministic Rabin, Streett or parity automata, which are types of $\omega$-automata often used in probabilistic verification \cite{PoMC}. The probabilistic model checking procedure outlined in this section may therefore still reduce the total verification time significantly. 
\section{Families of Unambiguous Finite Automata}\label{sec:fufa}

It is well-known that an NFA or a UFA for a given regular language $L$ may require exponentially fewer states than a DFA for $L$ \cite{SDCFRFL}. Hence, allowing nondeterminism in the leading automaton and/or the progress automata of an FDFA might yield families of finite automata of significantly reduced size. Based on this observation, we define the novel notion of \emph{families of unambiguous finite automata} (\emph{FUFA}), which have leading NFA and progress UFA. 
\begin{definition}\label{def:fufa}
	A \emph{family of unambiguous finite automata} \emph{(FUFA)} is a tuple $\fufa = (\lfufa, \{\pfufa^q\}_{q \in \mstates})$ with $\mstates \subseteq \lfufa$, where $\lfufa$ is a leading NFA and $\pfufa^q$, $q \in \mstates$, are progress UFA. 
\end{definition}

Let $\fufa = (\lfufa, \{\pfufa^q\}_{q \in\mstates})$ be an FUFA. In contrast to FDFA, it is not required in an FUFA that every state $q \in \lfufa$ has a corresponding progress automaton $\pfufa^q$, but only those $q$ contained in $\mstates \subseteq \lfdfa$.  
As every DFA is both a UFA and an NFA, all FDFA can be interpreted as FUFA in which $A$ contains all states of $\lfufa$. On the other hand, not every FUFA is an FDFA, since the leading automaton and the progress automata of an FUFA can contain nondeterminism. 

We extend the notion of (normalized) acceptance of decompositions $(u,v) \in \Sigma^{*+}$ from FDFA to FUFA. The main difference between the acceptance of $(u,v)$ in an FDFA and an FUFA is that in the latter, acceptance of $v$ by a progress automaton can only be defined if a state in $\mstates$ is reachable on $u$, i.e., if $\mstates_u \coloneqq \lfufa(u) \cap \mstates$ is nonempty. Moreover, the leading automaton of an FUFA can be nondeterministic, so $\lfufa(u)$ may contain multiple states and it is  possible that $(u,v)$ is normalized only w.r.t. some of them. Hence, accepted normalized decompositions $(u,v)$ are required to satisfy the normalization condition w.r.t. at least one $q \in \mstates_u$ for which $v \in \accLanguage(\pfufa^q)$. These observations are combined in the following definition. \unskip
\begin{definition}\label{def:normalized-acceptance}
	Let $\fufa = (\lfufa, \{\pfufa^q\}_{q \in \mstates})$ be an FUFA. A decomposition $(u,v) \in \Sigma^{*+}$ is \emph{$q$-normalized} if $q \in \mstates_u$ and $q \in \delta^{\lfufa}(q, v)$, and $(u,v)$ is \emph{normalized} if it is $q$-normalized w.r.t. some $q \in \mstates$. Moreover, $\fufa$ \emph{accepts} the decomposition $(u,v)$ if there is a $q \in \mstates$ such that $(u,v)$ is $q$-normalized and $v \in \accLanguage(\pfufa^{q})$. 
	The sets $\llbracket \fufa \rrbracket$, $\accLanguage(\fufa)$ and $\accLanguage_{\omega}(\fufa)$ are defined as for FDFA. 
\end{definition}

Similar to FDFA, we also require FUFA to be \emph{saturated}. While in a saturated FDFA either every or no normalized decomposition of a word is accepted, in a saturated FUFA we again have to restrict the definition to those decompositions $(u,v) \in \Sigma^{*+}$ with $\mstates_u \neq \emptyset$. 
\begin{definition}\label{def:saturated}
	An FUFA $\fufa$ is saturated if for every ultimately periodic word $w$ either every normalized decomposition of $w$ is accepted by $\fufa$, or none of them is. 
\end{definition}

Let $\fufa = (\lfufa, \{\pfufa^q\}_{q \in \mstates})$ be a saturated FUFA, $w \in \Sigma^\omega$ an ultimately periodic word, and $(u,v)$ a normalized decomposition of $w$. The central idea behind FUFA is to obtain an \emph{unambiguous extension} of FDFA. The nondeterminism of $\lfufa$ and $\pfufa^q$, $q \in \mstates$, can, however, cause some undesired behavior when viewing FUFA under this premise. For instance, it is possible that there are $q_1, \ldots, q_n \in \mstates_u$ such that $(u,v)$ is $q_i$-normalized and $v \in \accLanguage(\pfufa^{q_i})$ for all $1 \leq i \leq n$. This is not in the spirit of unambiguous automata, which have at most one accepting run for every word. Adjusted to FUFA this means that for every decomposition $(u,v)$ of $w$ there should be a \emph{unique} $q \in \mstates_u$ for which $(u,v)$ is $q$-normalized and $v \in \accLanguage(\pfufa^q)$. As another example, a saturated FDFA $\fdfa = (\lfdfa_\fdfa, \{\pfdfa_\fdfa^q\}_{q \in \lfdfa_\fdfa})$ satisfies that if $(u,v) \in \llbracket \fdfa \rrbracket$ then also $(u, v^n) \in \llbracket \fdfa \rrbracket$ for all $n$ \cite{SPFA}. Hence, for $q = \lfdfa_\fdfa(u)$ it holds that $v^n \in \accLanguage(\pfdfa_\fdfa^q)$ for all $n \geq 1$. In saturated FUFA, however, decompositions $(u,v^i)$ and $(u,v^j)$ for $i \neq j$ can be normalized w.r.t. different $q_i$ and $q_j$ in $\mstates_u$, and accepted via different progress automata $\pfufa^{q_i}$ and $\pfufa^{q_j}$. This distinguishes saturated FUFA significantly from saturated FDFA. 

To bring saturated FUFA closer to saturated FDFA and unambiguous automata, we require saturated FUFA to be \emph{decomposition-unambiguous} and \emph{power-unambiguous}.\unskip
\begin{definition}\label{def:properties-fufa}
	Let $\fufa = (\lfufa, \{\pfufa^q\}_{q \in \mstates})$ be a saturated FUFA. 
	\begin{enumerate}
		\item $\fufa$ is \emph{decomposition-unambiguous} if for every $(u,v) \in \Sigma^{*+}$ there is at most one $q \in \mstates_u$ such that $(u,v)$ is $q$-normalized and $v \in \accLanguage(\pfufa^q)$.
		\item $\fufa$ is \emph{power-unambiguous} if for all $q$-normalized decompositions $(u,v) \in \llbracket \fufa \rrbracket$ with $v \in \accLanguage(\pfufa^q)$ and every $f \in \pfufa^q(v) \cap \afinal^{\pfufa^q}$ it holds that $\afinal^{\pfufa^q} \cap \delta^{\pfufa^q}(f, v^n) \neq \emptyset$ for all $n \geq 1$. 
	\end{enumerate}
\end{definition}

Decomposition- and power-unambiguity imply several useful properties that are guaranteed in FUFA $\fufa$ that satisfy (one of) these definitions. 
For instance, decomposition-unambiguity of $\fufa$ yields that for every $(u,v) \in \llbracket \fufa \rrbracket$ there is a unique $q \in \mstates_u$ such that $(u,v)$ is $q$-normalized and $v \in \accLanguage(\pfufa^q)$. If $\fufa$ is, on the other hand, power-unambiguous then no $f \in \afinal^{\pfufa^q}$ is terminal, and $v \in \accLanguage(\pfufa^q)$ for $v$ the period of a $q$-normalized decomposition $(u,v)$ implies $v^n \in \accLanguage(\pfufa^q)$ for all $n \geq 1$. In fact, power-unambiguity of $\fufa$ even implies that $f_{n+1} \in \delta^{\pfufa^q}(f_n,v)$ for all $n \geq 1$, where $f_{i}$ denotes the final state reached by the accepting run of $\pfufa^q$ on $v^i$. To see that this is indeed the case consider, for $n \geq 1$, the word $v^{n \cdot (n+1)}$. Then $\pfufa^q$ has an accepting run $\pi_1$ on $v^{n \cdot (n+1)}$ that is in $f_n$ after $v^n$. Such a $\pi_1$ exists since $f_n \in \pfufa^q(v^n) \cap \afinal^{\pfufa^q}$ and, as $\fufa$ is power-unambiguous, $\afinal^{\pfufa^q} \cap \delta^{\pfufa^q}(f_n, v^{n \cdot (n+1) - n}) = \afinal^{\pfufa^q} \cap \delta^{\pfufa^q}(f_n, (v^{n})^n)\neq \emptyset$. Similarly, $\pfufa^q$ has an accepting run $\pi_2$ on $v^{n \cdot (n+1)}$ that is in $f_{n+1}$ after $v^{n+1}$. Because $\pfufa^q$ is unambiguous, $\pi_1$ and $\pi_2$ coincide. Thus, the unique accepting run of $\pfufa^q$ on $v^{n \cdot (n+1)}$ must be in $f_n$ after $v^n$ and in $f_{n+1}$ after $v^{n+1}$, which is only possible if $f_{n+1} \in \delta^{\pfufa^q}(f_n, v)$. As a direct consequence, it follows that if $\fufa$ is power-unambiguous then for all $q$-normalized $(u,v) \in \llbracket \fufa \rrbracket$ the accepting run of $\pfufa^q$ on $v^j$ is a prefix of the accepting run of $\pfufa^q$ on $v^k$ for all $j < k$.  

The next lemma suggest that saturation as well as decomposition-unambiguity and power-unambiguity are natural requirements for unambiguous extensions of saturated FDFA. \unskip
\begin{restatable}{lemma}{LemFDFAareFUFA}\label{lem:fdfa-are-fufa}
	Let $\fdfa = (\lfdfa_\fdfa, \{\pfdfa_\fdfa^q\}_{q \in \lfdfa_\fdfa})$ be a saturated FDFA. Let $\fufa = (\lfufa_\fufa, \{\pfufa_\fufa^q\}_{q \in A})$ with $A = \lfufa_\fufa$ be the FUFA obtained by interpreting $\lfdfa_\fdfa$ as an NFA and the progress automata $\pfdfa_\fdfa^q$ as UFA. Then $\fufa$ is saturated, decomposition-unambiguous and power-unambiguous. 
\end{restatable}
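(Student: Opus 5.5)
Since $\lfdfa_\fdfa$ is deterministic, each set $\lfufa_\fufa(u)$ is a singleton, namely $\{\lfdfa_\fdfa(u)\}$. Because $A = \lfufa_\fufa$, we also get $\mstates_u = \{\lfdfa_\fdfa(u)\}$ for every $u$. The plan is to first check that the two notions of acceptance agree, and then read off the three properties.

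\emph{Step 1 (the notions coincide).} Let $q = \lfdfa_\fdfa(u)$. We claim that $(u,v)$ is normalized in the FUFA sense iff $(u,v)$ is $q$-normalized iff $\delta^{\lfufa}(q,v) = \{q\}$. By determinism, this holds iff $\lfdfa_\fdfa(uv) = \lfdfa_\fdfa(u)$, which is FDFA-normalization. Acceptance works the same way: $(u,v)\in\llbracket\fufa\rrbracket$ iff $(u,v)$ is $q$-normalized for the unique $q\in\mstates_u$ and $v\in\accLanguage(\pfufa_\fufa^q)=\accLanguage(\pfdfa_\fdfa^q)$. This is exactly $(u,v)\in\llbracket\fdfa\rrbracket$. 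Hence $\llbracket\fufa\rrbracket = \llbracket\fdfa\rrbracket$, and the sets of normalized decompositions agree.

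\emph{Step 2 (saturation).} By Step 1, the condition in \Cref{def:saturated} for $\fufa$ is literally the saturation condition for $\fdfa$, so it holds.

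\emph{Step 3 (decomposition-unambiguity).} This is immediate, since $|\mstates_u|\le 1$.

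\emph{Step 4 (power-unambiguity).} This is the only step with content. Take a $q$-normalized $(u,v)\in\llbracket\fufa\rrbracket$ with $v\in\accLanguage(\pfufa^q)$, and let $f\in\pfufa^q(v)\cap\afinal$. Since $\pfdfa_\fdfa^q$ is a total DFA, $f$ is its unique state after reading $v$, and $\delta(f,v^n) = \pfdfa^q_\fdfa(v^{n+1})$. So it suffices to show $v^{n+1}\in\accLanguage(\pfdfa^q_\fdfa)$ for all $n\ge1$, i.e., that $(u,v^m)\in\llbracket\fdfa\rrbracket$ for all $m\ge1$.

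Normalization of $(u,v^m)$ is clear: $\lfdfa_\fdfa(uv)=\lfdfa_\fdfa(u)$ implies by induction that $\lfdfa_\fdfa(uv^m)=\lfdfa_\fdfa(u)$. Now $(u,v^m)$ is a normalized decomposition of the same word $uv^\omega$ as $(u,v)$, which is accepted. Saturation of $\fdfa$ (Step 2) then forces $(u,v^m)$ to be accepted, giving $v^m\in\accLanguage(\pfdfa^q_\fdfa)$. This is precisely the fact cited from \cite{SPFA}, and the argument above rederives it directly from saturation.

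The main potential obstacle is bookkeeping rather than mathematics. We have to make sure the leading automaton is total, so that $\mstates_u$ is nonempty; the paper assumes this w.l.o.g. We also have to identify the unique run of the DFA with the notation $\pfufa^q(v)$.
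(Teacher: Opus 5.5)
Your proof is correct and takes essentially the same route as the paper. Determinism of the leading automaton gives decomposition-unambiguity, and the identity $\llbracket \fufa \rrbracket = \llbracket \fdfa \rrbracket$ transfers saturation. Power-unambiguity follows from $v^m \in \accLanguage(\pfdfa_\fdfa^q)$ for all $m \geq 1$ together with determinism of the progress DFA. The one difference is that you derive this power-stability directly from saturation, using that $(u,v^m)$ is a normalized decomposition of $uv^\omega$, whereas the paper cites \cite{SPFA} for it.
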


If not mentioned otherwise, we from now on assume that all FUFA are saturated, decomposition-unambiguous and power-unambiguous.

\subsection{\texorpdfstring{Succinctness of FUFA compared to FDFA and $\omega$-Automata}{Succinctness of FUFA compared to FDFA and omega-Automata}}

Similar to FDFA, the \emph{size} of an FUFA $\fufa$ is $\vert \fufa \vert = (\vert \lfufa \vert, \max_{q \in \mstates}\vert \pfufa^q \vert)$.
We compare the succinctness of FUFA to other types of $\omega$-automata, starting with saturated FDFA. Like for the comparison of DFA and UFA \cite{SDCFRFL}, there are $\omega$-regular languages for which representations as saturated FDFA require exponentially more states than representations as FUFA. 
\begin{figure}[t]
	\centering
	\begin{tikzpicture}[->,>=stealth',shorten >=1pt,auto, semithick]
		\tikzstyle{every node} = [text = black, scale = 0.85]
		
		\node[state] (q0) {$q_0$}; 
		\node[above of = q0, node distance = 0.9cm] (init) {}; 
		\node[right of = q0, node distance = 2cm, state] (q1) {$q_1$}; 
		\node[right of = q1, node distance = 2cm] (dots) {$\cdots$}; 
		\node[right of = dots, node distance = 2cm, inner sep = 0pt, state] (qn-1) {$q_{n}$}; 
		\node[right of = qn-1, node distance = 2cm, state, draw, rectangle] (qn) {$q_{n+1}$};
		
		\path 
		(init) edge (q0)
		(q0) edge [loop left] node {$a, c$} (q0)
		(q0) edge node {$a$} (q1)
		(q1) edge node {$a,c$} (dots)
		(dots) edge node {$a,c$} (qn-1)
		(qn-1) edge node {$b$} (qn)
		(qn) edge [loop right] node {$a, b, c$} (qn)
		;
	\end{tikzpicture}
	\caption{The leading automaton of an FUFA for the family of languages $L_n$ used in the proof of \Cref{prop:succinctness-of-fufa}. The states in $\mstates$ are depicted as rectangles. Adapted from \cite{MCUA}.}
	\label{fig:leading-automaton-n-steps-before}
\end{figure}
\begin{restatable}{proposition}{PropSuccinctnessFUFA}\label{prop:succinctness-of-fufa}
	There is a family of $\omega$-regular languages $(L_n)_{n \geq 1}$ over a ternary alphabet such that every $L_n$ can be represented by a saturated, decomposition- and power-unambiguous FUFA of size $(n{+}2, 2)$, while every saturated FDFA for $L_n$ is of size at least $(2^n, 2)$. 
\end{restatable}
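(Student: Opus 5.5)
The plan is to read off the family $L_n$ from \Cref{fig:leading-automaton-n-steps-before}: over $\Sigma=\{a,b,c\}$ let
$L_n = (a+c)^*\, a\, (a+c)^{n-1}\, b\, \Sigma^\omega$. Thus $w\in L_n$ iff $w$ contains a $b$, and the letter $n$ positions before the first $b$ is an $a$. For the upper bound I would take the leading NFA $\lfufa$ of the figure, which has $n{+}2$ states, with $\mstates=\{q_{n+1}\}$. As the single progress automaton I take $\pfufa^{q_{n+1}}$ to be the two-state DFA for $\Sigma^+$: a non-final initial state $p_0$, a final state $p_1$, and all transitions leading to $p_1$. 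This gives size $(n{+}2,2)$.

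The key observation is that $q_0$ has no $b$-transition. Hence on any $u$ every run of $\lfufa$ dies at the first $b$ of $u$, except possibly one run, namely the one that moved $q_0\to q_1$ exactly $n$ letters before that $b$. Consequently $\mstates_u\neq\emptyset$ iff $u$ has a prefix in $(a+c)^*a(a+c)^{n-1}b$, and then $\mstates_u=\{q_{n+1}\}$. Since $q_{n+1}$ loops on all letters, every such $(u,v)$ is $q_{n+1}$-normalized, and it is accepted because $v\in\Sigma^+$.

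From this the required properties follow directly.
\begin{itemize}
\item \emph{Correct language:} we get $\llbracket\fufa\rrbracket=\{(u,v)\mid u \text{ has a prefix in } (a+c)^*a(a+c)^{n-1}b\}$. So $\accLanguage(\fufa)\subseteq \UP(L_n)$, and conversely every $w\in\UP(L_n)$ has such a decomposition (take $u$ long enough).
\item \emph{Saturation:} every normalized decomposition is accepted.
\item \emph{Decomposition-unambiguity:} $|\mstates_u|\le 1$.
\item \emph{Power-unambiguity:} $p_1$ loops to itself on every $v^n$.
\end{itemize}

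For the lower bound on the leading automaton $\lfdfa$ of a saturated FDFA for $L_n$, I would use \Cref{lem:residual-languages-coincide}. Take distinct $x,y\in\{a,c\}^n$ that differ at position $i$, say $x_i=a$ and $y_i=c$. Then $x c^{i-1} b\, b^\omega\in L_n$, because the letter $n$ positions before the first $b$ is $x_i$. On the other hand $y c^{i-1} b\, b^\omega\notin L_n$. Hence $\resLang_x\neq\resLang_y$, so $\lfdfa(x)\neq\lfdfa(y)$, and $\lfdfa$ has at least $2^n$ states.

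For the second component I would argue that some progress automaton must accept a nonempty subset of $\Sigma^+$ while rejecting something. Take $u$ with universal residual and a normalized $(u,v)$: such a pair must be accepted, which forces at least the nontrivial DFA for $\Sigma^+$ (rejecting $\varepsilon$), i.e.\ two states.

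I expect this last point to be the main obstacle. Whether a one-state progress automaton is excluded depends on the size convention, namely whether $\varepsilon$ must be rejected. If it is not excluded, I would instead fall back on the fact that the bound is stated componentwise and that the leading-automaton bound is the substantive part. Everything else is routine once the "at most one surviving run" observation about $q_0$ is in place.
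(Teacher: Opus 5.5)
Your upper bound is exactly the paper's construction: the same $L_n$, the leading NFA of \Cref{fig:leading-automaton-n-steps-before} with $\mstates=\{q_{n+1}\}$, and a two-state progress DFA for $\Sigma^+$. Your observation that at most one partial run survives the first $b$ spells out the check of saturation and of both unambiguity conditions, which the paper only asserts. For the leading-automaton bound you take a different and more rigorous route. The paper argues informally that the leading automaton must ``recognize'' $L_n'$ and then cites the DFA lower bound from \cite{MCUA}. You instead apply \Cref{lem:residual-languages-coincide} to the fooling set $\{a,c\}^n$, with separating suffixes $c^{i-1}b^\omega$. Your version is self-contained and also avoids a subtlety in the paper's phrasing. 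The leading automaton only has to refine the residuals of $L_n$, so it need not separate $L_n'$ from $L_n'\Sigma^+$; what it is really forced to recognize is $L_n'\Sigma^*$, not $L_n'$.

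Your argument for the second coordinate, however, fails, and under a componentwise order no argument can repair it. Nothing in the definition of an FDFA requires a progress DFA to reject $\varepsilon$, so the one-state DFA accepting $\Sigma^*$ is a legitimate progress automaton. Here is a concrete FDFA:
\begin{itemize}
\item Give $\lfdfa$ the transition structure of the DFA for $L_n'\Sigma^*$. It has $2^n$ states recording the last $n$ letters up to the first $b$. From there it moves either to an absorbing state $\top$, entered exactly when a prefix in $L_n'$ has just been completed, or to an absorbing rejecting sink.
\item Give $\top$ the one-state universal progress DFA, and every other state the one-state empty one.
\end{itemize}
Let $(x,y)$ be normalized. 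If $\lfdfa(x)\neq\top$, then $\lfdfa(xy^k)=\lfdfa(x)\neq\top$ for all $k$, so $xy^\omega\notin L_n$. If $\lfdfa(x)=\top$, then $xy^\omega\in L_n$. Hence this FDFA is saturated, represents $L_n$, and has size $(2^n{+}2,1)$.

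So the ``$2$'' is not a genuine lower bound, and the paper's proof likewise establishes only the bound on the leading automaton. Your fallback is the right call, but state it accurately: componentwise, the second coordinate is false, not merely non-substantive. If you want the statement to hold literally, read the order lexicographically. Then note that your residual argument actually yields $\vert\lfdfa\vert\geq 2^n{+}2$. The reason is that the residuals $\Sigma^\omega$ (after any word of $L_n'$) and $\emptyset$ (after the word $b$) differ from the $2^n$ residuals of $\{a,c\}^n$ and from each other.
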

\begin{proof}[Proof Sketch.]
	Let $\Sigma = \{a,b,c\}$, $L_n' = (a+c)^*a(a+c)^{n-1}b$ and $L_n = L_n' \cdot \Sigma^\omega$. A saturated, decomposition- and power-unambiguous FUFA for $L_n$ of size $(n{+}2, 2)$ has a leading automaton as shown in \Cref{fig:leading-automaton-n-steps-before} with $\mstates = \{q_{n+1}\}$, and a progress automaton $\pfufa^{q_{n+1}}$ accepting $\Sigma^+$. Since an $\omega$-word is in $L_n$ iff it has a prefix in $L_n'$, the leading automaton of any saturated FDFA for $L_n$ must ``recognize'' $L_n'$. But any DFA for $L_n'$ requires at least $2^n$ states \cite{MCUA}. 
\end{proof}

In \cite{MCUA} it is argued that \emph{every} deterministic $\omega$-automaton accepting the language $L_n$ used in the proof of \Cref{prop:succinctness-of-fufa} has at least $2^n$ states. This yields the following corollary.
\begin{corollary}\label{cor:fufa-exp-smaller-than-deterministic-omega}
	There is a family of $\omega$-regular languages $(L_n)_{n \geq 1}$ over a ternary alphabet such that every $L_n$ can be represented by a saturated, decomposition- and power-unambiguous FUFA of size $(n{+}2, 2)$, while every deterministic $\omega$-automaton for $L_n$ needs at least $2^n$ states. 
\end{corollary}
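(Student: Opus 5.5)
The corollary combines the FUFA construction from the proof of \Cref{prop:succinctness-of-fufa} with a lower bound on deterministic $\omega$-automata for the same languages. Fix $\Sigma = \{a,b,c\}$, $L_n' = (a+c)^*a(a+c)^{n-1}b$ and $L_n = L_n' \cdot \Sigma^\omega$. The upper bound is taken verbatim from \Cref{prop:succinctness-of-fufa}. The FUFA $\fufa_n$ has the leading automaton of \Cref{fig:leading-automaton-n-steps-before} with $\mstates = \{q_{n+1}\}$, and its single progress automaton $\pfufa^{q_{n+1}}$ is a two-state DFA for $\Sigma^+$. That proposition already states that $\fufa_n$ is saturated, decomposition- and power-unambiguous and has size $(n{+}2,2)$, so nothing new is needed on this side.

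For the lower bound, let $\automatonD$ be any deterministic $\omega$-automaton for $L_n$, with any acceptance condition, and suppose it has fewer than $2^n$ states. I would use a fooling-set argument. Consider the $2^n$ words $x \in \{a,c\}^n$. By pigeonhole, two distinct words $x \neq y$ reach the same state of $\automatonD$. Choose a position $i$ with $x_i = a$ and $y_i = c$ (or vice versa). Then append $z = c^{i-1} b$. Since the last $a$-relevant position of $xz$ is exactly $n$ letters before the $b$, we get $xz \in L_n'$. For $yz$, I need to check that no prefix lies in $L_n'$. The only $b$ in $yz$ is the appended one, and the letter $n$ positions before it is $y_i = c$, so no prefix of $yz$ is in $L_n'$.

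Next consider the $\omega$-words $xzc^\omega$ and $yzc^\omega$. The first lies in $L_n$ and the second does not. However, $\automatonD$ is deterministic and is in the same state after $x$ and after $y$. Its runs on the two words therefore coincide on the common suffix $zc^\omega$, which gives the same infinity set. So $\automatonD$ accepts both words or rejects both, a contradiction. Hence $\automatonD$ has at least $2^n$ states. This matches the claim attributed to \cite{MCUA}, and I would either cite it directly or include this short argument.

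The only delicate step is the index bookkeeping in the choice of $z$. One must make sure that $xz$ has a prefix in $L_n'$ while $yz$ has none. This reduces to the observation that membership of a prefix in $L_n'$ is determined solely by each $b$ together with the letter $n$ positions before it. Because $z$ contains a single $b$, only one position has to be checked. Beyond that the argument is routine.
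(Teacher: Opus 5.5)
Your proposal is correct and follows the paper's route. The upper bound is the FUFA from \Cref{prop:succinctness-of-fufa}, and the lower bound for arbitrary deterministic $\omega$-automata is exactly the fact the paper takes from \cite{MCUA}; your fooling-set argument (pigeonhole over $\{a,c\}^n$, the distinguishing suffix $c^{i-1}bc^\omega$, and the fact that acceptance depends only on the infinity set of the run) is a valid self-contained proof of that cited bound.
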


Next, we compare the succinctness of FUFA to that of unambiguous Büchi automata. 

\begin{figure}[t]
	\centering
	\begin{tikzpicture}[->,>=stealth',shorten >=1pt,auto, semithick]
		\tikzstyle{every node} = [text = black, scale = 0.85]
		
		\node[state] (q1) {$q_1$}; 
		\node[state, right of = q1, node distance = 1.7cm] (q2) {$q_2$}; 
		\node[state, right of = q2, node distance = 1.7cm] (q3) {$q_3$}; 
		\node[right of = q3, node distance = 1.7cm] (temp) {$\cdots$}; 
		\node[state, right of = temp, node distance = 1.7cm] (qn) {$q_n$}; 
		\node[left of = q1, node distance = 1.7cm, state, draw, rectangle, inner sep = 0pt] (qn+1) {$q_{n+1}$}; 
		\node[above left of = q1, node distance = 1.1cm] (initq) {}; 
		\node[left of = qn+1, node distance = 0.4cm, yshift = 0.8cm, scale=1.2] {$\lfufa_n$}; 
		
		\path 
		(initq) edge (q1)
		(q1) edge [loop above] node {$0$} (q1)
		(q1) edge node [pos = 0.4] {$0$} (q2)
		(q1) edge node [above, pos = 0.4] {$\$$} (qn+1)
		(q2) edge [loop above] node {$1$} (q2)
		(q2) edge node [pos = 0.4] {$0$} (q3)
		(q3) edge [loop above] node {$1$} (q3)
		(q3) edge node [pos = 0.4] {$0$} (temp)
		(temp) edge node [pos = 0.4] {$0$} (qn)
		(qn) edge [loop above] node {$1$} (qn)
		(qn) edge [bend left = 20] node [pos = 0.3] {$0$} (q1)
		(qn+1) edge [loop left] node {$\$$} (qn+1)
		;
		
		\node[state, right of = qn, node distance = 2.3cm] (eps) {$\varepsilon$}; 
		\node[state, accepting, right of = eps, node distance = 1.7cm] (dollar) {$\$$}; 
		\node[left of = eps, node distance = 1cm] (initp) {}; 
		\node[above of = initp, node distance = 0.6cm, scale = 1.2] {$\pfufa_n^{q_{n+1}}$}; 
		
		\path 
		(initp) edge (eps)
		(eps) edge node [pos = 0.4] {$\$$} (dollar)
		(dollar) edge [loop right] node {$\$$} (dollar)
		;
		
	\end{tikzpicture}
	\caption{The FUFA $\fufa_n = (\lfufa_n, \pfufa_n^{q_{n+1}})$ as used in the proof of \Cref{thm:fufa-2-uba}.}
	\label{fig:leading-automaton-blow-up-fufa-2-uba}
\end{figure}
\begin{restatable}{theorem}{ThmFUFAToUBA}\label{thm:fufa-2-uba}
	There is a family of $\omega$-regular languages $(L_n)_{n \geq 1}$ over a ternary alphabet such that every $L_n$ can be represented by a saturated, decomposition- and power-unambiguous FUFA of size $(n{+}1, 2)$, while every UBA for $L_n$ has at least $2^n {-} 1$ states. 
\end{restatable}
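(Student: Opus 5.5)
The plan is to read off $L_n$ from the FUFA $\fufa_n$ in \Cref{fig:leading-automaton-blow-up-fufa-2-uba}, check the required FUFA properties directly, and then reduce the UBA lower bound to a known lower bound for UFA over finite words.

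\emph{Step 1 (the language).} Over $\Sigma = \{0,1,\$\}$, let $\automaton_n$ be the NFA obtained from the states $q_1,\ldots,q_n$ of $\lfufa_n$ with their $0/1$-transitions, where $q_1$ is both initial and the only final state. Put $K_n = \accLanguage(\automaton_n) \subseteq \{0,1\}^*$ and $L_n = K_n \cdot \$^\omega$. This $\automaton_n$ is (up to renaming) the $n$-state NFA of Leung's family. For that family, every UFA for $K_n$ needs at least $2^n-1$ states; if needed, this can be re-proved with Schmidt's rank method on a $(2^n-1)\times(2^n-1)$ full-rank matrix indexed by the nonempty subsets of $\{q_1,\ldots,q_n\}$.

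\emph{Step 2 (the FUFA is correct and well-behaved).} $\fufa_n$ has size $(n{+}1,2)$, and $\mstates = \{q_{n+1}\}$. I would verify the following.
\begin{itemize}
\item A decomposition $(u,v)$ is normalized iff $q_{n+1} \in \lfufa_n(u)$ and $q_{n+1}\in\delta(q_{n+1},v)$. This holds iff $u = x\$^k$ with $x \in K_n$, $k\ge 1$, and $v \in \$^+$. Every such decomposition is accepted, since $\pfufa_n^{q_{n+1}}$ accepts exactly $\$^+$.
\item Hence $\accLanguage(\fufa_n) = \UP(L_n)$, and saturation holds trivially: all normalized decompositions are accepted.
\item Decomposition-unambiguity is immediate because $|\mstates| = 1$.
\item Power-unambiguity holds because the progress automaton is deterministic and its final state loops on $\$$.
\end{itemize}
All of this is routine.

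\emph{Step 3 (UBA to UFA).} Let $\automatonB$ be any UBA with $\accLanguage(\automatonB) = L_n$. Build the NFA $\automatonB'$ with the same states, the same initial state, the transitions restricted to $\{0,1\}$, and final states $F' = \{p \mid \$^\omega \in \accLanguage(\automatonB_p)\}$.
\begin{itemize}
\item \emph{Language.} For $w\in\{0,1\}^*$, we have $w \in \accLanguage(\automatonB')$ iff some run on $w$ ends in a state $p$ from which $\$^\omega$ has an accepting run. This holds iff $w\$^\omega \in L_n$, iff $w \in K_n$.
\item \emph{Unambiguity.} Suppose there were two distinct accepting runs of $\automatonB'$ on $w$, ending in $p$ and $p'$ (possibly equal). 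Extend each by an accepting run on $\$^\omega$ from its endpoint. This gives two distinct accepting runs of $\automatonB$ on $w\$^\omega$, because they already differ on the prefix of length $|w|$. That contradicts unambiguity of $\automatonB$.
\end{itemize}
So $\automatonB'$ is a UFA for $K_n$ with $|\automatonB|$ states, and Step 1 gives $|\automatonB| \ge 2^n-1$.

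The main obstacle is Step 1, the exponential UFA lower bound for $K_n$. I would cite Leung's result (or the version used in \cite{MCUA}) and only check that $\automaton_n$ matches that family. Should it not match exactly, the fallback is the rank argument: choose words $x_S$ reaching exactly the state set $S$ and suffixes $y_T$ accepted exactly from the states in $T$. The matrix $[\,S\cap T\neq\emptyset\,]$ has rank $2^n-1$ over $\mathbb{Q}$, and the number of states of any UFA for $K_n$ is at least the rank of this matrix.
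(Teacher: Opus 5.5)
Your proposal is correct and matches the paper's proof step by step. It uses the same language $L_n = L_n'\cdot\$^\omega$ built on Leung's NFA, the same routine check of saturation, decomposition- and power-unambiguity for $\fufa_n$, and the same reduction of a UBA for $L_n$ to a UFA for $L_n'$ by restricting to $\{0,1\}$ and invoking Leung's $2^n{-}1$ bound. The only difference is cosmetic: the paper first trims the UBA and then makes final the states with an outgoing $\$$-transition, whereas you make final exactly the states $p$ with $\$^\omega \in \accLanguage(\automatonB_p)$, which removes the need for the trimness assumption.
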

\begin{proof}[Proof Sketch.]
	Let $L_n' = (0 + (01^*)^{n-1}0)^*$ and $L_n = L_n' \cdot \$^\omega$. The FUFA $\fufa_n$ depicted in \Cref{fig:leading-automaton-blow-up-fufa-2-uba} is saturated, decomposition- and power-unambiguous, and satisfies $\accLanguage_\omega(\fufa_n) = L_n$. Using techniques from \cite{FABA} we can show that every UBA $\automatonB_n$ for $L_n$ has a substructure $\automaton_n$ that, if interpreted as a UFA, accepts $L_n'$. But then $\automaton_n$ requires at least $2^n {-} 1$ states \cite{SEAFAPAFA}. 
\end{proof}

Because of the exponential lower bound for the translation from FUFA to UBA, the question arises if an exponential blow-up is also inevitable when translating saturated, decomposition- and power-unambiguous FUFA to NBA. It turns out that this is not the case. \unskip
\begin{restatable}{lemma}{LemFUFAtoNBA}\label{lem:fufa-to-nba-polynomial}
	Let $\fufa$ be a saturated, decomposition- and power-unambiguous FUFA of size $\vert \fufa \vert = (n,k)$. Then there is an NBA $\automaton$ of size $\mathcal{O}(n^2k^3)$ such that $\UP(\accLanguage(\automaton)) = \accLanguage(\fufa)$.
\end{restatable}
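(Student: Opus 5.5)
We build the NBA by the standard "guess the normalized decomposition and loop" construction from the FDFA-to-NBA literature, adapted to nondeterministic components. The NBA $\automaton$ first simulates the leading NFA $\lfufa$ on a prefix $u$ and nondeterministically guesses a state $q \in \mstates$ with $q \in \lfufa(u)$. From then on it reads the rest of the word as a concatenation of blocks $v_1 v_2 \cdots$. On each block it runs $\lfufa_q$ and $\pfufa^q$ in parallel and also guesses a final state $f$ of $\pfufa^q$. A block may be closed only when $\lfufa$ is back in $q$ and $\pfufa^q$ is in an accepting state. Closing a block is the Büchi-accepting event.

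A naive product $\lfufa \times \pfufa^q$ per block does not give the right language, and does not match the bound $\mathcal{O}(n^2k^3)$. The bound suggests states of the form $(q, p, r_1, r_2, f)$, where:
\begin{itemize}
\item $q \in \mstates$ is the guessed anchor, which is stored;
\item $p \in \lfufa$ is the current leading state;
\item $r_1, r_2, f$ are three $\pfufa^q$ components.
\end{itemize}
Counting gives $n \cdot n \cdot k^3$. Using power-unambiguity, the accepting run of $\pfufa^q$ on $v^{j+1}$ extends that on $v^j$ by a run from $f_j$ to $f_{j+1}$ on $v$. So the NBA tracks the run of $\pfufa^q$ on one period $v$ from the initial state, and additionally a run from a guessed final state $f$ to a final state on $v$. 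The second run witnesses that the word can be repeated consistently. Concretely, the first block checks $v \in \accLanguage(\pfufa^q)$ and $q \in \delta^{\lfufa}(q,v)$ and records the reached final state $f$. Each subsequent block checks that reading the block from the recorded final state ends in a final state, and records the new one. This needs components for the current progress state, the recorded final state and possibly the guessed target final state, hence $k^3$.

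Correctness has two inclusions.

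\textbf{Inclusion $\accLanguage(\fufa) \subseteq \UP(\accLanguage(\automaton))$.} Take $w \in \accLanguage(\fufa)$. Saturation gives some accepted $q$-normalized decomposition $(u,v)$ of $w$. Power-unambiguity yields the chain $f_1, f_2, \dots$ with $f_{j+1} \in \delta^{\pfufa^q}(f_j, v)$. Cutting $w = uv^\omega$ into blocks $v$ then gives an accepting run of $\automaton$.

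\textbf{Inclusion $\UP(\accLanguage(\automaton)) \subseteq \accLanguage(\fufa)$.} This is the main obstacle. An ultimately periodic word $w$ accepted by $\automaton$ has an accepting run whose blocks $v_1, v_2, \dots$ need not be equal, and need not align with the period of $w$. Here I would apply a pigeonhole argument in the style of Ramsey/Büchi to the finite state space. Since $w = xy^\omega$, the run eventually repeats a pair consisting of an automaton state at a block boundary and a position modulo $|y|$. This produces indices $i<j$ such that $u = xy^{m}$ reaches $q$ and $v = v_i\cdots v_{j-1}$ is a multiple of $y$, hence a genuine period of $w$. The leading-automaton loop gives $q \in \delta^{\lfufa}(q,v)$. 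The chained progress runs, starting from a final state reached on the first block and passing through final states, show $v$ composed with earlier blocks lies in $\accLanguage(\pfufa^q)$. Pinning down exactly which word is shown to be accepted, and arranging $u$ so that the decomposition is a $q$-normalized decomposition of $w$, requires care. Once some normalized decomposition of $w$ is accepted, $w \in \accLanguage(\fufa)$ by definition, and saturation makes the choice irrelevant. If the block chaining forces acceptance of a word like $v_1 v^{\ell}$ rather than $v$ itself, I would absorb $v_1$ into a shifted decomposition $(u', v')$ with $u'v'^\omega = w$, using saturation to transfer acceptance. The size bound is then read off from the state tuple.
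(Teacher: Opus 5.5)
\textbf{The construction accepts too much.} As you specify it concretely, the first block runs $\pfufa^q$ from its initial state into some final state, and each later block runs from the last recorded final state into some, possibly different, final state. This NBA accepts ultimately periodic words outside $\accLanguage(\fufa)$. So the inclusion $\UP(\accLanguage(\automaton))\subseteq\accLanguage(\fufa)$ is false for it, and no pigeonhole argument can repair that.

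Take $\Sigma=\{a,b\}$, a leading automaton with a single state $q$ looping on both letters, $\mstates=\{q\}$, and $\pfufa^q$ the two-state DFA for $\Sigma^*a\Sigma^*$ with initial state $\iota$ and final sink $g$. This is a saturated FDFA for ``infinitely many $a$'', hence a saturated, decomposition- and power-unambiguous FUFA by \Cref{lem:fdfa-are-fufa}. Your NBA accepts $ab^\omega$: block $a$ leads $\iota$ to $g$, and every later block $b$ leads $g$ to $g$. Yet every period of $ab^\omega$ is some $b^j\notin\accLanguage(\pfufa^q)$.

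Additionally demanding that each block be accepted from $\iota$ does not help. Take the progress language $a^+ + b^+ + \Sigma^*c\Sigma^*$ over $\{a,b,c\}$, again saturated with one leading state. The blocks $c,a,b,a,b,\ldots$ make $c(ab)^\omega$ accepted, but no period of $c(ab)^\omega$ lies in that language.

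Your fallback fails for the same reason. The pigeonhole only yields $w=uzv^\omega$ with $zv^m\in\accLanguage(\pfufa^q)$. But $(u,zv^m)$ is a decomposition of $u(zv^m)^\omega\neq w$, and saturation only relates normalized decompositions of one and the same word.

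\textbf{The missing idea: fix one final state.} Guess a single final state $f$ together with $q$ and keep it fixed for the whole run. Let $M_q=\{u\mid q\in\lfufa(u)\}$, and let $N_{q,f}$ be the set of nonempty $v$ with $q\in\delta^{\lfufa}(q,v)$, $f\in\pfufa^q(v)$ and $f\in\delta^{\pfufa^q}(f,v)$. Require \emph{every} block to lie in $N_{q,f}$, so the NBA recognizes $\bigcup_{q,f}M_q\cdot N_{q,f}^\omega$. Both conditions are needed on every block:
\begin{itemize}
\item the loop on the same $f$ makes $N_{q,f}$ closed under concatenation;
\item the run from the initial state to $f$ is needed because the period found by your pigeonhole starts at a later block $v_i$ (requiring it only on the first block still accepts $ab^\omega$ with $f=g$).
\end{itemize}

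With this, your pigeonhole on block boundaries modulo $|y|$ goes through. The word $v_i\cdots v_{j-1}\in N_{q,f}$ is a period of $w$, and $uv_1\cdots v_{i-1}\in M_q$, which gives an accepted $q$-normalized decomposition. The paper packages this as $M_qN_{q,f}^*=M_q$, $N_{q,f}^+=N_{q,f}$ and the lemma of Calbrix, Nivat and Podelski (\Cref{lem:calbrix-lemma}). Your tuple $(q,p,r_1,r_2,f)$ matches the $\mathcal{O}(n^2k^3)$ bound if, at each boundary, $r_1$ restarts in the initial state and $r_2$ in $f$, and both must end in $f$.

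Correspondingly, the easy inclusion must produce one looping final state rather than the chain $f_1,f_2,\ldots$. Pick $j<k$ with $f_j=f_k$ and set $l=j(k-j)$. Then $f=f_l$ satisfies $f\in\pfufa^q(y^l)$ and $f\in\delta^{\pfufa^q}(f,y^l)$, so $w=x(y^l)^\omega\in M_qN_{q,f}^\omega$.
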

\begin{proof}[Proof Sketch.]
	We adjust the translation of saturated FDFA to NBA from \cite[Thm. 5.8]{FDFAAORL} to FUFA.
	For $q \in \mstates$ and $f \in \afinal^{\pfufa^q}$, let $M_q = \{v \in \Sigma^* \mid q \in \lfufa(v)\}$ and $N_{q,f} = \{v \in \Sigma^+ \mid q \in \delta^{\lfufa}(q, v) \land  f \in \pfufa^q(v) \land f \in \delta^{\pfufa^q}(f, v)\}$,	as well as $L = \bigcup_{\{(q,f) \mid q \in \mstates \land f \in \afinal^{\pfufa^q}\}} M_q \cdot N_{q,f}^\omega$. Then there is an NBA $\automaton$ for $L$ of size $\mathcal{O}(n^2k^3)$ with $\UP(\accLanguage(\automaton)) = \accLanguage(\fufa)$. 
\end{proof}

As every saturated, decomposition- and power-unambiguous FUFA $\fufa$ can be translated into an NBA $\automaton$ with $\UP(\accLanguage(\automaton))= \accLanguage(\fufa)$, any such FUFA represents an $\omega$-regular language. 
To show that also every $\omega$-regular language can be represented by a saturated, decomposition- and power-unambiguous FUFA, we provide a polynomial translation from UBA to FUFA.

Given a UBA $\automaton$, let $\fufa = (\lfufa, \{\pfufa^q\}_{q \in \mstates})$ be an FUFA whose leading automaton $\lfufa$ has the same structure as $\automaton$, but with $\afinal^\lfufa = \emptyset$, and let $\mstates = \afinal^{\automaton}$. Moreover, for $q \in \mstates$, let $\pfufa^q$ be such that $\accLanguage(\pfufa^q) = \{v \in \Sigma^+ \mid q \in \delta^\automaton(q, v)\}$. The automaton $\pfufa^q$ is obtained from $\automaton_q$ by setting $\afinal^{\pfufa^q} = \{q'\}$ for a fresh copy $q'$ of $q$ and defining $\delta^{\pfufa^{q}}$ 
such that if $p \neq q'$ then $\delta^{\pfufa^{q}}(p, a)$ is like $\delta^{\automaton}(p, a)$, but with transitions to $q$ replaced by transitions to $q'$, while $\delta^{\pfufa^q}(q', a)$ is like $\delta^{\automaton}(q,a)$, but with transitions to $q$ replaced by self-loops on $q'$.  
The resulting FUFA $\fufa$ is saturated, decomposition- and power-unambiguous, and satisfies $\accLanguage(\automaton) = \accLanguage_\omega(\fufa)$.
\begin{restatable}{lemma}{LemUBAtoFUFA}\label{lem:uba-to-fufa}
	Let $\automaton$ be a UBA. Then there is a saturated, decomposition- and power-unambiguous FUFA $\fufa = (\lfufa, \{\pfufa^q\}_{q \in \mstates})$ of size $\vert \fufa \vert = (\vert \automaton \vert, \mathcal{O}(\vert \automaton \vert))$ such that $\accLanguage(\automaton) = \accLanguage_\omega(\fufa)$. 	
\end{restatable}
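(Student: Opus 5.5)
The plan is to verify each required property of the construction given just before the statement directly. Everything reduces to facts about runs of $\automaton$, with unambiguity of $\automaton$ doing the real work.

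\emph{Size and the basic correspondence.} The size bound is immediate: $\lfufa$ has $\vert \automaton \vert$ states and each $\pfufa^q$ has $\vert \automaton \vert + 1$ states. The core technical step, which I expect to be the main obstacle, is a run correspondence for $\pfufa^q$. Define the map $h$ sending $q'$ to $q$ and fixing all other states. I will show that $h$ is a bijection between the runs of $\pfufa^q$ on $v \in \Sigma^+$ ending in $q'$ and the runs of $\automaton$ from $q$ on $v$ ending in $q$.
\begin{itemize}
\item Applying $h$ to a run of $\pfufa^q$ gives a run of $\automaton$, since every redirected transition is mapped back to the original one.
\item Conversely, given a run of $\automaton$ from $q$ back to $q$, the corresponding run of $\pfufa^q$ is determined uniquely. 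The initial occurrence stays $q$, and every later occurrence of $q$ becomes $q'$. This is forced because in $\pfufa^q$, once $q'$ is entered, transitions to $q$ become self-loops on $q'$, and $q$ itself is never re-entered.
\end{itemize}
Two details need checking here. First, $q$ is never re-entered after leaving it, because all transitions into $q$ were redirected to $q'$. Second, the run really ends in $q'$ because $v$ is nonempty. From the bijection, $\accLanguage(\pfufa^q) = \{v \in \Sigma^+ \mid q \in \delta^\automaton(q,v)\}$, and $q' \in \delta^{\pfufa^q}(q',v)$ for every such $v$.

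\emph{Unambiguity of the progress automata.} Suppose $\pfufa^q$ had two accepting runs on $v$. Via $h$ they give two distinct cycles $r_1 \neq r_2$ of $\automaton$ on $v$ from $q$ to $q$. Since $q$ is reachable, choose $u$ with $q \in \automaton(u)$. Then the run that follows $u$ to $q$ and then $r_1 r_2^\omega$, and the run that follows $u$ to $q$ and then $r_2^\omega$, are two distinct accepting runs of $\automaton$ on $uv^\omega$. Both are accepting because they visit $q \in \afinal^\automaton$ infinitely often. This contradicts the assumption that $\automaton$ is a UBA.

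\emph{Saturation and the two unambiguity properties.} A decomposition $(x,y)$ is $q$-normalized iff $q \in \lfufa(x) \cap \afinal^\automaton$ and $q \in \delta^\automaton(q,y)$. By the correspondence, $q$-normalization already implies $y \in \accLanguage(\pfufa^q)$. So every normalized decomposition is accepted, and saturation holds trivially.
\begin{itemize}
\item \emph{Decomposition-unambiguity.} Suppose $(x,y)$ were accepted via both $q_1 \neq q_2$. Then the runs "$x$ to $q_i$, then the cycle on $y$ repeated forever" for $i = 1,2$ are two accepting runs of $\automaton$ on $xy^\omega$. They differ at position $\vert x \vert$, contradicting that $\automaton$ is a UBA.
\item \emph{Power-unambiguity.} The only final state of $\pfufa^q$ is $q'$, and $q' \in \delta^{\pfufa^q}(q',y)$. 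Iterating gives $q' \in \delta^{\pfufa^q}(q',y^n)$ for all $n \geq 1$.
\end{itemize}

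\emph{Language equality.} This splits into two inclusions.
\begin{itemize}
\item $\accLanguage_\omega(\fufa) \subseteq \accLanguage(\automaton)$. It suffices to compare ultimately periodic words, using \cite{UPWROL}. If $(x,y)$ is $q$-accepted, a run of $\automaton$ on $x$ reaches $q$, and repeating the cycle on $y$ gives an accepting run on $xy^\omega$.
\item $\UP(\accLanguage(\automaton)) \subseteq \accLanguage(\fufa)$. Fix $uv^\omega \in \accLanguage(\automaton)$ and an accepting run $\rho$ on it. Apply pigeonhole on the pairs (final state, position offset modulo $\vert v \vert$) over positions beyond $\vert u \vert$. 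This yields some $q \in \afinal^\automaton$ and an offset $r$ such that $\rho$ is in $q$ at positions $\vert u \vert + k_1\vert v \vert + r$ and $\vert u \vert + k_2 \vert v \vert + r$ with $k_1 < k_2$. Let $v_1$ be the prefix of $v$ of length $r$ and $v_2$ the remainder, so $v = v_1 v_2$. Then the decomposition $(u v^{k_1} v_1, (v_2 v_1)^{k_2-k_1})$ represents $uv^\omega$, is $q$-normalized, and is therefore accepted.
\end{itemize}
Together with \Cref{lem:fufa-to-nba-polynomial} guaranteeing that $\accLanguage_\omega(\fufa)$ is well defined, this gives $\accLanguage(\automaton) = \accLanguage_\omega(\fufa)$.
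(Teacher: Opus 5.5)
Your proposal is correct and follows the paper's proof essentially step by step. It uses the same construction, the same ``two distinct accepting runs of $\automaton$ on an ultimately periodic word'' contradictions for unambiguity of $\pfufa^q$ and for decomposition-unambiguity, trivial saturation because every normalized decomposition is accepted, and the same two inclusions on ultimately periodic words. In addition, your explicit run bijection $h$ and your pigeonhole on (final state, offset modulo $\vert v \vert$) make rigorous two points the paper only asserts (``by construction'' and the existence of a loop $y$ on a final state with $xy^\omega = w$), and you do not need the paper's extra w.l.o.g. assumptions that $\automaton$ is diamond-free, trim, and has single-final-state accepting runs.
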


The above translation from UBA $\automaton$ to FUFA $\fufa = (\lfufa, \{\pfufa^q\}_{q \in \mstates})$ does not guarantee that the automaton obtained from $\lfufa$ by setting $\afinal^\lfufa = \mstates$ is a UFA. To see this let $\automaton$ have three states $q_1, q_2, q_3$, with $q_1$ the initial state, $\afinal^{\automaton} = \{q_2, q_3\}$, and the only transitions $\delta^\automaton(q_1, a) = \{q_2, q_3\}, \delta^\automaton(q_2, b) = \{q_2\}$ and $\delta^\automaton(q_3, c) = \{q_3\}$. Then the FUFA constructed from $\automaton$ can move to both $q_2, q_3 \in \mstates$ on $a$, i.e., $\lfufa$ is not unambiguous w.r.t. $\mstates$. 

\begin{remark}
	By \Cref{thm:fufa-2-uba} there are $\omega$-regular languages $L_n$ for which saturated, decomposition- and power-unambiguous FUFA of size $(n{+}2, 2)$ exist, while every UBA for $L_n$ requires at least $2^n{-}1$ states. The FUFA obtained from such a UBA then also has a leading automaton of size at least $2^n{-}1$. Thus, the above translation is not optimal and can produce FUFA that are significantly larger than a minimal FUFA for the same language.  
\end{remark}

It is well-known that there is a single-exponential translation from LTL to UBA \cite{ATAAPV}. By \Cref{lem:uba-to-fufa}, translating LTL to saturated, decomposition- and power-unambiguous FUFA is of similar complexity. This is in contrast to the double-exponential lower bound for the translation from LTL to saturated FDFA \cite{DFpersCom,CLTLFE} shown in \Cref{prop:double-exponential-blow-up-ltl-to-fdfa}.
\begin{restatable}{corollary}{CorSingleExponentialLTLtoFUFA}\label{cor:single-exponential-translation-ltl-to-fufa}
	There is a single-exponential translation from LTL to saturated, power- and decomposition-unambiguous FUFA. 
\end{restatable}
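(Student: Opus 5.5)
The plan is to compose two translations that are already available. First, given an LTL formula $\varphi$ over $AP$, I would invoke the classical single-exponential translation from LTL to unambiguous Büchi automata \cite{ATAAPV}. This yields a UBA $\automaton_\varphi$ over the alphabet $2^{AP}$ with $\accLanguage(\automaton_\varphi) = \Words(\varphi)$ and $\vert \automaton_\varphi \vert \in 2^{\mathcal{O}(\vert \varphi \vert)}$. In that construction the states are sets of subformulas from the closure of $\varphi$, unambiguity comes from the separated treatment of until-formulas, and the alphabet $2^{AP}$ is itself exponential only in the number of propositions occurring in $\varphi$, which is at most $\vert \varphi \vert$.

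Second, I would apply \Cref{lem:uba-to-fufa} to $\automaton_\varphi$. This gives a saturated, decomposition- and power-unambiguous FUFA $\fufa_\varphi = (\lfufa, \{\pfufa^q\}_{q \in \mstates})$ with $\accLanguage_\omega(\fufa_\varphi) = \accLanguage(\automaton_\varphi) = \Words(\varphi)$ and $\vert \fufa_\varphi \vert = (\vert \automaton_\varphi \vert, \mathcal{O}(\vert \automaton_\varphi \vert))$. Both components are therefore bounded by $2^{\mathcal{O}(\vert \varphi \vert)}$. The construction in \Cref{lem:uba-to-fufa} copies $\automaton_\varphi$ once for the leading automaton and once, plus a fresh state $q'$, for each progress automaton. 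The total representation, including $\vert \mstates \vert \le \vert \automaton_\varphi\vert$ progress automata and their transitions over $2^{AP}$, is therefore still single-exponential, and it can be computed in single-exponential time.

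The only point needing care is to make explicit what ``single-exponential'' refers to. It means the FUFA size $(\vert\lfufa\vert, \max_q \vert\pfufa^q\vert)$ as defined in this section, together with the construction time. Both are polynomial in $\vert \automaton_\varphi\vert$ by \Cref{lem:uba-to-fufa}, so exponents only add and do not compound. I expect no real obstacle beyond citing the LTL-to-UBA result correctly. Finally, I would contrast this with \Cref{prop:double-exponential-blow-up-ltl-to-fdfa} to highlight that the corollary separates FUFA from saturated FDFA.
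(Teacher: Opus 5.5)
Your proposal is correct and follows the paper's proof: the paper also composes the single-exponential LTL-to-UBA translation of \cite{ATAAPV} with \Cref{lem:uba-to-fufa}, whose output has size polynomial in $\vert \automaton_\varphi \vert$, so the overall blow-up remains single-exponential. Your additional remarks on the alphabet $2^{AP}$, the number of progress automata, and the construction time only spell out what the paper leaves implicit.
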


The combination of \Cref{lem:fufa-to-nba-polynomial,lem:uba-to-fufa} yields the following result.  
\begin{restatable}{theorem}{ThmExpressivePowerFUFA}\label{thm:expressive-power-fufa}
	Saturated, decomposition- and power-unambiguous 
	FUFA are as expressive as the set of $\omega$-regular languages.
\end{restatable}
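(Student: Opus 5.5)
The statement follows from \Cref{lem:fufa-to-nba-polynomial} and \Cref{lem:uba-to-fufa}, together with the fact that an $\omega$-regular language is determined by its ultimately periodic words \cite{UPWROL}. We prove two inclusions.

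\textbf{Every FUFA denotes an $\omega$-regular language.} Let $\fufa$ be a saturated, decomposition- and power-unambiguous FUFA. By \Cref{lem:fufa-to-nba-polynomial} there is an NBA $\automaton$ with $\UP(\accLanguage(\automaton)) = \accLanguage(\fufa)$, so $L \coloneqq \accLanguage(\automaton)$ is $\omega$-regular. We must check that $\accLanguage_\omega(\fufa)$ is well defined and equals $L$.
\begin{itemize}
\item \emph{Existence:} $L$ is an $\omega$-regular language with $\UP(L) = \accLanguage(\fufa)$.
\item \emph{Uniqueness:} if $L'$ is another $\omega$-regular language with $\UP(L') = \accLanguage(\fufa)$, then $\UP(L) = \UP(L')$, hence $L = L'$ by \cite{UPWROL}.
\end{itemize}
So $\accLanguage_\omega(\fufa) = L$ is $\omega$-regular.

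\textbf{Every $\omega$-regular language has such an FUFA.} Let $E$ be $\omega$-regular. Since UBA capture exactly the $\omega$-regular languages (see the preliminaries), there is a UBA $\automaton$ with $\accLanguage(\automaton) = E$. \Cref{lem:uba-to-fufa} then yields a saturated, decomposition- and power-unambiguous FUFA $\fufa$ with $\accLanguage_\omega(\fufa) = E$.

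\textbf{Where the real work lies.} Neither step above needs new arguments; all the difficulty sits inside the two lemmas.
\begin{itemize}
\item In \Cref{lem:fufa-to-nba-polynomial}, one must show that $M_q N_{q,f}^\omega$ contains exactly the accepted ultimately periodic words. This needs saturation and power-unambiguity: from an accepted $(u,v)$, pass to a power $v^m$ that is a loop on the final state $f$, which exists by pigeonhole on the accepting runs on $v^i$, since these runs are prefixes of one another.
\item In \Cref{lem:uba-to-fufa}, decomposition-unambiguity, power-unambiguity and saturation must be derived from the unambiguity of $\automaton$.
\end{itemize}
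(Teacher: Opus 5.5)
Your proposal is correct and follows the paper's proof, which simply combines \Cref{lem:fufa-to-nba-polynomial} with \Cref{lem:uba-to-fufa}. Your existence/uniqueness check via \cite{UPWROL} only makes explicit the well-definedness of $\accLanguage_\omega(\fufa)$ that the paper takes for granted. One small correction to your commentary: the paper's proof of \Cref{lem:fufa-to-nba-polynomial} does not use saturation, only power-unambiguity, unambiguity of the progress automata and \Cref{lem:calbrix-lemma}, because $\accLanguage(\fufa)$ is defined existentially over accepted decompositions.
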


\begin{figure}[t]
	\centering
	\begin{tikzpicture}[->,>=stealth',shorten >=1pt,auto, semithick]
		\tikzstyle{every node} = [text = black, scale = 1]
		
		\node (nba) {\textbf{NBA}}; 
		\node [left of = nba, node distance = 6cm] (uba) {\textbf{UBA}};
		\node [right of = nba, node distance = 6cm] (dpa) {\textbf{DPA}}; 
		\node [below of = nba, node distance = 2cm] (temp) {}; 
		\node[right of = temp, node distance = 3cm] (fdfa) {\textbf{FDFA}}; 
		\node[left of = temp, node distance = 3cm] (fufa) {\textbf{FUFA}}; 
		
		\path[dash dot] 
		(fdfa) edge [bend left = 5] node {Lem. \ref{lem:fdfa-are-fufa}}(fufa)
		(uba) edge [bend left = 5] (nba)
		; 
		
		\path [dashed]
		(fdfa) edge [bend left = 5] node [xshift = 0.2cm] {\cite{FDFAAORL}} (nba)
		(uba) edge [bend left = 5] node [sloped] {Thm. \ref{lem:uba-to-fufa}} (fufa)
		(fufa) edge [bend left = 5] node [sloped, pos = 0.4] {Lem. \ref{lem:fufa-to-nba-polynomial}} (nba)
		(dpa) edge [bend left = 5] node [xshift = -0.2cm] {\cite{FDFAAORL}} (fdfa)
		(dpa) edge [bend left = 5] (nba)
		;
		
		\path 
		(fufa) edge [bend left = 5] node [sloped, below] {Thm. \ref{thm:fufa-2-uba}} (uba)
		(fufa) edge [bend left = 5] node {Prop. \ref{prop:succinctness-of-fufa}} (fdfa)
		(nba) edge [bend left = 5] node {\cite{FABA}} (uba)
		(nba) edge [bend left = 5] node [xshift = -0.2cm] {\cite{FDFAAORL}} (fdfa)
		(nba) edge [bend left = 5] node {\cite{CMDAIW,OBTOA,NBSADPA}} (dpa)
		(fdfa) edge [bend left = 5] node [xshift = 0.2cm] {\cite{FDFAAORL}} (dpa)
		(fufa) edge [out = 340, in = 250, looseness = 0.9] node[below, pos = 0.6, xshift = 0.2cm, sloped] {Cor. \ref{cor:fufa-exp-smaller-than-deterministic-omega}} (dpa)
		;
		
		\path[gray]
		(nba) edge [bend left = 5] node[xshift = -0.1cm, yshift = 0.1cm] {\textcolor{gray}{?}} (fufa)
		; 
	\end{tikzpicture}
	\vspace{-0.3cm}
	\caption{Overview of lower bounds for the state complexity of translations between FUFA, FDFA and different types of $\omega$-automata. Solid arrows indicate an inevitable exponential blow-up, dashed arrows denote polynomial translations, and dash-dotted arrows are used for constant translations.}
	\label{fig:succinctness-overview}
\end{figure}

\Cref{fig:succinctness-overview} summarizes the  succinctness results for FUFA compared to FDFA and other types of $\omega$-automata. In the figure, \textbf{FUFA} stands for the class of saturated, decomposition- and power-unambiguous FUFA, and \textbf{FDFA} stands for the class of saturated FDFA. \textbf{DPA} stands for the class of \emph{deterministic parity automata} (\emph{DPA}), which is another type of deterministic $\omega$-automata that characterizes the set of $\omega$-regular languages. By first using a polynomial translation from DPA to saturated FDFA \cite{FDFAAORL} and afterwards interpreting the resulting FDFA as an FUFA (\Cref{lem:fdfa-are-fufa}), we obtain a polynomial translation from DPA to FUFA. Moreover, the exponential lower bound for translating FUFA to DPA does, by \Cref{cor:fufa-exp-smaller-than-deterministic-omega}, also hold for all other types of deterministic $\omega$-automata. The only missing lower bound is that for translating NBA to FUFA. We expect this translation to come with an inevitable exponential blow-up, but leave a formal proof of this claim for future work. 

\subsection{Boolean Operations and Decision Procedures for FUFA}
Let $\fufa = (\lfufa, \{\pfufa^q\}_{q \in \mstates})$ be an FUFA of size $(n,k)$. The \emph{membership problem} for $\fufa$ and a given $(u,v) \in \Sigma^{*+}$ asks if $(u,v) \in \llbracket \fufa \rrbracket$. An algorithm that solves this problem first computes $\mstates_u$, then checks for all $q \in \mstates_u$ if $(u,v)$ is $q$-normalized, and afterwards decides if $v \in \accLanguage(\pfufa^q)$ for any such $q$. All of these operations can be realized by running (slightly modified versions of) $\lfufa$ on $u$ and some of the progress automata of $\fufa$ on $v$, yielding the following result. 
\begin{restatable}{lemma}{LemMembershipFUFA}
	Given an FUFA $\fufa$ and a decomposition $(u,v) \in \Sigma^{*+}$, the membership problem for $\fufa$ and $(u,v)$ can be solved in time polynomial in the sizes of $\fufa$, $\vert u \vert$ and $\vert v \vert$. 
\end{restatable}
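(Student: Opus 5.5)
The membership check follows \Cref{def:normalized-acceptance} directly: $(u,v) \in \llbracket \fufa \rrbracket$ iff there is a $q \in \mstates_u$ with $q \in \delta^{\lfufa}(q,v)$ and $v \in \accLanguage(\pfufa^q)$. Each of the three ingredients is a reachability question on a finite automaton over a fixed finite word. Each can be answered by the standard subset simulation, which keeps the set of currently reachable states and never enumerates runs.

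\textbf{Step 1.} Compute $\lfufa(u) = \delta^{\lfufa}(\ainit, u)$ by subset simulation. Start with $\{\ainit\}$. For each letter $a$ of $u$, replace the current set $X$ by $\bigcup_{p \in X} \delta^{\lfufa}(p,a)$. With $n = \vert \lfufa \vert$, each step touches at most $n$ states and at most $n^2$ transitions, so this costs $\mathcal{O}(\vert u \vert \cdot n^2)$. Intersecting with $\mstates$ gives $\mstates_u$ in additional time $\mathcal{O}(n)$.

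\textbf{Step 2.} For each $q \in \mstates_u$, test $q$-normalization by computing $\delta^{\lfufa}(q, v)$. This is the same subset simulation, started in $\{q\}$ and run on $v$; then check whether $q$ belongs to the result. There are at most $n$ candidates, so the total cost is $\mathcal{O}(n \cdot \vert v \vert \cdot n^2)$.

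\textbf{Step 3.} For each $q$ that passed Step 2, decide whether $v \in \accLanguage(\pfufa^q)$. Compute $\pfufa^q(v)$ by subset simulation and check whether it meets $\afinal^{\pfufa^q}$. With $k = \max_{q \in \mstates} \vert \pfufa^q \vert$, each such check costs $\mathcal{O}(\vert v \vert \cdot k^2)$. Answer ``yes'' iff some $q$ passes both Step 2 and Step 3.

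\textbf{Running time and correctness.} The total is $\mathcal{O}(\vert u \vert n^2 + \vert v \vert n (n^2 + k^2))$, which is polynomial in $\vert \fufa \vert$, $\vert u \vert$ and $\vert v \vert$. Correctness is immediate from \Cref{def:normalized-acceptance}, since the algorithm checks precisely the existential condition stated there.

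There is no real obstacle here. The only point to watch is that nondeterminism in $\lfufa$ and in the $\pfufa^q$ must be handled by tracking sets of reachable states rather than enumerating runs, because the number of runs can be exponential in $\vert u \vert$ and $\vert v \vert$. Unambiguity, saturation and the other properties of $\fufa$ are not needed for this argument.
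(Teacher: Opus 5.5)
Your proposal is correct and follows the paper's proof essentially step by step: compute $\mstates_u$ by running $\lfufa$ on $u$, test $q$-normalization via $q \in \delta^{\lfufa}(q,v)$ (the paper phrases this as membership of $v$ in $\lfufa$ with $q$ as unique initial and final state), and then test $v \in \accLanguage(\pfufa^q)$ for each remaining candidate. Your explicit subset-simulation bounds and your remark that unambiguity and saturation play no role make precise the runtime argument that the paper only sketches.
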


Next, we consider the intersection operation for FUFA, i.e., we discuss how to obtain, from given FUFA $\fufa_1$ and $\fufa_2$, an FUFA $\fufaH$ with $\llbracket \fufaH \rrbracket = \llbracket \fufa_1 \rrbracket \cap \llbracket \fufa_2 \rrbracket$. Such an $\fufaH$ can be constructed by taking as leading automaton the parallel product of $\lfufa_1$ and $\lfufa_2$, setting $\mstates = \mstates_1 \times \mstates_2$, and defining $ \pfufa^{\langle q_1, q_2\rangle}$ as the parallel product of $\pfufa^{q_1}$ and $\pfufa^{q_2}$ for all $\langle q_1, q_2 \rangle\in \mstates$. \unskip
\begin{restatable}{lemma}{LemIntersectionFUFAViaDecompositionSets}\label{lem:intersection-fufa-decompositions}
	Let $\fufa_i = (\lfufa_i, \{\pfufa_i^q\}_{q \in \mstates_i})$ for $i \in \{1,2\}$ be two FUFA. Then there is an FUFA $\fufaH$ of size $(\vert \lfufa_1 \vert \cdot \vert \lfufa_2 \vert, \max_{\langle q_1,q_2\rangle \in \mstates_1 \times \mstates_2} \vert \pfufa_1^{q_1} \vert \cdot \vert \pfufa_2^{q_2} \vert)$ such that $\llbracket \fufaH \rrbracket = \llbracket \fufa_1 \rrbracket \cap \llbracket \fufa_2 \rrbracket$. Moreover, $\fufaH$ can be constructed such that saturation, decomposition-unambiguity and power-unambiguity are preserved, i.e., such that if both $\fufa_1$ and $\fufa_2$ have any of these properties, $\fufaH$ does as well.
\end{restatable}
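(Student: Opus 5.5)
The plan is to use the product construction sketched before the statement and to verify its properties one at a time. Let $\fufaH = (\lfufa, \{\pfufa^{\langle q_1,q_2\rangle}\}_{\langle q_1,q_2\rangle \in \mstates})$ be defined as follows:
\begin{itemize}
	\item $\lfufa = \lfufa_1 \times \lfufa_2$ is the synchronous product NFA, with initial state $\langle \ainit^{\lfufa_1}, \ainit^{\lfufa_2}\rangle$ and $\delta^{\lfufa}(\langle p_1,p_2\rangle, a) = \delta^{\lfufa_1}(p_1,a) \times \delta^{\lfufa_2}(p_2,a)$.
	\item $\mstates = \mstates_1 \times \mstates_2$.
	\item $\pfufa^{\langle q_1,q_2\rangle} = \pfufa_1^{q_1} \times \pfufa_2^{q_2}$, with final states $\afinal^{\pfufa_1^{q_1}} \times \afinal^{\pfufa_2^{q_2}}$.
\end{itemize}
The size bound can be read off directly. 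The progress automata are UFA: an accepting run of the product on $v$ projects componentwise to accepting runs of $\pfufa_1^{q_1}$ and $\pfufa_2^{q_2}$ on $v$. These projections are unique, and a product run is determined by its two projections. By induction on $|w|$ we have $\lfufa(w) = \lfufa_1(w) \times \lfufa_2(w)$ and $\delta^{\lfufa}(\langle q_1,q_2\rangle, v) = \delta^{\lfufa_1}(q_1,v) \times \delta^{\lfufa_2}(q_2,v)$. It follows that $\mstates_u = (\mstates_1)_u \times (\mstates_2)_u$, and that $(u,v)$ is $\langle q_1,q_2\rangle$-normalized in $\fufaH$ iff it is $q_i$-normalized in $\fufa_i$ for $i=1,2$.

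Next I show $\llbracket \fufaH \rrbracket = \llbracket \fufa_1 \rrbracket \cap \llbracket \fufa_2 \rrbracket$. For ``$\subseteq$'', suppose $\fufaH$ accepts $(u,v)$ via $\langle q_1,q_2\rangle$. Projecting the normalization and the accepting run shows that $\fufa_i$ accepts $(u,v)$ via $q_i$. For ``$\supseteq$'', pick witnesses $q_1$ for $\fufa_1$ and $q_2$ for $\fufa_2$; these may be chosen independently. Then $\langle q_1,q_2\rangle \in \mstates_u$, the decomposition $(u,v)$ is $\langle q_1,q_2\rangle$-normalized, and pairing the two accepting runs gives an accepting run of $\pfufa^{\langle q_1,q_2\rangle}$ on $v$.

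For preservation, assume the relevant property holds for both $\fufa_i$.

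\emph{Saturation.} Let $w$ be ultimately periodic and let $(u,v)$ be a normalized decomposition of $w$ that $\fufaH$ accepts. Let $(x,y)$ be any other $\fufaH$-normalized decomposition of $w$. Then $(u,v) \in \llbracket \fufa_i \rrbracket$ for both $i$. Also, $(x,y)$ is $\fufa_i$-normalized for both $i$, by the componentwise characterization above. Saturation of $\fufa_i$ gives $(x,y) \in \llbracket \fufa_i \rrbracket$ for $i=1,2$, and the equality of decomposition sets gives $(x,y) \in \llbracket \fufaH \rrbracket$. Hence $\fufaH$ accepts either every or none of the normalized decompositions of $w$.

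\emph{Decomposition-unambiguity.} Suppose two distinct pairs $\langle q_1,q_2\rangle \neq \langle q_1',q_2'\rangle$ both normalize and accept $(u,v)$. They differ in some component $i$, and then $q_i \neq q_i'$ both normalize and accept $(u,v)$ in $\fufa_i$, which contradicts decomposition-unambiguity of $\fufa_i$.

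\emph{Power-unambiguity.} Let $(u,v) \in \llbracket \fufaH \rrbracket$ be $\langle q_1,q_2\rangle$-normalized with $v \in \accLanguage(\pfufa^{\langle q_1,q_2\rangle})$, and let $\langle f_1,f_2\rangle$ be a final state reached on $v$. Then each $f_i$ is final and lies in $\pfufa_i^{q_i}(v)$. Moreover, $(u,v) \in \llbracket \fufa_i \rrbracket$ is $q_i$-normalized with $v \in \accLanguage(\pfufa_i^{q_i})$. So power-unambiguity of $\fufa_i$ yields, for every $n \geq 1$, a final state $g_i \in \delta^{\pfufa_i^{q_i}}(f_i, v^n)$. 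Since the product transition function is the Cartesian product of the component transition functions, $\langle g_1,g_2\rangle \in \delta(\langle f_1,f_2\rangle, v^n)$, and this state is final.

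The step I expect to need the most care is saturation. Acceptance in $\fufaH$ requires a single pair that both normalizes and accepts, while saturation of $\fufa_i$ only provides some accepting state $q_i'$, possibly different from the state used for normalization. The independent choice of witnesses in the proof of $\llbracket \fufaH \rrbracket = \llbracket \fufa_1 \rrbracket \cap \llbracket \fufa_2 \rrbracket$ resolves this: it lets me recombine the two witnesses $q_1'$ and $q_2'$ into the pair $\langle q_1',q_2'\rangle$. So I would establish that equality first and derive saturation from it.
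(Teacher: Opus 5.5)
Your proposal is correct and follows the paper's proof: the same synchronous-product construction, the projection argument showing the progress automata are unambiguous, the componentwise proof of $\llbracket \fufaH \rrbracket = \llbracket \fufa_1 \rrbracket \cap \llbracket \fufa_2 \rrbracket$, and the derivation of saturation from that equality and of power-unambiguity from the product transition function. Your decomposition-unambiguity step, which uses that the two pairs differ in at least one component, is slightly more careful than the paper's version, which implicitly assumes both components differ.
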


Both the membership problem and the intersection operation described above consider (accepted) \emph{decompositions}. Extending the results to ultimately periodic words $w \in \Sigma^\omega$, i.e., to deciding if $w \in \accLanguage(\fufa)$ or constructing an FUFA $\fufaH$ such that $\accLanguage(\fufaH) = \accLanguage(\fufa_1) \cap \accLanguage(\fufa_2)$, is not straightforward. In the case of membership, an algorithm that decides if $w \in \accLanguage(\fufa)$ has to find a normalized decomposition $(u,v)$ of $w$ that is accepted by $\fufa$. Without any information on, e.g., the form of the period of $w$ or the length of its prefix, it is not clear how such a decomposition can be found algorithmically. 
Regarding intersection, situations in which $\llbracket \fufa_1 \rrbracket \cap \llbracket \fufa_2 \rrbracket = \emptyset$ but $\accLanguage(\fufa_1) \cap \accLanguage(\fufa_2) \neq \emptyset$ are challenging. An example for this is given in \Cref{fig:intersection-language-level-difficult}, where $\accLanguage(\fufa_1) = \accLanguage(\fufa_2) = \{a^\omega\}$, but $\llbracket \fufa_1 \rrbracket \cap \llbracket \fufa_2 \rrbracket = \emptyset$. Because acceptance of FUFA is defined w.r.t. decompositions $(u,v)$, it is not clear how to obtain an FUFA for $\accLanguage(\fufa_1) \cap \accLanguage(\fufa_2)$ from $\fufa_1$ and $\fufa_2$ in this case. 
We are not aware of any available methods that solve the corresponding problems for the easier case of (saturated) FDFA.

\begin{figure}[t]
	\centering
	\begin{tikzpicture}[->,>=stealth',shorten >=1pt,auto, semithick]
		\tikzstyle{every node} = [text = black, scale = 0.85]
		
		\node[state] (epsl) [] {$p_1$}; 
		\node[state, draw,rectangle] (al) [right of = epsl, node distance = 2.5cm] {$p_2$}; 
		\node (initl) [left of = epsl, node distance = 1cm] {}; 
		
		\path
		(initl) edge (epsl)
		(epsl) edge [bend left = 10] node {$a$} (al)
		(al) edge [bend left = 10] node {$a$} (epsl)
		;
		
		\node[state, draw,rectangle] (epsm) [right of = al, node distance = 2.5cm] {$q_1$}; 
		\node[state] (am) [right of = epsm, node distance = 2.5cm] {$q_2$}; 
		\node (initm) [left of = epsm, node distance = 1cm] {}; 
		
		\path
		(initm) edge (epsm)
		(epsm) edge [bend left = 10] node {$a$} (am)
		(am) edge [bend left = 10] node {$a$} (epsm)
		;
		
		\node[state] (epsr) [right of = am, node distance = 2.5cm] {$\varepsilon$}; 
		\node[state, accepting] (ar) [right of = epsr, node distance = 2.5cm] {$a$}; 
		\node (initr) [left of = epsr, node distance = 1cm] {}; 
		
		\path 
		(initr) edge (epsr)
		(epsr) edge node {$a$} (ar)
		(ar) edge [loop right] node {$a$}  (ar)
		;
		
		\node[above of = initl, node distance = 0.5cm, scale=1.2] {$\lfufa_1$}; 
		\node[above of = initm, node distance = 0.5cm, scale=1.2] {$\lfufa_2$}; 
		\node[above of = initr, node distance = 0.5cm, scale=1.2] {$\pfufa_1^{p_2} ; \pfufa_2^{q_1}$};

	\end{tikzpicture}
	\caption{Two FUFA for $\{a^\omega\}$ without a decomposition of $a^\omega$ accepted by both.}
	\label{fig:intersection-language-level-difficult}
\end{figure}

To conclude this section we discuss the challenges that arise in the search for the precise complexities of other decision problems and Boolean operations for FUFA.
\begin{remark}\label{rem:complexity-union-complementation-universality-fufa}
	For saturated FDFA the operations of union and complementation can be done efficiently \cite{FDFAAORL}. Since the precise complexity of these operations is, to the best of our knowledge, not yet known for UFA \cite{SLBSNDCUA,LBUACC}, we leave the analysis of the complexity of the corresponding operations on FUFA for future work. 
	Furthermore, deciding universality, language containment and language equality can be done efficiently for saturated FDFA \cite{FDFAAORL}. While these problems are in $\mathsf{P}$ for UFA \cite{ECPURERGFA,UAT}, their precise complexities in the case of UBA are long-standing open problems \cite{MCUA}. Because of the polynomial-time translation from UBA to saturated, decomposition- and power-unambiguous FUFA (\Cref{lem:uba-to-fufa}), the corresponding problems for (this type of) FUFA are at least as hard as for UBA.
\end{remark}
\section{Model Checking DTMCs against dFUFA-Specifications}\label{sec:dfufa-mc}
In this section we take the first step towards an efficient probabilistic model checking algorithm that uses saturated, decomposition- and power-unambiguous FUFA for the representation of $\omega$-regular properties. More precisely, we prove the existence of such an algorithm for the subclass of \emph{deterministic} FUFA (\emph{dFUFA}), which are FUFA with a deterministic leading automaton. The notion of dFUFA lies between FDFA and the families of nondeterministic finite automata (FNFA) defined in \cite{SPFA}, which require deterministic leading automata but allow progress NFA. Since every FDFA can be interpreted as a dFUFA, the set of saturated, decomposition- and power-unambiguous dFUFA characterizes the set of $\omega$-regular languages. Moreover, the determinism of the leading automaton $\lfufa$ of a dFUFA yields that $\lfufa(u)$ is a singleton for every $u \in \Sigma^*$. Therefore, dFUFA are always decomposition-unambiguous, and we usually do not explicitly mention decomposition-unambiguity in the context of dFUFA. 

\smallskip

Let $\dtmc$ be a DTMC, $\fufa = (\lfufa, \{\pfufa^q\}_{q \in \mstates})$ a saturated and power-unambiguous dFUFA, and $\automaton$ a DRA with $\accLanguage_\omega(\fufa) = E = \accLanguage(\automaton)$. W.l.o.g. we assume that the alphabet of all automata considered in this section equals the set of states $\stateSpace$ of $\dtmc$, i.e., that $\Sigma = \stateSpace$. This assumption is common in the literature \cite{MCMCAUBA,MCUA} and ensures that the product of $\dtmc$ with any progress UFA $\pfufa^q$ remains unambiguous if interpreted as an automaton with final states in $\stateSpace \times \afinal^{\pfufa^q}$. 

Our goal is to compute $\probMeasure^{\dtmc}(E)$ in time polynomial in the sizes of $\dtmc$ and $\fufa$. We start by observing that, because the leading automaton $\lfufa$ of $\fufa$ is deterministic, the residual languages of words reaching the same state in $\lfufa$ coincide. This is similar to the case of saturated FDFA considered in \Cref{lem:residual-languages-coincide} and can be proved analogously.  
\begin{corollary}\label{cor:residual-language-for-dfufa}
	Let $x, y \in \Sigma^*$ and let $\lfufa$ be the leading automaton of a dFUFA $\fufa$ for $E$. If $\lfufa(x) = \lfufa(y)$ then $\resLang_x = \resLang_y$.
\end{corollary}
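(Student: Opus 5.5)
The argument follows the one sketched for \Cref{lem:residual-languages-coincide}: first show that $\resLang_x$ and $\resLang_y$ have the same ultimately periodic words, then conclude equality from $\omega$-regularity. Fix $x,y$ with $\lfufa(x) = \lfufa(y) = \{p\}$; since $\lfufa$ is deterministic this is a single state, and we may assume $\lfufa$ is total. The first step is to show that for every $(u,v) \in \Sigma^{*+}$ we have $xuv^\omega \in E$ iff $yuv^\omega \in E$. It suffices to prove one direction, since the other is symmetric. So suppose $xuv^\omega \in E = \accLanguage_\omega(\fufa)$, i.e.\ $xuv^\omega \in \accLanguage(\fufa)$.

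\textbf{Normalizing the two decompositions.} Because $\lfufa$ is a finite deterministic automaton, the sequence $\delta^\lfufa(p, uv^i)$ is eventually periodic. Hence there are $i \geq 0$ and $k \geq 1$ such that both $(xuv^i, v^k)$ and $(yuv^i, v^k)$ satisfy $\lfufa(xuv^i) = \lfufa(xuv^{i+k})$, and likewise for the $y$-version. We have $\lfufa(xuv^i) = \delta^\lfufa(p,uv^i) = \lfufa(yuv^i)$; call this state $q$. The two decompositions represent $xuv^\omega$ and $yuv^\omega$, respectively, and they reach the same leading state $q$ with the same period $v^k$.

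\textbf{The main step.} We need an accepted normalized decomposition to transfer, and this requires $q \in \mstates$. By saturation, and because $xuv^\omega \in \accLanguage(\fufa)$, some normalized decomposition $(u',v')$ of $xuv^\omega$ is accepted. Saturation then says every normalized decomposition of $xuv^\omega$ is accepted. So if $q \in \mstates$, then $(xuv^i, v^k)$ is $q$-normalized, hence accepted, hence $v^k \in \accLanguage(\pfufa^q)$. Since $(yuv^i,v^k)$ is also $q$-normalized with $q\in\mstates_{yuv^i}$, it is accepted, and therefore $yuv^\omega \in E$.

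The obstacle is the case $q \notin \mstates$. Here I would realign the decompositions. The accepted $(u',v')$ is $q'$-normalized for some $q' \in \mstates$. Writing both $(u',v')$ and $(xuv^i,v^k)$ as decompositions of the same word, I can pass to a common refinement $(xuv^j, v^{m})$ with $j \geq i$, $|xuv^j| \geq |u'|$, and $m$ a multiple of $k$ and of a suitable power of $|v'|$ relative to the period. Because $\lfufa$ is deterministic, running from $u'$ along powers of $v'$ returns to $q'$. Choosing $j$ so that the prefix $xuv^j$ equals $u'v'^{l}$ followed by a suffix aligned with the cycle through $q'$, I get $\lfufa(xuv^j) = q'$. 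Equivalently, I can pick a rotation: replace $(u,v)$ by $(u\, v_1, v_2 v_1)$ where $v = v_1v_2$, which leaves $uv^\omega$ unchanged, so that the leading state on the cycle is $q'$. With this choice the previous paragraph applies with $q'$ in place of $q$. The remaining care is to make this alignment rigorous: every state on the eventual cycle of the leading run is hit at some prefix length, and rotating the period reaches any such position.

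\textbf{Conclusion.} This gives $\UP(\resLang_x) = \UP(\resLang_y)$. Both residuals are $\omega$-regular, being residuals of the $\omega$-regular $E$ (\Cref{thm:expressive-power-fufa}). By \cite{UPWROL} they are therefore equal.
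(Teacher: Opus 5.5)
Your proof is correct and takes essentially the paper's route. The paper proves this corollary ``analogously'' to \Cref{lem:residual-languages-coincide}: normalize a decomposition of $xuv^\omega$ whose prefix extends $x$, use saturation to get acceptance, move it to the prefix $y$ via the common leading state, and conclude from $\UP(\resLang_x)=\UP(\resLang_y)$ and \cite{UPWROL}. The paper phrases this through an auxiliary family rooted at $p$, which you bypass.

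Your rotation step handles a point that the paper's ``analogously'' leaves implicit: $\lfufa(xuv^i)$ may lie outside $\mstates$, and no prefix of the form $xuv^j$ need reach $q'$. What does work is a rotated prefix $xuv^jv_1$ at an occurrence of $q'$ on the eventual cycle of the deterministic leading run. Your intermediate claim that some $\lfufa(xuv^j)=q'$ is false in general, so drop it and keep only the rotation.
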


Because $\lfufa$ is deterministic, the product $\dtmc \otimes \lfufa$ is again a DTMC, and so $\dtmc$ almost surely generates a path whose lifting to $\dtmc \otimes \lfufa$ reaches a BSCC  \cite{PoMC}. As in the FDFA-setting, the computation of $\probMeasure^\dtmc(E)$ boils down to the computation of reachability probabilities of specific BSCCs in $\dtmc \otimes \lfufa$. Recall that, in contrast to FDFA, a (d)FUFA requires the existence of progress automata $\pfufa^q$ only for states $q \in \mstates$. A given BSCC of $\dtmc \otimes \lfufa$ is, however, not guaranteed to contain any $\langle s, q \rangle$ with $q \in \mstates$. Since the acceptance condition of (d)FUFA is only defined for decompositions $(u,v) \in \Sigma^{*+}$ with $\mstates_u \neq \emptyset$, it is not immediately clear how to handle these BSCCs in a model checking algorithm. Fortunately, however, once a BSCC without a states $\langle s, q \rangle$ with $q \in \mstates$ is entered, the probability of $\dtmc$ to generate a path in $E$ equals $0$. Thus, these BSCCs do not play a role in the computation of $\probMeasure^{\dtmc}(E)$.\unskip 
\begin{restatable}{lemma}{LemBSCCWithoutAState}\label{lem:no-A-state-prob-0}
	Let $u \in \Paths^*(\dtmc)$ such that the lifting of $u$ to $\dtmc \otimes \lfufa$ reaches a BSCC $B_\lfufa$ that does not contain a $\langle s, q \rangle \in B_\lfufa$ with $q \in\mstates$. Then the probability of $\dtmc$ to generate a path that satisfies $E$ and has prefix $u$ equals $0$. 
\end{restatable}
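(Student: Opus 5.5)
Our plan is to compare with the auxiliary DRA $\automaton$ with $\accLanguage(\automaton) = E$, in the same way as in the FDFA setting. We show that, conditioned on the prefix $u$, almost every path of $\dtmc$ is eventually trapped in a BSCC of $\dtmc \otimes \automaton$ that is rejecting. Since all its states are then visited infinitely often almost surely, \Cref{lem:accepting-and-rejecting-bscc-dra-product} shows that these paths violate $E$.

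First, almost surely $\dtmc$ extends $u$ by some $v$ such that the lifting of $uv$ to $\dtmc \otimes \automaton$ ends in a BSCC $B_\automaton$. After that, all states of $B_\automaton$ are visited infinitely often, again almost surely. The lifting to $\dtmc \otimes \lfufa$ stays in $B_\lfufa$ throughout, because $B_\lfufa$ is bottom. It therefore suffices to show that every BSCC $B_\automaton$ reachable in this way is rejecting. There are only finitely many such BSCCs, so the null sets can be combined.

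Suppose, towards a contradiction, that $B_\automaton$ is accepting. Pick a finite cycle in $\dtmc \otimes \automaton$ through a state of $B_\automaton$ that visits every state of $B_\automaton$. The projection $w$ of this cycle to $\dtmc$ is a cycle from the last state of $uv$ back to itself. By \Cref{lem:accepting-and-rejecting-bscc-dra-product}, $x \coloneqq \trace(uv)\,\trace(w)^\omega \in E$.

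Now read $x$ in the deterministic leading automaton $\lfufa$. The lifting of $uv\,w^j$ to $\dtmc \otimes \lfufa$ stays in $B_\lfufa$ for every $j$, so every state $\lfufa(\trace(uvw^j))$ lies outside $\mstates$. Since $\lfufa$ is deterministic, $\mstates_{y} = \lfufa(y) \cap \mstates$ is then empty for every finite prefix $y$ of $x$ of length at least $|uv|$. Any decomposition $(y,z)$ of $x$ with $|y| < |uv|$ can be shifted to a longer one. Concretely, $(y z^m, z)$ also decomposes $x$, and normalization at $q$ forces $q \in \lfufa(yz^m)$ for all $m$. Along $x$ we only ever see states of $B_\lfufa$ from position $|uv|$ onwards (the alphabet is $\stateSpace$, so the state of $\dtmc$ is tracked). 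Hence such a $q$ would have to lie in $B_\lfufa \vert_\lfufa$, which contains no state of $\mstates$. Consequently $x$ has no normalized decomposition, so $x \notin \accLanguage(\fufa) = \UP(E)$. But $x$ is ultimately periodic and lies in $E$, a contradiction. Hence $B_\automaton$ is rejecting, and the conditional probability of $E$ given $u$ is $0$.

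We expect the main care to be needed at the normalization argument. It must be made precise that a $q$-normalized decomposition $(y,z)$ of $x$ yields a $q \in \mstates$ visited by $\lfufa$ along $x$ arbitrarily late, namely after $yz^m$ for every $m$, and hence inside $B_\lfufa$. This uses that $\lfufa$ is deterministic and that from position $|uv|$ on the product run of $\lfufa$ on $x$ stays in $B_\lfufa$. As an alternative route, one could use \Cref{cor:residual-language-for-dfufa} together with the residual-language reasoning after \Cref{lem:residual-languages-coincide}. That would show that the conditional probability is $0$ or $1$, and then exclude $1$ by the same argument on an ultimately periodic witness.
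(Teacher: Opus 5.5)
Your proposal is correct and follows essentially the paper's proof. Both arguments assume that an accepting BSCC of $\dtmc \otimes \automaton$ is reachable after $u$ and build the ultimately periodic witness $uvw^\omega \in E$. Both then conclude that an accepted $q$-normalized decomposition would force a state $q \in \mstates$ into $B_\lfufa$. Your uniform shift to $(yz^m, z)$ merely replaces the paper's case split on whether $u$ is a prefix of the decomposition's prefix or vice versa.
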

\begin{proof}[Proof Sketch.]
	Let $\automaton$ be a DRA with $\accLanguage(\automaton) = E$ and assume that an accepting BSCC $B_\automaton$ is reachable in $\dtmc \otimes \automaton$ on some extension $uv$ of $u$. Because $B_\automaton$ is a BSCC there is a $z$ that extends $uv$ and closes a cycle that visits all states of $B_\automaton$. \Cref{lem:accepting-and-rejecting-bscc-dra-product} implies $uvz^\omega \in \accLanguage(\automaton)$, so also $uvz^\omega \in \accLanguage(\fufa)$. But then there must be a $\langle s', q' \rangle \in B_\lfufa$ with $q' \in \mstates$, a contradiction. 
	Hence, every BSCC reachable in $\dtmc \otimes \automaton$ after $u$ is rejecting, implying the claim.
\end{proof}

Let $\langle s, q \rangle$ be a state in a BSCC of $\dtmc \otimes \lfufa$ with $q \in \mstates$ and let again $I_{\langle s, q \rangle} = \dtmc_s \times \lfufa_q \times \pfufa^q$. Since $\pfufa^q$ is a UFA, $I_{\langle s, q \rangle}$ is not guaranteed to be a DTMC. Additionally, let $\automatonB(s,q)$ be a UBA with the same structure as $I_{\langle s, q \rangle}$ and $\afinal^{\automatonB(s,q)} = \{\langle s, q, f \rangle \mid f \in \afinal^{\pfufa^q}\}$. 
The UBA $\automatonB(s,q)$ can be used to decide if BSCCs in $\dtmc \otimes \automaton$ are accepting. 
\begin{restatable}{lemma}{LemCharacterizationGoodBSCCinDFUFA}\label{lem:characterization-bscc-dfufa}
	Let $u\in \Paths^*(\dtmc)$ such that the lifting of $u$ to $\dtmc \otimes \automaton$ reaches a BSCC $B_\automaton$, and the lifting of $u$ to  $\dtmc \otimes \lfdfa$ reaches a BSCC $B_\lfdfa$. Let $\langle s, q \rangle$ be the state reached on $u$ in $B_\lfufa$, and assume that $q \in \mstates$. Then $B_\automaton$ is accepting iff $\probMeasure^{\dtmc_s}(\accLanguage(\automatonB(s,q))) = 1$.
\end{restatable}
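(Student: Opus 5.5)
We mirror the proof of \Cref{lem:accepting-bscc-iff-good-bscc}: both directions are proved by contradiction, using the saturation of $\fufa$ and the characterization of accepting BSCCs in \Cref{lem:accepting-and-rejecting-bscc-dra-product}. The only new ingredient is that $I_{\langle s,q\rangle}$ is no longer a DTMC, so being good must be read off from the UBA $\automatonB(s,q)$. Two observations about $\automatonB(s,q)$ are used throughout.

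\textbf{Acceptance means accepted decompositions.} Accepting states of $\automatonB(s,q)$ carry $s$ and $q$ in their first two components. Hence every finite prefix $v$ of a path of $\dtmc_s$ on which $\automatonB(s,q)$ sits in an accepting state satisfies: $v$ returns $\dtmc$ to $s$, $\lfufa$ maps $q$ to $q$ on $v$, and $v \in \accLanguage(\pfufa^q)$. Thus $(\trace(u),\trace(v)) \in \llbracket\fufa\rrbracket$, and in particular $\trace(u)\trace(v)^\omega \in \accLanguage_\omega(\fufa)=E$.

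\textbf{Reduction to a finite-cycle statement.} Since $\Sigma=\stateSpace$, the product of $\dtmc_s$ with $\automatonB(s,q)$ is unambiguous. By the standard theory for UBA-model checking \cite{MCUA}, $\probMeasure^{\dtmc_s}(\accLanguage(\automatonB(s,q)))<1$ holds iff there is a finite path $w$ from $s$ that has positive measure and is a ``rejecting witness''. Concretely, this means no extension of $w$ ever reaches an accepting state along a run consistent with $w$ \emph{and} accepting infinitely often. I would try to avoid invoking that theory by working with periodic words directly, as follows.

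\emph{($\Rightarrow$)} Let $B_\automaton$ be accepting and suppose the probability is $<1$. Then, with positive probability, a path $\rho$ of $\dtmc_s$ is not accepted by $\automatonB(s,q)$. Almost surely, the lifting of $u\rho$ stays in $B_\automaton$ and $B_\lfufa$ and visits every state of both infinitely often. Pick such a $\rho$. Cut $\rho$ at a position where the lifting is back at $\langle s,q\rangle$ in $\dtmc\otimes\lfufa$, and choose the cut late enough that the cycle $z$ passes through all states of $B_\automaton$ and ends at the start state of $B_\automaton$. Then $\trace(u)\trace(z)^\omega\in E$ by \Cref{lem:accepting-and-rejecting-bscc-dra-product}.

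By saturation, the normalized decomposition $(\trace(u),\trace(z^n))$ is accepted for every $n$, so $\pfufa^q$ accepts $z^n$. Power-unambiguity makes the accepting runs on $z^1,z^2,\dots$ prefixes of one another. Their limit is therefore an accepting run of $\automatonB(s,q)$ on $z^\omega$.

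To contradict non-acceptance of $\rho$ itself, we must choose the cut so that a prefix of $\rho$ is of the form $z$ with $\rho$ continuing as $z^\omega$. This is impossible for a generic $\rho$. So instead I argue in terms of measure: the set of paths of $\dtmc_s$ that admit no extension to an accepted periodic continuation would have positive measure, and cylinder-wise this leads to a finite rejecting prefix $w$ after which no accepting state of $\automatonB(s,q)$ is reachable. The cycle $wz'$, with $z'$ returning to the start and covering $B_\automaton$, then gives an accepted normalized decomposition whose run in $\pfufa^q$ must pass through $w$. By unambiguity and power-unambiguity, the run on $(wz')^2$ extends the run on $wz'$, which contradicts $w$ being rejecting.

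\emph{($\Leftarrow$)} If $B_\automaton$ is rejecting, take any cycle $z$ as above. By \Cref{lem:accepting-and-rejecting-bscc-dra-product}, $\trace(u)\trace(z)^\omega\notin E$, so by the first observation no prefix $z^n$ is accepted. Since $z$ occurs with positive probability infinitely often as a pattern, almost all paths fail to be accepted. More simply, probability $1$ would force some $z^n$, or a cycle through $z$, to be accepted, giving a word of $E$ lying in $B_\automaton$ and contradicting \Cref{lem:accepting-and-rejecting-bscc-dra-product} together with \Cref{cor:residual-language-for-dfufa}.

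\textbf{Main obstacle.} The hard step is the measure-to-finite-witness step in ($\Rightarrow$). I expect to need the characterization from \cite{MCUA} that, for an unambiguous product, acceptance probability $<1$ yields a finite positive-probability prefix from which acceptance has probability $0$. Combined with power-unambiguity, this prefix is what forces the contradiction.
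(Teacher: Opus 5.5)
\textbf{The gap is where you locate it, in ($\Rightarrow$).} You need to pass from $\probMeasure^{\dtmc_s}(\accLanguage(\automatonB(s,q)))<1$ to a finite prefix $w$ after which no accepting state of $\automatonB(s,q)$ is reachable (or, as in your reduction paragraph, whose cylinder contains no accepted path). For unambiguous B\"uchi automata no measure or cylinder argument gives this, and the characterization you attribute to the UBA literature is false in that form. A counterexample: take the DTMC with states $a,b$, each moving to $a$ and to $b$ with probability $\frac12$. Take the UBA with states $i,r,t$, initial state $i$, $\afinal=\{i,t\}$, and transitions $\delta(i,a)=\{i\}$, $\delta(r,a)=\{r\}$, $\delta(i,b)=\delta(r,b)=\{r,t\}$, $\delta(t,a)=\{t\}$, $\delta(t,b)=\emptyset$. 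It guesses the last $b$ and unambiguously accepts the words with finitely many $b$. After every finite prefix $w$ an accepting state is reachable (read one more $b$), and $wa^\omega$ is an accepted path. Yet the acceptance probability is $0$.

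What probability $<1$ does give (e.g.\ via a deterministic automaton for the same language) is only a positive-measure prefix $w$ after which the \emph{conditional} acceptance probability is $0$. With such a $w$ your final step yields no contradiction, since one accepted periodic continuation $(wz')^\omega$ is a null set. Moreover, in the present situation a prefix of the strong kind exists precisely when $B_\automaton$ is rejecting, so postulating it presupposes the implication you are proving.

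\textbf{The missing idea is to use your power-unambiguity observation in the opposite direction.} You need a periodic word that $\automatonB(s,q)$ \emph{rejects*}, and then conclude that its period is rejected by $\pfufa^q$. The paper obtains it as follows.
\begin{itemize}
\item Fix, purely as a proof device, a DRA $\automatonD(s,q)$ with $\accLanguage(\automatonD(s,q))=\accLanguage(\automatonB(s,q))$. Probability $<1$ yields a rejecting BSCC $B_\automatonD$ of $\dtmc_s\times\automatonD(s,q)$, entered by some $x$ that also brings $\dtmc\otimes\lfufa$ back to $\langle s,q\rangle$.
\item Append a $z$ that closes a loop on $\langle s,q\rangle$ and visits all states of $B_\automaton$, resp.\ $B_\automatonD$, from \emph{every} state of that BSCC with first component $s$.
\item Pass to a power $v^n$ of $v=xz$ that closes a loop both in $\dtmc\otimes\automaton$ (at the state reached after $uv^n$) and in $\dtmc_s\times\automatonD(s,q)$ (at the state reached after $v^n$).
\end{itemize}
Then \Cref{lem:accepting-and-rejecting-bscc-dra-product} for $\automatonD(s,q)$ gives $v^\omega\notin\accLanguage(\automatonB(s,q))$. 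Your nested-runs argument, read contrapositively, gives $v\notin\accLanguage(\pfufa^q)$. Saturation turns the rejected $q$-normalized decomposition $(u,v)$ into $uv^\omega\notin E$. Finally, \Cref{lem:accepting-and-rejecting-bscc-dra-product} along the covering loop $v^n$ shows that $B_\automaton$ is rejecting.

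\textbf{The same two devices (covering from every start state, and passing to a power) are missing elsewhere in your proposal.}
\begin{itemize}
\item Your cut of $\rho$ needs a simultaneous return to $\langle s,q\rangle$ and to the start state of $B_\automaton$. Such a return need not exist, because this joint state can be transient in $\dtmc\otimes\lfufa\otimes\automaton$.
\item In ($\Leftarrow$), the claim that rejection of all $z^n$ forces almost-sure rejection is invalid. Your second argument does work once you note that probability $1$ lets every finite path from $s$ be extended to some $zy$ ending in an accepting state of $\automatonB(s,q)$, and you choose $z$ to cover $B_\automaton$ from every state with first component $s$. Then $u(zy)^\omega$ lies in $E$ because $(u,zy)\in\llbracket\fufa\rrbracket$, and it visits all of $B_\automaton$ infinitely often despite the uncontrolled extension $y$. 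This contradicts \Cref{lem:accepting-and-rejecting-bscc-dra-product}.
\end{itemize}
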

\begin{proof}[Proof Sketch.]
	If $\probMeasure^{\dtmc_s}(\accLanguage(\automatonB(s,q))) < 1$ there is $v \in \Paths^*(\dtmc_s)$ such that (1) $(u,v)$ is $q$-normalized, (2) $v \notin \accLanguage(\pfufa^q)$ and (3) $v$ closes a cycle in $B_\automaton$ that visits all states of $B_\automaton$. Combining (1)+(2) yields $uv^\omega \notin E = \accLanguage(\automaton)$, so (3) and  \Cref{lem:accepting-and-rejecting-bscc-dra-product} imply that $B_\automaton$ is rejecting.

	Now let $\probMeasure^{\dtmc_s}(\accLanguage(\automatonB(s,q))) = 1$. There is $x \in \Paths^*(\dtmc_s)$ closing a loop on $\langle s, q \rangle$ and such that the run of $\dtmc \otimes \automaton$ on $x$ starting in any $\langle s, a \rangle \in B_\automaton$ visits all states of $B_\automaton$. Because $\probMeasure^{\dtmc_s}(\accLanguage(\automatonB(s,q))) = 1$ there is a $y$ such that a $\langle s, q, f \rangle$ with $f \in \afinal^{\pfufa^q}$ is reachable in $I_{\langle s, q \rangle}$ upon $xy$. Thus, $(u,xy) \in \llbracket \fufa \rrbracket$ and $u(xy)^\omega \in E = \accLanguage(\automaton)$, so $B_\automaton$ is accepting by \Cref{lem:accepting-and-rejecting-bscc-dra-product}.
\end{proof}

Similar to \Cref{lem:single-inner-product-sufficient}, the specific choice of $\langle s, q \rangle \in B_\lfufa$ with $q \in \mstates$ does not matter when checking if $\probMeasure^{\dtmc_s}(\accLanguage(\automatonB(s,q))) = 1$, as this is either the case for all such $\langle s, q \rangle$ or none of them. \unskip
\begin{restatable}{lemma}{LemAlmostSurelyForAllOrNoneFUFA}\label{lem:probability-infinitely-many-accepting-same-in-bscc}
	Let $B_\lfufa$ be a BSCC of $\dtmc \otimes \lfufa$ and let $\langle s, q \rangle, \langle s', q' \rangle \in B_\lfufa$ with $q,q' \in \mstates$. If $\probMeasure^{\dtmc_s}(\accLanguage(\automatonB(s,q))) = 1$ then $\probMeasure^{\dtmc_{s'}}(\accLanguage(\automatonB(s',q'))) = 1$. 
\end{restatable}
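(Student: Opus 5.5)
The plan is to reduce the statement to \Cref{lem:characterization-bscc-dfufa}, using the auxiliary DRA $\automaton$ and the fact that the DTMC $\dtmc \otimes \automaton$ almost surely reaches a BSCC. This mirrors the use of the DRA in the proof of \Cref{lem:accepting-bscc-iff-good-bscc}. The point is that both $\langle s, q \rangle$ and $\langle s', q' \rangle$ can be connected to one common BSCC of $\dtmc \otimes \automaton$, whose acceptance status is characterized by either test.

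First, fix a finite path $u$ whose lifting to $\dtmc \otimes \lfufa$ ends in $\langle s, q \rangle \in B_\lfufa$. Since $\dtmc \otimes \automaton$ is a DTMC, there is an extension $uv$ whose lifting to $\dtmc \otimes \automaton$ ends in some BSCC $B_\automaton$. The lifting of $uv$ to $\dtmc \otimes \lfufa$ stays in $B_\lfufa$, which is bottom and strongly connected. Hence there is a further extension $w$ such that the lifting of $uvw$ to $\dtmc \otimes \lfufa$ ends exactly in $\langle s, q \rangle$, while the lifting of $uvw$ to $\dtmc \otimes \automaton$ remains in $B_\automaton$.

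By the same argument there is an extension $w'$ of $uvw$ such that $uvww'$ leads to $\langle s', q' \rangle$ in $\dtmc \otimes \lfufa$ and remains in $B_\automaton$. We now apply \Cref{lem:characterization-bscc-dfufa} twice, once to the path $uvw$ (ending in $\langle s, q \rangle$, with $q \in \mstates$) and once to the path $uvww'$ (ending in $\langle s', q' \rangle$, with $q' \in \mstates$). Both liftings reach the same BSCC $B_\automaton$ in $\dtmc \otimes \automaton$ and the BSCC $B_\lfufa$ in $\dtmc \otimes \lfufa$. This gives
\[
\probMeasure^{\dtmc_s}(\accLanguage(\automatonB(s,q))) = 1 \iff B_\automaton \text{ is accepting} \iff \probMeasure^{\dtmc_{s'}}(\accLanguage(\automatonB(s',q'))) = 1,
\]
which yields the claim.

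The main obstacle is a technical point in how \Cref{lem:characterization-bscc-dfufa} is applied. Its statement says ``the lifting of $u$ reaches a BSCC''. We must read this as saying that the state reached at the end of the path lies in the BSCC, not as referring to the first entry point. If the lemma is read as referring to entry points, we instead invoke its proof directly. That proof only uses two facts: the final state in $\dtmc \otimes \lfufa$ is $\langle s, q \rangle$, and the run in $\dtmc \otimes \automaton$ currently lies in $B_\automaton$, so that cycles covering all of $B_\automaton$ can be closed and \Cref{lem:accepting-and-rejecting-bscc-dra-product} applies. Both facts hold in our construction. A fully direct proof in the style of \Cref{lem:single-inner-product-sufficient}, using saturation on decompositions, would also be possible, but the reduction through the DRA avoids reasoning about the unambiguous runs of $\pfufa^q$.
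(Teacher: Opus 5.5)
Your proposal is correct and follows essentially the same route as the paper: both proofs extend paths inside the BSCC $B_\lfufa$ so that a single BSCC $B_\automaton$ of $\dtmc \otimes \automaton$ is reached together with $\langle s,q\rangle$ and with $\langle s',q'\rangle$, and then apply both directions of \Cref{lem:characterization-bscc-dfufa}. The only difference is that the paper starts from a path ending in $\langle s',q'\rangle$ rather than $\langle s,q\rangle$. The paper also applies that lemma to the end state of the path, exactly as you read it, so the technical worry in your last paragraph does not arise.
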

\begin{proof}[Proof Sketch.]
	Let $u$ be such that its lifting to $\dtmc \otimes \lfufa$ reaches $\langle s', q' \rangle$, while its lifting to $\dtmc \otimes \automaton$ reaches a BSCC $B_\automaton$. Let $v$ be an extension of $u$ such that, upon $uv$, $\dtmc \otimes \lfufa$ reaches $\langle s, q \rangle$. Since $\probMeasure^{\dtmc_s}(\accLanguage(\automatonB(s,q))) = 1$, $B_\automaton$ is accepting by \Cref{lem:characterization-bscc-dfufa}, and because $B_\automaton$ was already reached after $u$, the same lemma yields that $\probMeasure^{\dtmc_{s'}}(\accLanguage(\automatonB(s',q'))) = 1$.
\end{proof}

A BSCC $B$ of $\dtmc \otimes \lfufa$ is \emph{good} if it contains a state $\langle s, q \rangle$ with $q \in \mstates$ and such that $\probMeasure^{\dtmc_s}(\accLanguage(\automatonB(s,q))) = 1$. Otherwise, $B$ is \emph{bad}. 
The classification of BSCCs of $\dtmc \otimes \lfufa$ into good and bad BSCCs is the key step of the algorithm. This is due to the following lemma. 
\begin{restatable}{lemma}{LemZeroOnePropertyFUFA}\label{lem:zero-one-property-dfufa}
	The paths of $\dtmc$ whose lifting to $\dtmc \otimes \lfufa$ reach a good BSCC almost surely satisfy $E$, and those whose lifting to $\dtmc \otimes \lfufa$ reach a bad BSCC almost surely violate $E$. 
\end{restatable}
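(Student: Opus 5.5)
The plan is to mirror the FDFA argument around \Cref{prop:fdfa-mc-preservation-of-probabilities}, with the auxiliary DRA $\automaton$ for $E$ as the reference. Fix a finite path $u \in \Paths^*(\dtmc)$ whose lifting to $\dtmc \otimes \lfufa$ has just entered a BSCC $B_\lfufa$, ending in $\langle s_0, q_0 \rangle$. The goal is to show two things: if $B_\lfufa$ is good, then almost all paths with prefix $u$ lie in $E$; if it is bad, then almost none do. Since there are only countably many such minimal prefixes $u$, and their cylinders partition the paths reaching a BSCC, both claims of \Cref{lem:zero-one-property-dfufa} follow.

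First, the bad case with no $\mstates$-state in $B_\lfufa$. This is exactly \Cref{lem:no-A-state-prob-0}, so nothing more is needed.

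In the remaining cases $B_\lfufa$ contains some $\langle s, q \rangle$ with $q \in \mstates$. Because $\dtmc \otimes \automaton$ is a DTMC, almost surely an extension $uv$ of $u$ is generated whose lifting to $\dtmc \otimes \automaton$ lies in a BSCC $B_\automaton$. Since $B_\lfufa$ is a BSCC, from the state reached after $uv$ the chain almost surely generates a further extension $w$ such that $uvw$ ends in $\langle s, q \rangle$ in $\dtmc \otimes \lfufa$. The lifting of $uvw$ to $\dtmc \otimes \automaton$ is still in $B_\automaton$. Now apply \Cref{lem:characterization-bscc-dfufa} to the prefix $uvw$: $B_\automaton$ is accepting iff $\probMeasure^{\dtmc_s}(\accLanguage(\automatonB(s,q))) = 1$. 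By \Cref{lem:probability-infinitely-many-accepting-same-in-bscc}, this condition holds for $\langle s, q \rangle$ iff $B_\lfufa$ is good, independently of which $\mstates$-state was chosen. So $B_\automaton$ is accepting iff $B_\lfufa$ is good.

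Finally, within $B_\automaton$ almost every path visits every state infinitely often. By \Cref{lem:accepting-and-rejecting-bscc-dra-product}, almost all continuations of $uvw$ therefore have traces in $E$ if $B_\lfufa$ is good and outside $E$ if it is bad. Summing over the countably many choices of $v$ and $w$, which together carry full conditional measure given $u$, gives the claim for paths with prefix $u$. Alternatively, \Cref{cor:residual-language-for-dfufa} gives $\resLang_u = \resLang_{uvw}$ whenever $u$ and $uvw$ reach the same leading state, which also makes the conclusion independent of the particular $B_\automaton$ reached.

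The main obstacle is the measure-theoretic bookkeeping. One has to make sure that "almost surely reach $B_\automaton$, then almost surely reach $\langle s, q \rangle$" really lets us condition on each concrete prefix $uvw$ and invoke \Cref{lem:characterization-bscc-dfufa}, whose hypotheses require both liftings to be inside BSCCs at the same prefix. The hypothesis holds for $uvw$ because BSCCs are closed. It also matters that different runs may reach different $B_\automaton$; this is harmless because the accept/reject verdict depends only on $B_\lfufa$, by the chain of equivalences above.
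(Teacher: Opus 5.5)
Your proof is correct and takes essentially the paper's route. You handle BSCCs without $\mstates$-states by \Cref{lem:no-A-state-prob-0}; otherwise you extend $u$ into a BSCC $B_\automaton$ of $\dtmc \otimes \automaton$ and then to an $\mstates$-state of $B_\lfufa$, so that \Cref{lem:characterization-bscc-dfufa} classifies $B_\automaton$. The only difference is cosmetic: the paper goes straight to the witnessing $\mstates$-state in the good case and uses the definition of ``bad'' (every $\mstates$-state has probability below $1$) in the bad case, so it needs neither \Cref{lem:probability-infinitely-many-accepting-same-in-bscc} nor your summation over concrete extensions, concluding instead that every BSCC of $\dtmc \otimes \automaton$ reachable after $u$ is accepting (resp.\ rejecting).
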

\begin{proof}[Proof Sketch.]
	Let the lifting of $u \in \Paths^*(\dtmc)$ to $\dtmc \otimes \lfufa$ reach a BSCC $B_{\lfufa}$. If $B_\lfufa$ is good then it follows from \Cref{lem:characterization-bscc-dfufa} that all BSCCs reachable in $\dtmc \otimes \automaton$ after $u$ are accepting. Otherwise, if $B_\lfufa$ is bad, then the combination of \Cref{lem:no-A-state-prob-0,lem:characterization-bscc-dfufa} implies that all BSCCs reachable in $\dtmc \otimes \automaton$ after $u$ are rejecting.
\end{proof}

Since $\dtmc \otimes \lfufa$ is a DTMC, almost surely $\dtmc$ generates a path whose lifting to $\dtmc \otimes \lfufa$ enters a BSCC $B_\lfufa$. It follows from \Cref{lem:zero-one-property-dfufa} that if $B_\lfufa$ is good then almost surely a word in $E$ will be generated, while if $B_\lfufa$ is bad this will almost surely not be the case. Hence,  $\probMeasure^{\dtmc}(E)$ equals the probability of reaching a good BSCC in $\dtmc \otimes \lfufa$. This is formalized in the following proposition, where $\mathit{goodBSCC}$ contains all good BSCCs of $\dtmc \otimes \lfufa$. 
\begin{restatable}{proposition}{PropCorrectnessDFUFAMC}\label{prop:correctness-dfufa-mc}
	$\probMeasure^{\dtmc}(E) = \probMeasure^{\dtmc \otimes \lfufa}(\lozenge \mathit{goodBSCC})$. 
\end{restatable}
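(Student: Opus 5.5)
The plan is to argue as for \Cref{prop:fdfa-mc-preservation-of-probabilities}: partition the paths of $\dtmc$ according to the BSCC of $\dtmc \otimes \lfufa$ that their lifting eventually enters, and use \Cref{lem:zero-one-property-dfufa} on each part. Since $\lfufa$ is deterministic, $\dtmc \otimes \lfufa$ is a DTMC, and every path of $\dtmc$ has a unique lifting. Because $\dtmc \otimes \lfufa$ is a finite DTMC, almost every path reaches some BSCC. Write $\Pi_B$ for the set of paths of $\dtmc$ whose lifting reaches the BSCC $B$. The sets $\Pi_B$ are measurable, pairwise disjoint, and together have probability $1$. Also $\probMeasure^{\dtmc}(\Pi_B) = \probMeasure^{\dtmc \otimes \lfufa}(\lozenge B)$, because the lifting preserves cylinder probabilities. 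This correspondence holds since $\prob'(\langle s,q\rangle,\langle s',q'\rangle)=\prob(s,s')$ for the unique admissible $q'$.

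By total probability, $\probMeasure^{\dtmc}(E) = \sum_B \probMeasure^{\dtmc}(E \cap \Pi_B)$, where the sum ranges over all BSCCs $B$ of $\dtmc \otimes \lfufa$. By \Cref{lem:zero-one-property-dfufa}, if $B$ is good then $\probMeasure^{\dtmc}(E \cap \Pi_B) = \probMeasure^{\dtmc}(\Pi_B)$, and if $B$ is bad then $\probMeasure^{\dtmc}(E \cap \Pi_B) = 0$. Hence $\probMeasure^{\dtmc}(E) = \sum_{B \text{ good}} \probMeasure^{\dtmc \otimes \lfufa}(\lozenge B) = \probMeasure^{\dtmc \otimes \lfufa}(\lozenge \mathit{goodBSCC})$. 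The last equality uses that distinct BSCCs cannot both be reached by the same path.

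The main obstacle is making precise the step ``almost surely satisfy $E$'' in \Cref{lem:zero-one-property-dfufa}, which is stated per finite prefix $u$. I would refine $\Pi_B$ into a countable disjoint union of cylinders. Each cylinder is $\mathit{Cyl}(u)$, where $u$ ranges over the finite paths whose lifting enters $B$ for the first time at its last state. On each such cylinder, \Cref{lem:zero-one-property-dfufa} gives conditional probability $1$ (good $B$) or $0$ (bad $B$) of satisfying $E$. Summing over these countably many disjoint cylinders then yields the two claimed equalities. Measurability of $E$ holds because $E$ is $\omega$-regular, e.g.\ via the auxiliary DRA $\automaton$.
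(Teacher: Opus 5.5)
Your proposal is correct and follows essentially the same route as the paper. The paper gives no separate appendix proof; it justifies the proposition in the paragraph just before it, by noting that $\dtmc \otimes \lfufa$ almost surely enters a BSCC and applying \Cref{lem:zero-one-property-dfufa} to good and bad BSCCs, and your decomposition into first-entry cylinders $\mathit{Cyl}(u)$ makes the same argument's measure-theoretic bookkeeping explicit, matching the per-prefix form in which that lemma is actually proved.
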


The probability of a DTMC $\dtmc$ to generate a path satisfying an $\omega$-regular property $E$ represented by a saturated and power-unambiguous dFUFA can thus be computed as follows: \unskip
\vspace{0.1cm}
\begin{enumerate}
	\item Compute the set of BSCCs of $\dtmc \otimes \lfufa$ that contain a state $\langle s, q \rangle$ with $q \in \mstates$. 
	\item For each such BSCC $B$ decide if $B$ is good by picking any $\langle s, q \rangle \in B$ with $q \in \mstates$ and checking if $\probMeasure^{\dtmc_s}(\accLanguage(\automatonB(s,q))) = 1$. If this is the case, add $B$ to $\mathit{goodBSCC}$. 
	\item Compute $\probMeasure^{\dtmc}(E) = \probMeasure^{\dtmc\otimes \lfufa}(\lozenge \mathit{goodBSCC})$. 
\end{enumerate}
\vspace{0.1cm}
Correctness follows from \Cref{prop:correctness-dfufa-mc} and the fact that either $\probMeasure^{\dtmc_s}(\accLanguage(\automatonB(s,q))) = 1$ for all $\langle s, q \rangle$ with $q \in \mstates$ in a BSCC of $\dtmc \otimes \lfufa$, or none of them (\Cref{lem:probability-infinitely-many-accepting-same-in-bscc}). 
Furthermore, computing the BSCCs of $\dtmc\otimes \lfufa$ and the value $\probMeasure^{\dtmc\otimes \lfufa}(\lozenge \mathit{goodBSCC})$ is possible in time polynomial in the sizes of $\dtmc$ and $\lfufa$ \cite{PoMC}, and checking if $\probMeasure^{\dtmc_s}(\accLanguage(\automatonB(s,q))) = 1$ can be done in time polynomial in the sizes of $\dtmc$ and $\automatonB(s,q)$ \cite{MCUA}. Since the latter is necessary for at most $(\vert \dtmc \vert \cdot \vert \lfufa \vert)$-many BSCCs, the overall runtime of the algorithm is polynomial in $\vert \dtmc \vert$ and $\vert \fufa \vert$.

This yields an extension of \Cref{thm:poly-time-mc-for-fdfa} to saturated and power-unambiguous dFUFA. 
\begin{restatable}{theorem}{ThmDFUFAMC}\label{thm:dfufa-mc}
	Let $\dtmc$ be a DTMC and $\fufa$ a saturated and power-unambiguous dFUFA with $\accLanguage_\omega(\fufa) = E$. Then $\probMeasure^{\dtmc}(E)$ can be computed in time polynomial in the sizes of $\dtmc$ and $\fufa$. 
\end{restatable}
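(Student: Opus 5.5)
The proof assembles the algorithm stated just before \Cref{thm:dfufa-mc}. Correctness rests on \Cref{prop:correctness-dfufa-mc}; the remaining work is to bound each step's running time. Throughout we keep the convention $\Sigma = \stateSpace$. This guarantees that every product $I_{\langle s,q\rangle}$, read as the Büchi automaton $\automatonB(s,q)$, stays unambiguous. It is also the hypothesis under which the UBA model checking results of \cite{MCUA} apply.

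\emph{Step 1.} Since $\lfufa$ is deterministic, we build the DTMC $\dtmc \otimes \lfufa$, which has $\vert\dtmc\vert\cdot\vert\lfufa\vert$ states. We compute its BSCCs with Tarjan's algorithm in linear time and keep only those containing a state $\langle s,q\rangle$ with $q\in\mstates$. By \Cref{lem:no-A-state-prob-0}, the discarded BSCCs contribute probability $0$, so they are simply classified as bad.

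\emph{Step 2.} For each remaining BSCC $B$, we fix one $\langle s,q\rangle\in B$ with $q\in\mstates$. We build $\automatonB(s,q)$, whose state space is contained in $\stateSpace\times\lfufa\times\pfufa^q$, so its size is polynomial in $\vert\dtmc\vert$ and $\vert\fufa\vert$. We then decide whether $\probMeasure^{\dtmc_s}(\accLanguage(\automatonB(s,q)))=1$. This is exactly a probabilistic model checking problem for a DTMC against a UBA, solvable in polynomial time by \cite{MCUA}, for instance by computing the value via the positive-definite linear equation system over the product and comparing it with $1$ in exact rational arithmetic. By \Cref{lem:probability-infinitely-many-accepting-same-in-bscc}, the outcome does not depend on which $\langle s,q\rangle$ was picked, so the set $\mathit{goodBSCC}$ is well defined and computed correctly. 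There are at most $\vert\dtmc\vert\cdot\vert\lfufa\vert$ such checks.

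\emph{Step 3.} We compute $\probMeasure^{\dtmc\otimes\lfufa}(\lozenge\mathit{goodBSCC})$ by solving the standard linear equation system for reachability probabilities \cite{PoMC}, which takes polynomial time. By \Cref{prop:correctness-dfufa-mc}, which in turn rests on the zero-one behaviour of \Cref{lem:zero-one-property-dfufa}, this value equals $\probMeasure^\dtmc(E)$.

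The main obstacle I anticipate is in Step 2: the UBA check must really run in polynomial time. Specifically, $\automatonB(s,q)$ must be unambiguous as a Büchi automaton, not merely because $\pfufa^q$ is a UFA. Uniqueness of accepting runs for finite words does not immediately give uniqueness of infinite runs visiting $\afinal$ infinitely often. I would argue it via power-unambiguity, using the prefix property of accepting runs on powers established after \Cref{def:properties-fufa}. The argument: two distinct accepting infinite runs would, when truncated at accepting visits, give two accepting runs of $\pfufa^q$ on the same finite word, contradicting unambiguity of $\pfufa^q$. Here the alphabet $\stateSpace$ makes the DTMC component deterministic, and the lead component $\lfufa$ is deterministic too. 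Once this is established, polynomiality of the whole procedure follows.
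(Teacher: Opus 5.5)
Your Steps 1--3 are the paper's algorithm and complexity count, and the correctness part (\Cref{lem:no-A-state-prob-0}, \Cref{lem:probability-infinitely-many-accepting-same-in-bscc}, \Cref{prop:correctness-dfufa-mc}) is fine. The gap is the point you flag yourself, and your proposed fix does not work.

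The convention $\Sigma=\stateSpace$ only gives unambiguity of $\automatonB(s,q)$ on \emph{finite} words. Truncating two distinct Büchi-accepting runs yields two accepting runs on one finite word only if both runs are accepting at a \emph{common} position after they diverge. Nothing forces this. The nesting property derived from power-unambiguity concerns the runs on $v,v^2,v^3,\dots$ for a single fixed $v$. It says nothing about two different accepted prefixes $x$ and $xy$ of an infinite path.

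In fact $\automatonB(s,q)$ need not be a UBA. Consider the following dFUFA and DTMC.
\begin{itemize}
\item Let $\Sigma=\stateSpace=\{a,b,c\}$ with $\prob(x,y)=\frac13$ for all $x,y$.
\item Let the leading DFA have a single state $q$ with self-loops, and $\mstates=\{q\}$.
\item Let $\pfufa^q$ be a fresh initial state branching into total DFAs $D_1$ for $\Phi=\{a\}\cup\Sigma^*a\Sigma$ and $D_2$ for $\Sigma^+\setminus\Phi$. Here $v\in\Phi$ iff the second-to-last letter of $vv$ is $a$.
\end{itemize}
Then $\accLanguage(\pfufa^q)=\Sigma^+$ and $\pfufa^q$ is a UFA, so the dFUFA is saturated with $E=\Sigma^\omega$. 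It is also power-unambiguous: if $f$ is the final state reached on $v$ inside $D_i$, then $\delta^{\pfufa^q}(f,v^n)=\{D_i(v^{n+1})\}$, and $v^{n+1}\in\Phi$ iff $v\in\Phi$.

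Yet consider the word $(acbc)^\omega$, i.e.\ the path $c\,a\,c\,b\,c\,a\cdots$ of $\dtmc_c$. The $D_1$-run of $\automatonB(c,q)$ is accepting after $ac, acbcac,\dots$, and the $D_2$-run after $acbc, acbcacbc,\dots$. Indeed, almost every path of $\dtmc_c$ has two accepting runs. The UBA methods of \cite{MCUA} rely on each accepted path having exactly one accepting run. Here they would add up the two threads (value $2$), so this good BSCC would be classified as bad.

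The paper's own argument simply declares $\automatonB(s,q)$ to be a UBA, so you have inherited this gap rather than closed it. The characterization lemmas never use Büchi-unambiguity. What is missing is therefore only a polynomial-time test for $\probMeasure^{\dtmc_s}(\accLanguage(\automatonB(s,q)))=1$. For general NBA this almost-sure test is PSPACE-hard (cf.\ \Cref{sec:conclusion}). You would need either a new argument exploiting finite-word unambiguity together with power-unambiguity and saturation, or an additional hypothesis. For example, your truncation argument does go through if the accepting runs of $\pfufa^q$ on accepted words $x$ and $xy$ are always nested (condition (2) in \Cref{app:cex-ufa-mc}). This holds, e.g., for deterministic progress automata.
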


We conclude with a discussion on why replacing the UBA-methods used in the algorithm with approaches for model checking DTMCs against UFA-specifications is challenging. 

The probabilistic model checking problem for a DTMC $\dtmc$ against a UFA $\automaton$ asks to compute $\probMeasure^{\dtmc}(\accLanguage(\automaton) \cdot \Sigma^\omega)$. This probability can be computed in polynomial time by solving a specific system of linear equations \cite{MCMCAUBA}, provided that (1) the language of the UFA $\automaton$ is prefix-free or, more generally, that (2) for all $x, xy \in \accLanguage(\automaton)$ the accepting run of $\automaton$ on $x$ is a prefix of the accepting run of $\automaton$ on $xy$. For more details, see Appendix \ref{app:cex-ufa-mc}. However, neither (1) nor (2) are guaranteed by the translation from UBA to FUFA given in \Cref{sec:fufa}, and so it is not clear how to construct an FUFA for a given $\omega$-regular specification whose progress automata are suitable for the application of available UFA-based probabilistic model checking algorithms. To the best of our knowledge, it is not known whether model checking DTMCs against general UFA-specifications is possible in polynomial time. 
Moreover, the characterization of BSCCs of $\dtmc \otimes \lfufa$ used in the model checking procedure proposed in this section depends on the almost sure generation of certain \emph{infinite} paths of $\dtmc$. Therefore, even if polynomial-time model checking algorithms for DTMCs against general UFA-specifications would be at our disposal, adjusting the algorithm such that it uses UFA- instead of UBA-techniques is non-trivial.

Lastly, we observe that the probabilistic model checking problem for DTMCs against FUFA-specifications is at least as hard as the corresponding problem for UFA-specifications.
To see this, let $\automaton$ be a UFA over $\Sigma$. We can construct an FUFA $\fufa$ that accepts $\accLanguage(\automaton) \cdot \Sigma^\omega$ by choosing $\lfufa$ to be like $\automaton$, but with a fresh state $f$ that is reachable on the full alphabet from every final state of $\automaton$ and closes a self-loop on $\Sigma$, setting $\mstates = \{f\}$, and choosing $\pfufa^f$ such that $\accLanguage(\pfufa^f)= \Sigma^+$. Hence, a probabilistic model checking algorithm for DTMCs against FUFA-specifications would also yield a procedure for the verification of DTMCs against general UFA-specifications. 
\section{Conclusion}\label{sec:conclusion}
This paper presented the first polynomial-time probabilistic model checking algorithm for DTMCs against $\omega$-regular specifications represented as saturated FDFA. 
Furthermore, FUFA were introduced as a variant of FDFA that allows nondeterministic leading automata and unambiguous progress automata. Saturated, decomposition- and power-unambiguous FUFA characterize the $\omega$-regular languages and can be exponentially more succinct than saturated FDFA and UBA. Additionally, there is a single-exponential translation from LTL to FUFA, which is in contrast to a double-exponential lower bound for the translation from LTL to FDFA. An extension of the probabilistic model checking algorithm for DTMCs against FDFA to dFUFA, i.e., to FUFA with a deterministic leading automaton, was also provided.

Finding a polynomial-time algorithm that solves the probabilistic model checking problem for DTMCs against general FUFA-specifications is challenging. In particular, the fact that the leading automaton of an FUFA can be ambiguous is both a blessing and a curse. While this ambiguity is key in the proof of \Cref{thm:fufa-2-uba} to obtain an exponential lower bound for the translation from FUFA to UBA, it also increases the complexity of algorithmic questions like probabilistic model checking for FUFA. In particular, the known PSPACE-hardness results for model checking DTMCs against NBA-specifications \cite{AVPCFSP} suggest that there might not be an efficient probabilistic model checking algorithm against FUFA-specifications at all. 

An analysis of the complexity of the probabilistic model checking problem against general FUFA-specifications is important future work. Moreover, the exact cost of translating NBA into FUFA and the question if dFUFA can be exponentially more succinct than saturated FDFA are open. 
An experimental evaluation of the proposed algorithms is also in order. 

\bibliography{lit}

@book{PoMC,
  author       = {Christel Baier and Joost{-}Pieter Katoen},
  title        = {Principles of Model Checking},
  publisher    = {{MIT} Press},
  year         = {2008},
  isbn         = {978-0-262-02649-9},
}

@InProceedings{SPFA,
	author =	{Bohn, Le\'{o}n and Li, Yong and L\"{o}ding, Christof and Schewe, Sven},
	title =	{Saturation Problems for Families of Automata},
	booktitle =	{52nd International Colloquium on Automata, Languages, and Programming (ICALP 2025)},
	pages =	{146:1--146:19},
	series =	{Leibniz International Proceedings in Informatics (LIPIcs)},
	year =	{2025},
	volume =	{334},
	editor =	{Censor-Hillel, Keren and Grandoni, Fabrizio and Ouaknine, Jo\"{e}l and Puppis, Gabriele},
	publisher =	{Schloss Dagstuhl -- Leibniz-Zentrum f{\"u}r Informatik},
	address =	{Dagstuhl, Germany},
	doi =		{10.4230/LIPIcs.ICALP.2025.146}
}

@article{FDFAAORL,
	title      = {Families of {DFA}s as Acceptors of $\omega$-Regular Languages},
	author     = {Dana Angluin and Udi Boker and Dana Fisman},
	doi        = {10.23638/LMCS-14(1:15)2018},
	journal    = {Logical Methods in Computer Science},
	issn       = {1860-5974},
	volume     = {Volume 14, Issue 1},
	eid        = 15,
	year       = {2018},
	month      = {Feb}
}

@inproceedings{UPWROL,
	address = {Berlin, Heidelberg},
	title = {Ultimately periodic words of rational $\omega$-languages},
	doi = {10.1007/3-540-58027-1_27},
	booktitle = {Mathematical {Foundations} of {Programming} {Semantics} (MFPS 1993)},
	publisher = {Springer},
	author = {Calbrix, Hugues and Nivat, Maurice and Podelski, Andreas},
	editor = {Brookes, Stephen and Main, Michael and Melton, Austin and Mislove, Michael and Schmidt, David},
	year = {1994},
	pages = {554--566},
	series = {Lecture Notes in Computer Science (LNCS)},
	volume = {802}
}

@article{LROL,
	series = {Algorithmic {Learning} {Theory}},
	title = {Learning regular omega languages},
	volume = {650},
	doi = {10.1016/j.tcs.2016.07.031},
	journal = {Theoretical Computer Science},
	author = {Angluin, Dana and Fisman, Dana},
	month = oct,
	year = {2016},
	pages = {57--72},
}

@InProceedings{NLABAFDFACT,
	author="Li, Yong
	and Chen, Yu-Fang
	and Zhang, Lijun
	and Liu, Depeng",
	editor="Legay, Axel
	and Margaria, Tiziana",
	title="A Novel Learning Algorithm for {B\"uchi} Automata based on Family of {DFA}s and Classification Trees",
	booktitle="Tools and Algorithms for the Construction and Analysis of Systems (TACAS 2017)",
	year="2017",
	publisher="Springer",
	address="Berlin, Heidelberg",
	pages="208--226",
	doi="10.1007/978-3-662-54577-5_12", 
	series="Lecture Notes in Computer Science (LNCS)", 
	volume="10205"
}

@InProceedings{NFFARLORL,
	author="Li, Yong
	and Schewe, Sven
	and Tang, Qiyi",
	editor="Andr{\'e}, {\'E}tienne
	and Sun, Jun",
	title="A Novel Family of Finite Automata for Recognizing and Learning $\omega$-Regular Languages",
	booktitle="Automated Technology for Verification and Analysis",
	year="2023",
	publisher="Springer Nature Switzerland",
	address="Cham",
	pages="53--73",
	doi="10.1007/978-3-031-45329-8_3", 
	series="Lecture Notes in Computer Science (LNCS)", 
	volume="14215"
}

@book{FMC,
	title={Finite {Markov} Chains},
	author={Kemeny, John G. and Snell, J. Laurie},
	series={Undergraduate Texts in Mathematics (UTM)},
	year={1976},
	publisher={Springer New York}
}

@InProceedings{LBUACC,
	author =	{G\"{o}\"{o}s, Mika and Kiefer, Stefan and Yuan, Weiqiang},
	title =	{Lower Bounds for Unambiguous Automata via Communication Complexity},
	booktitle =	{49th International Colloquium on Automata, Languages, and Programming (ICALP 2022)},
	pages =	{126:1--126:13},
	series =	{Leibniz International Proceedings in Informatics (LIPIcs)},
	year =	{2022},
	volume =	{229},
	editor =	{Boja\'{n}czyk, Miko{\l}aj and Merelli, Emanuela and Woodruff, David P.},
	publisher =	{Schloss Dagstuhl -- Leibniz-Zentrum f{\"u}r Informatik},
	address =	{Dagstuhl, Germany},
	doi =		{10.4230/LIPIcs.ICALP.2022.126}
}

@incollection{DMRSOA,
	title = {Symposium on Decision Problems: On a Decision Method in Restricted Second Order Arithmetic},
	editor = {Ernest Nagel and Patrick Suppes and Alfred Tarski},
	series = {Studies in Logic and the Foundations of Mathematics},
	publisher = {Elsevier},
	volume = {44},
	pages = {1-11},
	year = {1966},
	booktitle = {Logic, Methodology and Philosophy of Science},
	doi = {https://doi.org/10.1016/S0049-237X(09)70564-6},
	author = {{B}üchi, {J}. {R}ichard},
}

@InProceedings{SLBSNDCUA,
	author =	{Raskin, Mikhail},
	title =	{A Superpolynomial Lower Bound for the Size of Non-Deterministic Complement of an Unambiguous Automaton},
	booktitle =	{45th International Colloquium on Automata, Languages, and Programming (ICALP 2018)},
	pages =	{138:1--138:11},
	series =	{Leibniz International Proceedings in Informatics (LIPIcs)},
	year =	{2018},
	volume =	{107},
	editor =	{Chatzigiannakis, Ioannis and Kaklamanis, Christos and Marx, D\'{a}niel and Sannella, Donald},
	publisher =	{Schloss Dagstuhl -- Leibniz-Zentrum f{\"u}r Informatik},
	address =	{Dagstuhl, Germany},
	doi =		{10.4230/LIPIcs.ICALP.2018.138},
}

@incollection{UAT,
	address = {Cham},
	title = {Unambiguity in Automata Theory},
	booktitle = {Descriptional {Complexity} of {Formal} {Systems} (DCFS 2015)},
	publisher = {Springer International Publishing},
	author = {Colcombet, Thomas},
	editor = {Shallit, Jeffrey and Okhotin, Alexander},
	year = {2015},
	doi = {10.1007/978-3-319-19225-3_1},
	series = {Lecture Notes in Computer Science (LNCS)},
	volume = {9118},
	pages = {3--18},
}

@article{ECPURERGFA,
	author = {Stearns, R. E. and Hunt III, H. B.},
	title = {On the Equivalence and Containment Problems for Unambiguous Regular Expressions, Regular Grammars and Finite Automata},
	journal = {SIAM Journal on Computing},
	volume = {14},
	number = {3},
	pages = {598-611},
	year = {1985},
	doi = {10.1137/0214044}
}

@article{MCUA,
	title = {Markov chains and unambiguous automata},
	volume = {136},
	doi = {10.1016/j.jcss.2023.03.005},
	journal = {Journal of Computer and System Sciences},
	author = {Baier, Christel and Kiefer, Stefan and Klein, Joachim and Müller, David and Worrell, James},
	month = sep,
	year = {2023},
	pages = {113--134},
}

@unpublished{BALi,
	title        = {Translating {LTL} to {FDFA} and {FDFA} Model Checking},
	author       = {Li, Jianlin},
	note         = {Bachelor's thesis. Nanjing University of Aeronautics and Astronautics},
	year         = {2018}
}

@misc{DFpersCom,
	author       = {Fisman, Dana},
	title        = {Personal communication},
	year         = {2024}
}

@article{Alternation,
	author = {Chandra, Ashok K. and Kozen, Dexter C. and Stockmeyer, Larry J.},
	title = {Alternation},
	year = {1981},
	issue_date = {Jan. 1981},
	publisher = {Association for Computing Machinery},
	address = {New York, NY, USA},
	volume = {28},
	number = {1},
	doi = {10.1145/322234.322243},
	journal = {J. ACM},
	month = jan,
	pages = {114–133},
	numpages = {20}
}

@article{UTLTLOA,
	author = {Esparza, Javier and K\v{r}et\'{\i}nsk\'{y}, Jan and Sickert, Salomon},
	title = {A Unified Translation of Linear Temporal Logic to $\omega$-Automata},
	year = {2020},
	issue_date = {December 2020},
	publisher = {Association for Computing Machinery},
	address = {New York, NY, USA},
	volume = {67},
	number = {6},
	doi = {10.1145/3417995},
	journal = {J. ACM},
	month = oct,
	articleno = {33},
	numpages = {61}
}

@InProceedings{LTLLDBADPA,
	author="Esparza, Javier
	and K{\v{r}}et{\'i}nsk{\'y}, Jan
	and Raskin, Jean-Fran{\c{c}}ois
	and Sickert, Salomon",
	editor="Legay, Axel
	and Margaria, Tiziana",
	title="From {LTL} and Limit-Deterministic {B{\"u}chi} Automata to Deterministic Parity Automata",
	booktitle="Tools and Algorithms for the Construction and Analysis of Systems (TACAS 2017)",
	year="2017",
	publisher="Springer",
	address="Berlin, Heidelberg",
	pages="426--442",
	series="Lecture Notes in Computer Science (LNCS)", 
	volume="10205", 
	doi="10.1007/978-3-662-54577-5_25"
}

@inproceedings{TLP,
	author = {Pnueli, Amir},
	title = {The temporal logic of programs},
	year = {1977},
	publisher = {IEEE Computer Society},
	address = {USA},
	doi = {10.1109/SFCS.1977.32},
	booktitle = {Proceedings of the 18th Annual Symposium on Foundations of Computer Science (SFCS 1977)},
	pages = {46–57},
	numpages = {12}
}

@InProceedings{CRBA,
	author="Li, Yong
	and Tsay, Yih-Kuen
	and Turrini, Andrea
	and Vardi, Moshe Y.
	and Zhang, Lijun",
	editor="Huisman, Marieke
	and P{\u{a}}s{\u{a}}reanu, Corina
	and Zhan, Naijun",
	title="Congruence Relations for {B{\"u}chi} Automata",
	booktitle="Formal Methods (FM 2021)",
	year="2021",
	publisher="Springer International Publishing",
	address="Cham",
	pages="465--482",
	series="Lecture Notes in Computer Science (LNCS)", 
	volume="13047", 
	doi="10.1007/978-3-030-90870-6_25"
}

@article{DCNFADA,
	title = {Descriptional Complexity of {NFA} of different ambiguity},
	volume = {16},
	doi = {10.1142/S0129054105003418},
	journal = {International Journal of Foundations of Computer Science},
	author = {Leung, Hing},
	month = jan,
	year = {2004},
	pages = {975--984}
}

@phdthesis{SDCFRFL,
	title  = {Succinctness of Descriptions of Context-Free, Regular, and Finite Languages},
	author = {Schmidt, Erik Meineche},
	year   = {1978},
	school = {University of Aarhus}
}

@article{CPV,
	title = {The complexity of probabilistic verification},
	volume = {42},
	doi = {10.1145/210332.210339},
	number = {4},
	journal = {J. ACM},
	author = {Courcoubetis, Costas and Yannakakis, Mihalis},
	month = jul,
	year = {1995},
	pages = {857--907},
}

@InProceedings{CSRCP,
	author="Thomas, Wolfgang",
	editor="de Alfaro, Luca",
	title="Facets of Synthesis: Revisiting {Church}'s Problem",
	booktitle="Foundations of Software Science and Computational Structures (FoSSaCS 2009)",
	year="2009",
	publisher="Springer",
	address="Berlin, Heidelberg",
	pages="1--14",
	series="Lecture Notes in Computer Science (LNCS)", 
	volume="5504", 
	doi="10.1007/978-3-642-00596-1_1"
}

@inproceedings{SRM,
	author = {Pnueli, A. and Rosner, R.},
	title = {On the synthesis of a reactive module},
	year = {1989},
	publisher = {Association for Computing Machinery},
	address = {New York, NY, USA},
	doi = {10.1145/75277.75293},
	booktitle = {Proceedings of the 16th ACM SIGPLAN-SIGACT Symposium on Principles of Programming Languages (POPL 1989)},
	pages = {179–190},
	numpages = {12},
	location = {Austin, Texas, USA}
}

@InProceedings{ATAAPV,
	author = 	 {Moshe Y. Vardi and Pierre Wolper},
	title = 	 {An Automata-Theoretic Approach to Automatic Program Verification (Preliminary Report)},
	booktitle =  {Proceedings of the First Annual IEEE Symposium on Logic in Computer Science (LICS 1986)},
	year =	 {1986},
	month =	 {June}, 
	pages =      {332--344 },
	location =   {Cambridge, MA, USA}, 
	publisher =	 {IEEE Computer Society Press}
}

@article{OA,
	author       = {Thomas Wilke},
	title        = {$\omega$-{A}utomata},
	journal      = {CoRR},
	volume       = {abs/1609.03062},
	year         = {2016},
	doi =  	{10.48550/arXiv.1609.03062}
}

@book{ALIG,
	title     = {Automata, Logics, and Infinite Games},
	subtitle  = {A Guide to Current Research},
	editor    = {Gr{\"a}del, Erich and Thomas, Wolfgang and Wilke, Thomas},
	series    = {Lecture Notes in Computer Science (LNCS)},
	volume    = {2500},
	publisher = {Springer},
	address   = {Berlin, Heidelberg},
	year      = {2002},
	doi       = {10.1007/3-540-36387-4}
}

@article{FLTBT,
	author = {Kupferman, Orna and Vardi, Moshe Y.},
	title = {From linear time to branching time},
	year = {2005},
	issue_date = {April 2005},
	publisher = {Association for Computing Machinery},
	address = {New York, NY, USA},
	volume = {6},
	number = {2},
	doi = {10.1145/1055686.1055689},
	journal = {ACM Trans. Comput. Logic},
	month = apr,
	pages = {273–294},
	numpages = {22},
}

@INPROCEEDINGS{AVPCFSP, 
	author={Vardi, Moshe Y.}, 
	booktitle={26th Annual Symposium on Foundations of Computer Science (FOCS 1985)}, 
	title={Automatic verification of probabilistic concurrent finite state programs},
	year={1985}, 
	volume={}, 
	number={}, 
	pages={327-338},
	doi={10.1109/SFCS.1985.12}}

@article{MCMCAUBA,
	author       = {Michael Benedikt and
	Rastislav Lenhardt and
	James Worrell},
	title        = {Model Checking {Markov} Chains Against Unambiguous {Büchi} Automata},
	journal      = {CoRR},
	volume       = {abs/1405.4560},
	year         = {2014}, 
	doi = {10.48550/arXiv.1405.4560}
}

@article{SEAFAPAFA,
	title = {Separating Exponentially Ambiguous Finite Automata from Polynomially Ambiguous Finite Automata},
	volume = {27},
	doi = {10.1137/S0097539793252092},
	number = {4},
	journal = {SIAM Journal on Computing},
	author = {Leung, Hing},
	month = aug,
	year = {1998},
	pages = {1073--1082},
}

@inproceedings{FABA,
	address = {Cham},
	title = {On Finitely Ambiguous {Büchi} Automata},
	doi = {10.1007/978-3-319-98654-8_41},
	booktitle = {Developments in {Language} {Theory} (DLT 2018)},
	publisher = {Springer International Publishing},
	author = {Löding, Christof and Pirogov, Anton},
	editor = {Hoshi, Mizuho and Seki, Shinnosuke},
	year = {2018},
	pages = {503--515},
}

@article{CMDAIW,
	title={Complementation is more difficult with automata on infinite words},
	author={Michel, Max},
	journal={CNET, Paris},
	volume={15},
	year={1988}
}

@inproceedings{OBTOA,
	address = {Berlin, Heidelberg},
	title = {Optimal Bounds for Transformations of $\omega$-Automata},
	doi = {10.1007/3-540-46691-6_8},
	booktitle = {Foundations of {Software} {Technology} and {Theoretical} {Computer} {Science} (FSTTCS 1999)},
	publisher = {Springer},
	author = {Löding, Christof},
	editor = {Rangan, C. Pandu and Raman, V. and Ramanujam, R.},
	year = {1999},
	pages = {97--109}
}

@INPROCEEDINGS{NBSADPA,
	author={Piterman, N.},
	booktitle={21st Annual IEEE Symposium on Logic in Computer Science (LICS 2006)}, 
	title={From Nondeterministic {Büchi} and {Streett} Automata to Deterministic Parity Automata}, 
	year={2006},
	pages={255-264},
	doi={10.1109/LICS.2006.28}}

@misc{CLTLFE,
	title = {Characterizing {LTL} {Formulas} by {Examples}},
	doi = {10.48550/arXiv.2604.22097},
	publisher = {arXiv},
	author = {Cate, Balder ten and Fisman, Dana and Ohayon, Roi and Sestic, Patrik},
	month = apr,
	year = {2026},
}

\newpage 
\appendix
\section{\texorpdfstring{Proofs of \Cref{sec:fdfa-mc}}{Proofs of Section 3}}\label{app:proofs-fdfa-mc}
\LemAcceptingAndRejectingBSCCinDRAProduct*
\begin{proof}
	Let $B$ be accepting. Then there is a Rabin pair $(T_i, R_i)$ of $\automaton$ such that $B \vert_{\automaton} \cap T_i = \emptyset$ and $B \vert_{\automaton} \cap R_i \neq \emptyset$. Now let $\pi$ be as in the claim. By construction, the projection of $u \pi \in \Paths(\dtmc \otimes \automaton)$ onto its second component yields the unique run of $\automaton$ on the word $w = \trace(u \pi)$. This run visits infinitely many states in $R_i$ and only finitely many states in $T_i$ (since, after reading $u$, no state with a second component in $T_i$ will be encountered anymore along $\pi$), so $w \in \accLanguage(\automaton)$ and hence $w = \trace(u \pi) \in E$. This proves the first claim.
	
	\smallskip
	To show the second claim let $B$ be rejecting. Then for every Rabin pair $(T_i, R_i)$ we have $B \vert_\automaton \cap T_i \neq \emptyset$ or $B \vert_\automaton \cap R_i = \emptyset$. Thus, for any $\pi$ as in the claim and all $i$ the projection $r$ of $u\pi$ onto its second component does not satisfy $\mathrm{inf}(r) \cap T_i = \emptyset$ and $\mathrm{inf}(r) \cap R_i \neq \emptyset$. It follows that $\automaton$ does not accept $w = \trace(u\pi)$, and so $w \notin E$. 
\end{proof}

\LemSingleInnerProduct*
\begin{proof}
	Assume that $I_{\langle s, q \rangle}$ has a rejecting BSCC $B$, and that $I_{\langle s', q' \rangle}$ does not. 
	
	Let $u$ be a finite path in $\dtmc$ whose lifting to $\dtmc \otimes \lfdfa$ ends in $\langle s', q' \rangle$. Since $\langle s, q \rangle, \langle s', q' \rangle \in C$, there is a finite path $v \in \Paths^*(s')$ whose lifting to $\dtmc \otimes \lfdfa$ satisfies $\langle s' , q' \rangle \overset{v}{\longrightarrow} \langle s, q \rangle$.
	Moreover, let $w \in \Paths^*(s)$ such that the lifting of $w$ to $I_{\langle s, q \rangle}$ ends in some $\langle s'', q'', r \rangle \in B$. Lastly, let $\langle s'', q'', p \rangle$ be the state of $I_{\langle s', q' \rangle}$ reached after following (the lifting of) $vw$. 
	
	Because $I_{\langle s', q' \rangle}$ does not contain a rejecting BSCC, there must be an accepting state $\langle s', q', f \rangle$ reachable from $\langle s'', q'', p \rangle$ via some finite $x$, i.e., $\langle s'', q'', p \rangle \overset{x}{\longrightarrow} \langle s', q', f \rangle$. In particular, there is a path in $\pfdfa^{q'}$ from $p$ to $f$ on $\trace(x)$. 
	
	Now let $y = \trace(u)$ and $z = \trace(vwx)$. 
	Since $\lfdfa(y) = q'$ and 
	\begin{center}
		$\delta^{\lfdfa}(q', z) = \delta^{\lfdfa}(q', \trace(vwx)) = \delta^{\lfdfa}(q, \trace(wx)) = \delta^{\lfdfa}(q'', \trace(x)) = q'$,
	\end{center} 
	the decomposition $(y,z)$ is normalized w.r.t. $\lfdfa$. Furthermore, $z \in \accLanguage(\pfdfa^{q'})$ as 
	\begin{center}
		$\pfdfa^{q'}(z) = \pfdfa^{q'}(\trace(vwx)) = \delta^{\pfdfa^{q'}}(p, \trace(x)) = f \in \afinal^{\pfdfa^{q'}}. $
	\end{center}
	Thus, $(y,z) \in \llbracket \fdfa \rrbracket$. 
	
	On the other hand, let $y' = \trace(uv)$ and $z' = \trace(wxv)$, i.e., $(y', z')$ is obtained from shifting $(y,z)$ by $v$ ``to the left''. Then $\lfdfa(y') = q$ and $(y', z')$ is normalized w.r.t. $\lfdfa$. However
	\begin{center}
		$\pfdfa^{q}(z') = \pfdfa^{q}(\trace(wxv)) = \pfdfa^{q}(q'', \trace(xv)) \notin \afinal^{\pfdfa^{q}}$
	\end{center}
	since after following $w$ in $I_{\langle s, q \rangle}$ the rejecting BSCC $B$ is entered, and so the state reached after following $wxv$ in $I_{\langle s, q \rangle}$ must be of the form $\langle s, q, r \rangle$ for some $r \notin \afinal^{\pfdfa^q}$. 
	Thus, $z' \notin \accLanguage(\pfdfa^q)$ and so $(y', z') \notin \llbracket \fdfa \rrbracket$.
	As $yz^\omega = y'(z')^\omega$, and because both $(y,z)$ and $(y',z')$ are normalized, this contradicts the saturation of $\fdfa$. 
\end{proof}

\LemGraphAnalysisForGoodBad*
\begin{proof}
	Let $\langle s, q \rangle \in B$ and $I_B = I_{\langle s, q \rangle}$. We start by proving that 
	\begin{align}
		I_B \text{ has a rejecting BSCC \qquad iff \qquad } \probMeasure^{I_B}(\lozenge \mathit{accept}) < 1. \label{eq:linear-time-computation-good-bscc}
	\end{align}
	For the direction from left to right, let $I_B$ have a rejecting BSCC $C$. Then there is a $u \in \Paths^*(I_B)$ with $\probMeasure^{I_B}(u)  > 0$ that reaches $C$ while not visiting any state labeled with $\mathit{accept}$ (as only accepting states in BSCCs of $I_B$ can have label $\mathit{accept}$). Since $C$ is rejecting, no state in $C$ is of the form $\langle s, q, f \rangle$ for $f \in \afinal^{\pfdfa^q}$, and so in particular no state labeled with $\mathit{accept}$ is reachable after entering $C$. Hence, for every $\pi \in Cyl(u)$ we have that $\pi$ violates $\lozenge \mathit{accept}$. Thus, $\probMeasure^{I_B}(\lozenge \mathit{accept}) \leq 1 - \probMeasure^{I_B}(u) < 1$. 
	
	Regarding the reverse direction, let $\probMeasure^{I_B}(\lozenge \mathit{accept}) < 1$ and assume towards a contradiction that $I_B$ does not have a rejecting BSCC. Then every BSCC of $I_B$ must contain at least one state $\langle s, q, f \rangle$ for some $f \in \afinal^{\pfdfa^q}$. Because $I_B$ is a DTMC, almost surely a path is generated that reaches a BSCC of $I_B$ and visits all its states infinitely often \cite{PoMC}. But then $ \probMeasure^{I_B}(\lozenge \mathit{accept}) = 1$, a contradiction. 
	
	\smallskip 
	
	Following \Cref{eq:linear-time-computation-good-bscc,lem:single-inner-product-sufficient}, $B$ is good iff $\probMeasure^{I_B}(\lozenge \mathit{accept}) = 1$. As almost sure reachability in $I_B$ can be decided in time linear in the size of $I_B$ \cite{PoMC}, the claim follows.
\end{proof}

\LemAccBSCCiffGoodBSCCFDFA*
\begin{proof}
	The proof follows by combining \Cref{lem:acc-bscc-implies-good-bscc,lem:good-bscc-implies-acc-bscc}. 
\end{proof}
\begin{lemma}\label{lem:acc-bscc-implies-good-bscc}
	Let $u\in \Paths^*(\dtmc)$ such that the lifting of $u$ to $\dtmc \otimes \automaton$ reaches a BSCC $B_\automaton$, and the lifting of $u$ to  $\dtmc \otimes \lfdfa$ reaches a BSCC $B_\lfdfa$. If $B_\automaton$ is accepting then $B_\lfdfa$ is good.
\end{lemma}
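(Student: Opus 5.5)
We argue by contradiction, following the proof sketch of \Cref{lem:accepting-bscc-iff-good-bscc}. Let $B_\automaton$ be accepting, let $\langle s, q \rangle$ be the state of $B_\lfdfa$ reached by the lifting of $u$ to $\dtmc \otimes \lfdfa$, and let $\langle s, p \rangle$ be the state of $B_\automaton$ reached by the lifting of $u$ to $\dtmc \otimes \automaton$. By \Cref{lem:single-inner-product-sufficient} we may take $I_{B_\lfdfa} = I_{\langle s, q \rangle}$. Assume towards a contradiction that $I_{\langle s, q \rangle}$ has a rejecting BSCC $C$.

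The construction of the witness cycle proceeds in three steps, each a finite path of $\dtmc$ starting in $s$.
\begin{enumerate}
	\item Choose $w \in \Paths^*(s)$ whose lifting to $I_{\langle s, q \rangle}$ ends in some state of $C$. Its lifting to $\dtmc \otimes \automaton$ stays in $B_\automaton$, since $B_\automaton$ is bottom.
	\item Extend $w$ by a finite path $x$ such that the lifting of $wx$ to $\dtmc \otimes \automaton$ from $\langle s, p \rangle$ visits every state of $B_\automaton$. This is possible because $B_\automaton$ is strongly connected.
	\item Extend further by $y$ so that the lifting of $v = wxy$ returns to $\langle s, p \rangle$ in $\dtmc \otimes \automaton$ and, simultaneously, to $\langle s, q, r \rangle$ for some $r$ in $I_{\langle s, q \rangle}$.
\end{enumerate}
Throughout, the lifting to $I_{\langle s, q \rangle}$ stays inside $C$, because $C$ is bottom.

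Step 3 is the main obstacle, since we must return to both $p$ and $q$ at once. The plan is to work in the joint product $\dtmc \times \automaton_p \times \lfdfa_q \times \pfdfa^q$, which is a DTMC. From the state reached after $wx$, almost surely some BSCC of this product is reached. The projection of that BSCC lies in $B_\automaton$ and, on the $I$-part, in $C$; its $\lfdfa$-component lies in $B_\lfdfa$. The difficulty is that it is not yet clear that this BSCC contains a state whose first three components are $(s,p,q)$. To handle this, I would instead choose the initial state more carefully. Before fixing $w$ and $x$, first move to a BSCC $D$ of $\dtmc \times \automaton_p \times \lfdfa_q \times \pfdfa^q$ reachable from $\langle s,p,q,\pfdfa^q(\varepsilon)\rangle$, and pick a state $\langle s^*, p^*, q^*, r^* \rangle \in D$. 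Its projections $\langle s^*, p^* \rangle$ and $\langle s^*, q^* \rangle$ lie in $B_\automaton$ and $B_\lfdfa$ respectively; moreover $D$ projects onto all of $B_\automaton$, since every state of $B_\automaton$ stays reachable. We then replace $u$ by $u$ extended with the path to this state; this leaves $B_\automaton$ and $B_\lfdfa$ unchanged. Choosing $\langle s^*, q^* \rangle$ as the representative, \Cref{lem:single-inner-product-sufficient} again applies, and $I_{\langle s^*, q^* \rangle}$ has a rejecting BSCC as well. Because $D$ is strongly connected, we can now pick the cycle $v$ inside $D$: it first goes to a state whose $\pfdfa$-part lies in the rejecting BSCC, then visits all of $B_\automaton$, and finally returns to $\langle s^*, p^*, q^*, \cdot \rangle$. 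Some care is needed because the $\pfdfa$-component here is relative to $\pfdfa^{q}$ rather than $\pfdfa^{q^*}$; I would run a fresh copy of $\pfdfa^{q^*}$ from the new base point to fix this.

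The contradiction then comes from comparing the two automata on $\trace(u)\trace(v)^\omega$. On the $\automaton$ side, $v$ closes a cycle in $B_\automaton$ visiting all its states, so repeating it gives a path as in \Cref{lem:accepting-and-rejecting-bscc-dra-product}, and $\trace(u)\trace(v)^\omega \in E$. On the FDFA side, $(\trace(u), \trace(v))$ is normalized because $v$ returns to the $\lfdfa$-base state. However, $\trace(v) \notin \accLanguage(\pfdfa^{q})$, because the run in $I$ ends in the rejecting BSCC. By saturation, no normalized decomposition of this word is accepted, so $\trace(u)\trace(v)^\omega \notin \accLanguage(\fdfa) = \UP(E)$, contradicting membership in $E$.
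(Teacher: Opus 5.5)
Your proof is correct, but it resolves the key difficulty, closing a cycle simultaneously in $\dtmc \otimes \automaton$ and $\dtmc \otimes \lfdfa$, differently from the paper.

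The paper keeps the base point $\langle s, q \rangle$ and the rejecting BSCC $C$ of $I_{\langle s, q \rangle}$ fixed and synchronizes by a pigeonhole iteration. It alternates paths $w_i$ (each starting with the fixed path into $C$, visiting all of $B_\automaton$ and returning to $\langle s, p \rangle$) with paths $r_i$ (returning to $\langle s, q \rangle$ in $\dtmc \otimes \lfdfa$). It then takes as period the block $w_{j+1} r_{j+1} \cdots w_k r_k$ between indices $j < k$ at which the $\automaton$-state after $r_j$ and after $r_k$ coincides. You instead extend the prefix into a BSCC $D$ of the synchronized product, where simultaneous return is immediate from strong connectivity. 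This replaces the iteration by a short argument, at the price of moving the base point.

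Three remarks tighten your version:
\begin{itemize}
\item The coordinate $\pfdfa^q$ in your product is superfluous and is the only source of the ``some care'' issue. Work in $\dtmc_s \times \automaton_p \times \lfdfa_q$ and run $\pfdfa^{q^*}$ merely alongside. You should say explicitly why this fix is not circular. That copy never has to return to its initial state; it only has to end in the rejecting BSCC, which is absorbing. And since $D$ is bottom, any path from the new base point that drives $I_{\langle s^*, q^* \rangle}$ into its rejecting BSCC stays inside $D$, so it can be completed to a cycle in $D$.
\item By the same closure argument you used for $B_\automaton$, $D$ projects onto all of $B_\lfdfa$. So you may pick a state of $D$ whose $\dtmc \otimes \lfdfa$-component is still $\langle s, q \rangle$, keep the original $C$, and avoid the second appeal to \Cref{lem:single-inner-product-sufficient}.
\item In your last paragraph, the prefix and progress automaton must be the new ones, i.e.\ $\pfdfa^{q^*}$ rather than $\pfdfa^{q}$.
\end{itemize}
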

\begin{proof}
	Let $B_\automaton$ be accepting, let the lifting of $u$ to $\dtmc \otimes \lfdfa$ end in $\langle s, q \rangle $, and let $I_B = I_{B_\lfdfa} = \dtmc_s \times \lfdfa_q \times \pfdfa^q$. Towards a contradiction, assume that $I _B$ contains a rejecting BSCC $C$. 
	
	Let $v$ be a finite path in $I_B$ that ends in $C$. 
	Because the first two components of any state in $I_B$ correspond to states in $\dtmc \otimes \lfdfa$ and since all considered automata are total, $C$ must contain for each $\langle s', q' \rangle \in B_{\lfufa}$ at least one state $\langle s', q', r' \rangle$. W.l.o.g. we thus assume that $v$ ends in a state $\langle s, q, r \rangle \in C$. Then the lifting of $uv$ to $\dtmc \otimes \lfdfa$ ends in $\langle s, q \rangle$ and, as $C$ is rejecting, $r \notin \afinal^{\pfdfa^q}$.
	
	Let $\langle s, p \rangle \in B_\automaton$ be the state of $\dtmc \otimes \automaton$ reached after moving according to $u$, and let $\langle s, p_1 \rangle$ be the state in $B_\automaton$ reached after $uv$. 
	Because $B_\automaton$ is a BSCC of $\dtmc \otimes \automaton$ there is a finite path $w_1 \in \Paths^*(s)$ of $\dtmc$ that, when lifted to $\dtmc \otimes \automaton$ and followed from $\langle s, p_1 \rangle$, brings $B_\automaton$ back to $\langle s, p \rangle$ while visiting all states of $B_\automaton$ at least once. Because $B_\automaton$ is a BSCC, we can choose $w_1 = vz$ for $v$ as above and some finite $z$. Let $\langle s, q_1 \rangle$ be the state in $\dtmc \otimes \lfdfa$ reached from $\langle s, q \rangle$ on $w_1$. Because $\langle s, q_1 \rangle \in B_\lfufa$, there is a finite path $r_1$ taking $\langle s, q_1 \rangle$ back to $\langle s, q \rangle$ in $\dtmc \otimes \lfdfa$. Let $\langle s, p_2 \rangle$ be the state in $B_\automaton$ reached when following $r_1$ from $\langle s, p \rangle$. Now, from $\langle s, p_2 \rangle$ there is again a $w_2 = vz_2$ that takes $B_\automaton$ back to $\langle s, p \rangle$ while visiting all states of $B_\automaton$.
	
	By iterating this argument we obtain a sequence of finite words $w_i$ and $r_i$ such that, for all $i$, $w_i$ starts with $v$ and takes the BSCC $B_\automaton$ from the state $\langle s, p_i \rangle$ back to $\langle s, p \rangle$ while visiting all states of $B_\automaton$, and $r_i$ takes $\dtmc \otimes \lfdfa$ from the state $\langle s, q_i \rangle$ back to $\langle s, q \rangle$. As $B_\lfufa$ and $B_\automaton$ are finite, there must be $j, k \in \mathbb{N}$ with $k > j$ such that, after $r_{j}$ and $r_{k}$, $\dtmc \otimes \automaton$ is in the same state $\langle s, p^* \rangle$. Pictorially, the construction looks as follows: 
	\begin{alignat*}{8}
		& \dtmc {\otimes} \lfdfa\colon \overset{u}{\longrightarrow} \langle s, q \rangle &&\overset{v}{\longrightarrow} \langle s, q \rangle &&\overset{w_1}{\longrightarrow} \langle s, q_1 \rangle &&\overset{r_1}{\longrightarrow} \langle s, q \rangle &&\overset{w_2}{\longrightarrow} \ldots &&\overset{r_{j}}{\longrightarrow} \langle s, q \rangle &&\overset{w_{{j}+1}}{\longrightarrow} \ldots &&\overset{r_{k}}{\longrightarrow} \langle s, q \rangle \\
		& \dtmc {\otimes} \automaton\colon \overset{u}{\longrightarrow} \langle s, p \rangle &&\overset{v}{\longrightarrow} \langle s, p_1 \rangle &&\overset{w_1}{\longrightarrow} \langle s, p \rangle &&\overset{r_1}{\longrightarrow} \langle s, p_2 \rangle &&\overset{w_2}{\longrightarrow} \ldots &&\overset{r_{j}}{\longrightarrow} \langle s, p^* \rangle &&\overset{w_{{j}+1}}{\longrightarrow} \ldots &&\overset{r_{k}}{\longrightarrow} \langle s, p^* \rangle 
	\end{alignat*}
	Now let $x = uvw_1r_1 \ldots w_{j-1}r_{j}$ and $y = w_{j+1} \ldots r_{k}$. Then the lifting of $x$ to $\dtmc \otimes \automaton$ ends in some state $\langle s, p^* \rangle \in B_\automaton$, and the lifting of $y$ to $\dtmc \otimes \automaton$ starting in $\langle s, p^* \rangle$ closes a loop on $\langle s, p^* \rangle$ that visits all state of $B_{\automaton}$ at least once. Moreover, the lifting of $x$ to $\dtmc \otimes \lfdfa$ ends in the state $\langle s, q \rangle$, and the lifting of $y$ to $\dtmc \otimes \lfdfa$ closes a loop on $\langle s, q \rangle$. 
	
	As $B_{\automaton}$ is accepting and because upon $xy^\omega$ all states of $B_\automaton$ are visited infinitely often, it follows from \Cref{lem:accepting-and-rejecting-bscc-dra-product} that $xy^\omega \in E$. On the other hand, $(x, y) \in \Sigma^{*+}$ is normalized w.r.t $\lfdfa$ (since $\lfdfa(x) = q = \delta^{\lfdfa}(q, y)$), and $y \notin \accLanguage(\pfdfa^q)$. The latter holds since $y$ starts with $w_{j+1} = vz_{j+1}$, and so the lifting of $y$ to $I_B$ reaches the rejecting BSCC $C$ after $v$ and cannot leave $C$ anymore. As $y$ closes a loop on $\langle s, q \rangle$ in $\dtmc \otimes \lfdfa$, the state reached in $I_B$ on $y$ must be of the form $\langle s, q, r \rangle$ for some $r \in \pfdfa^q$. Because $C$ is rejecting, it follows that $r \notin \afinal^{\pfdfa^q}$, so $y \notin \accLanguage(\pfdfa^q)$.  Thus, $(x, y) \notin \llbracket \fdfa \rrbracket$, and the saturation of $\fdfa$ implies that $xy^\omega \notin \accLanguage_\omega(\fdfa) = E$, a contradiction.
\end{proof}

To complete the proof of \Cref{lem:accepting-bscc-iff-good-bscc} it remains to show that if $u$ as in the lemma reaches a good BSCC of $\dtmc \otimes \lfdfa$ then $B_\automaton$ is accepting. Before showing this claim in \Cref{lem:good-bscc-implies-acc-bscc}, we state some additional lemmas used in the proof.

Let $\langle s, q \rangle \in B_\lfdfa$ be the state reached in $\dtmc \otimes \lfdfa$ after reading $u$, and let $\langle s, p \rangle \in B_\automaton$ be the corresponding state in $\dtmc \otimes \automaton$. Given $I_B =  I_{B_\lfdfa} = \dtmc_s \times \lfdfa_q \times \pfdfa^q$, we set $G = G_{B, p} = I_B \times \automaton_p$. Intuitively, $G$ annotates the states of $I_B$ with information on the current state of $\automaton$. 
\begin{restatable}{lemma}{LemAccBSCCinG}\label{lem:bscc-in-full-product}
	If every BSCC of $I_B = I_{\langle s, q \rangle}$ contains an accepting state then there is a BSCC in $G = I_B \times \automaton_p$ that contains a state $\langle s, q, f, a \rangle$ for an $f \in \afinal^{\pfdfa^q}$.
\end{restatable}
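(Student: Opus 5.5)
The plan is to exploit the projection relationship between $G = I_B \times \automaton_p$ and $I_B$. Since $\automaton$ is deterministic and total, $G$ is again a DTMC. Its transition structure is exactly that of $I_B$, with the third-party component $a$ updated deterministically. Thus the projection $\langle s', q', r, a \rangle \mapsto \langle s', q', r \rangle$ maps paths of $G$ onto paths of $I_B$, and every path of $I_B$ lifts uniquely to $G$.

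\textbf{Key steps.}

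\begin{enumerate}
\item Pick any BSCC $D$ of $G$; one exists since $G$ is a finite DTMC. Let $D\vert_{I_B}$ denote its projection to $I_B$.
\item Show that $D\vert_{I_B}$ is contained in a single BSCC $C$ of $I_B$.
  \begin{itemize}
  \item Every successor in $I_B$ of a projected state is the projection of a successor in $G$, because $\automaton$ is total and deterministic.
  \item Hence $D\vert_{I_B}$ is closed under successors in $I_B$.
  \item It is also strongly connected, since $D$ is strongly connected and projection preserves paths.
  \item A nonempty, successor-closed, strongly connected set in a finite DTMC is itself a BSCC. Therefore $D\vert_{I_B} = C$ for some BSCC $C$ of $I_B$.
  \end{itemize}
\item Apply the hypothesis. $C$ contains an accepting state $\langle s, q, f \rangle$ with $f \in \afinal^{\pfdfa^q}$. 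Because $D$ projects onto all of $C$, some state of $D$ has the form $\langle s, q, f, a \rangle$, which is exactly the claim.
\end{enumerate}

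\textbf{Main obstacle.} The delicate step is the equality $D\vert_{I_B} = C$, rather than mere inclusion. I would argue it as follows. Fix any state $d \in D$ and any $c \in C$. There is a path in $I_B$ from the projection of $d$ to $c$. Lift this path to $G$ starting at $d$; it stays inside $D$ because $D$ is bottom. Its endpoint therefore lies in $D$ and projects to $c$.

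The same lifting argument gives closure under successors. Some care is needed with the convention that $G$ starts in a specific initial state. However, the argument only uses $G$ as a graph, together with the fact that at least one BSCC of $G$ exists. If one wants the BSCC to be reachable from the initial state, I would take a BSCC reachable from the initial state of $G$; such a BSCC always exists. Everything else is routine.
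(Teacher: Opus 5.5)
Your proof is correct, but it runs in the opposite direction from the paper's.

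\textbf{The paper's route.} It argues bottom-up from the hypothesis. It lifts a run of $I_B$ that ends in an accepting state $\langle s,q,f\rangle$ of some BSCC $C$ of $I_B$ to a state $\langle s,q,f,a_1\rangle$ of $G$. As long as the current state $\langle s,q,f,a_i\rangle$ is not in a BSCC of $G$, it moves to a state from which $\langle s,q,f,a_i\rangle$ is unreachable. It then returns inside $C$ to some $\langle s,q,f,a_{i+2}\rangle$ with a different $\automaton$-component. Finiteness of $\automaton$ forces one of these states to recur, which contradicts that each is unreachable from the later ones. Hence the iteration must stop at a state $\langle s,q,f,a\rangle$ lying in a BSCC of $G$.

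\textbf{Your route.} You argue top-down. You take any BSCC $D$ of $G$ that is reachable from the initial state of $G$. Its projection is closed under successors, because every $I_B$-transition lifts when $\automaton$ is total and deterministic. It is also strongly connected, so it is itself a BSCC of $I_B$, and the hypothesis applies to it directly.

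\textbf{Comparison.} Your argument avoids the pigeonhole iteration entirely. It makes reachability of the BSCC from the initial state of $G$ immediate, which is exactly what \Cref{cor:duo-normalized-word-into-bscc-of-full-product} needs. It also proves more: every BSCC of $G$ projects onto a BSCC of $I_B$ and therefore contains a state of the form $\langle s,q,f,a\rangle$. The paper's iteration additionally locates the BSCC as one reachable from a prescribed lifted accepting state $\langle s,q,f,a_1\rangle$. Your argument gives this too if you choose $D$ reachable from that state.

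Finally, the extra lifting argument in your ``main obstacle'' paragraph is redundant. Step~2 already shows that the projection of $D$ is a BSCC, so the equality with $C$ is immediate.
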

\begin{proof}
	Assume that every BSCC of $I_B$ contains a state $\langle s, q, f \rangle$. Let $z_1$ be a finite run of $I_B$ that ends in such a state in some BSCC $C$ of $I_B$, and let $\langle s, q, f, a_1 \rangle$ be the state of $G$ reached when lifting $z_1$ to $G$. If $\langle s, q, f, a_1 \rangle$ is in a BSCC of $G$ then we are done. 
	
	Otherwise, there must be a $\langle s', q', r', a_2 \rangle$ reachable from $\langle s, q, f, a_1 \rangle$ via a finite $z_2$ such that from $\langle s', q', r', a_2 \rangle$ the state $\langle s, q, f, a_1 \rangle$ is not reachable anymore. 
	
	Since, along $z_1$, $I_B$ enters the BSCC $C$, the state $\langle s', q', r' \rangle$ must be in $C$ as well. Hence, there is a finite $z_3$ from $\langle s', q', r' \rangle$ back to $\langle s, q, f \rangle$. Lifted to $G$ this means that the run of $G$ on $z_3$ takes the model from $\langle s', q', r', a_2 \rangle$ to a state $\langle s, q, f, a_3 \rangle$ with $a_3 \neq a_1$ (as otherwise, $\langle s, q, f, a_1 \rangle$ would be reachable from $\langle s', q', r', a_2 \rangle$). 
	
	When repeating this procedure we obtain a sequence of finite $z_i$ such that for all odd $i$ the product $G$ is in a state of the form $\langle s, q, f, a_i \rangle$, and such that $\langle s, q, f, a_i \rangle$ is not reachable anymore from the state $\langle s_{i+1}, q_{i+1}, r_{i+1}, a_{i+1} \rangle$ reached when following $z_{i+1}$ from $\langle s, q, f, a_i \rangle$.
	
	Because $\automaton$ is finite there are only finitely many choices for the fourth component of any state in $G$. Hence, there must be finite numbers $j, k$ with $k > j$ such that $G$ is in the same state $\langle s, q, f, a^* \rangle$ after following $z_1 \ldots z_{j}$ and $z_1 \ldots z_{j} \ldots z_{k}$. But then $\langle s, q, f, a^* \rangle$ is reachable from the state $\langle s_{j+1}, q_{j+1}, r_{j+1}, a_{j+1} \rangle$ reached on $z_{j+1}$ from $\langle s, q, f, a^* \rangle$, which contradicts the assumption on the sequence $(z_i)_{i \geq 1}$. It follows that the set of states reachable from $\langle s, q, f, a^* \rangle$ is strongly connected and cannot be left anymore, i.e., forms a BSCC of $G$.
\end{proof}

\begin{lemma}\label{lem:duo-normalized-word-reaches-bscc}
	If every BSCC of $I_B = I_{\langle s, q \rangle}$ contains an accepting state then there is a $v \in \Paths^*(\dtmc_s)$ such that the lifting of $v$ to $I_B$ ends in an accepting state $\langle s, q, f \rangle$ in a BSCC of $I_B$ on which $v$ also closes a loop, i.e., it holds that $I_B(v) = \langle s, q, f \rangle = I_B(v^2)$. 
\end{lemma}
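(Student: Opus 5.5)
We want a finite path $v$ of $\dtmc_s$ such that in $I_B$ (a DTMC, since $\lfdfa$ and $\pfdfa^q$ are deterministic) the lifting of $v$ from $\langle s,q,\ainit^{\pfdfa^q}\rangle$ ends in an accepting state $\langle s,q,f\rangle$ in a BSCC, and reading $v$ once more from $\langle s,q,f\rangle$ returns to $\langle s,q,f\rangle$. The plan is to obtain $v$ as a suitable power of a path that reaches such an accepting state, using a pigeonhole argument on the deterministic transition function induced by a cycle.

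First, by hypothesis every BSCC of $I_B$ contains an accepting state, and some BSCC $C$ of $I_B$ is reachable. Pick a finite path $z$ of $\dtmc_s$ whose lifting to $I_B$ ends in an accepting state $\langle s,q,f_0\rangle\in C$. Since $C$ is bottom and strongly connected, for any state of $C$ reached later we can return to $\langle s,q,f_0\rangle$. Note that $z$ starts and ends in $\dtmc$-state $s$, so $z$ (minus its first state, in the usual concatenation convention for paths) can be iterated as a path of $\dtmc_s$, and its lifting in $\dtmc\otimes\lfdfa$ is a loop on $\langle s,q\rangle$.

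Second, consider the deterministic map $g\colon C\to C$ that sends a state $c$ of $C$ whose first two components are $\langle s,q\rangle$ to the state reached after reading $z$ from $c$. This is where we must be careful, because $g$ need not map accepting states to accepting states. So instead I would build $y = z x$, where $x$ is a path that, from the state reached after $z$, leads back to $\langle s,q,f_0\rangle$ within $C$; then $y$ started from the initial state ends in $\langle s,q,f_0\rangle$, and from $\langle s,q,f_0\rangle$ it ends in some state $g(\langle s,q,f_0\rangle)\in C$. Consider the sequence $c_k = $ state reached from the initial state after $y^k$; it lives in the finite set of states of $C$ with first components $\langle s,q\rangle$, and $c_{k+1}=g(c_k)$. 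By pigeonhole there are $j<k$ with $c_j=c_k$, so with $m=k-j$ and $N$ a multiple of $m$ with $N\geq j$, the state $c_N$ is a fixed point of $g^m$, and taking $N$ so that $c_N=c_{2N}$ gives the standard idempotent power.

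Third, $c_N$ must be accepting: here I would modify the construction by appending, after $y^N$, a return path $x'$ from $c_N$ to an accepting state in $C$, and redo the pigeonhole with the word $y^N x'$, iterating. This is the main obstacle: ensuring simultaneously that the endpoint is accepting and is a fixed point of reading $v$. The cleanest fix I would try is to mimic \Cref{lem:bscc-in-full-product}: consider the product of $I_B$ with the ``memory'' of where each state of $C$ is sent, i.e., work with the finite transformation monoid of $C$ induced by loops on $\langle s,q\rangle$; choose a loop word $v_0$ ending in an accepting state $a$, then take an idempotent power $e=v_0^{N}$ of its transformation, and finally take $v = v_0^{N}w$ where $w$ leads from $e$'s image back to $a$ and check with finiteness (iterating and applying pigeonhole again as in \Cref{lem:bscc-in-full-product}) that some such choice stabilises to a state $\langle s,q,f\rangle$ with $f\in\afinal^{\pfdfa^q}$ and $I_B(v)=I_B(v^2)$.
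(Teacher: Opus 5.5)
\textbf{The gap: saturation is never used.} Your plan does not use that $\fdfa$ is saturated, and without saturation the statement is false. So the step you flag as the main obstacle cannot be closed by the finiteness argument you sketch. Determinism and finiteness (idempotent powers in the transformation monoid, pigeonhole, return paths as in \Cref{lem:bscc-in-full-product}) only give you a word $v$ with $I_B(v)=I_B(v^2)$. Nothing forces that fixed point to be accepting. Appending a return path to an accepting state changes the word whose powers you iterate, so there is no reason for the iteration to stabilise.

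Here is a counterexample. Let $\dtmc$ have a single state $s$ with a self-loop, and let $\lfdfa$ have a single state $q$. Let $\pfdfa^q$ have states $p_0,p_1,p_2$ with $p_0\to p_1\to p_2\to p_1$ on the letter $\labelFunction(s)$, all other letters leading to a rejecting sink, and $\afinal^{\pfdfa^q}=\{p_1\}$. The only BSCC of $I_B$ is $\{\langle s,q,p_1\rangle,\langle s,q,p_2\rangle\}$, and it contains an accepting state. Yet a path $v$ reading $n$ letters ends in $p_1$ only if $n$ is odd, and then $v^2$ reads $2n$ letters and ends in $p_2$. So no such $v$ exists. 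This FDFA is not saturated: $(\varepsilon,\labelFunction(s))$ is accepted but $(\varepsilon,\labelFunction(s)^2)$ is not.

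\textbf{The missing idea: power-stability of saturated FDFA.} Let $x$ be the word read along your $z$.
\begin{itemize}
\item Since $z$ closes a loop on $\langle s,q\rangle$ in $\dtmc\otimes\lfdfa$ and its lifting ends in an accepting state, $(u,x)\in\llbracket\fdfa\rrbracket$ for every $u$ with $\lfdfa(u)=q$.
\item Saturation then gives $(u,x^i)\in\llbracket\fdfa\rrbracket$ for all $i\geq 1$ \cite{SPFA}. Hence $\pfdfa^q(x^i)\in\afinal^{\pfdfa^q}$ for every $i$, i.e., every state $c_k$ reached after $z^k$ is accepting.
\end{itemize}
With this, your pigeonhole from the second step finishes the proof when applied directly to powers of $z$; no return paths are needed. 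Choose $j<k$ with $c_j=c_k$, put $m=j(k-j)$ and $v=z^m$. Then $I_B(v)=c_m=c_{2m}=I_B(v^2)$, this state is accepting, and it lies in $C$ because $C$ is bottom. This is exactly the paper's argument.
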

\begin{proof}
	Let $C$ be a BSCC of $I_B$ and let $\langle s, q, f \rangle \in C$ for an $f \in \afinal^{\pfdfa^q}$, which must exist since we assume all BSCCs of $I_B$ to contain an accepting state. Let $\rho$ be a finite path in $I_B$ that ends in $\langle s, q, f \rangle$. Then the projection of $\rho$ onto $\dtmc \otimes \lfdfa$ induces a loop on $\langle s, q \rangle$, and for the trace $x$ of the projection of $\rho$ to $\dtmc$ it holds that $(u, x) \in \llbracket \fdfa \rrbracket$ for all $u \in \Sigma^*$ with $\lfdfa(u) = q$. 
	
	The saturation of $\fdfa$ implies that $(u, x^i) \in \llbracket \fdfa \rrbracket$ for all $i \geq 1$ \cite{SPFA}. Hence, the run of $\pfdfa^q$ on any $x^i$ must end in a final state $f_i$. Since $\vert \afinal^{\pfdfa^q} \vert < \infty$ there are minimal $j, k$ with $k > j$ and $\pfdfa^q(x^{j}) = f^* = \pfdfa^q(x^{k})$ for an $f^* \in \afinal^{\pfdfa^q}$. Let $n = k - j$. 
	
	Because $\pfdfa^q$ is deterministic it follows that, after reading $x^{j}$, the runs of $\pfdfa^q$ on repetitions of $x$ follow a cycle of length $x^n$ \cite{FDFAAORL}. Now let $m = n \cdot j$. Then $m \geq j$ and so the final state $f = \pfdfa^q(x^m)$ is on the aforementioned cycle. But then it follows that also $\delta^{\pfdfa^q}(f, x^m) = f$, because the run of $\pfdfa^q$ on $x^m$ starting in $f$ traverses this cycles exactly $j$ times. 
	
	All in all, the run of $I_B$ on $v=x^m$ reaches the state $\langle s, q, f \rangle \in C$ for $C$ a BSCC of $I_B$,
	and the run on $v$ starting in $\langle s, q, f \rangle$ closes a loop. Hence, $I_B(v) = \langle s, q, f \rangle = I_B(v^2)$. 
\end{proof}

\begin{corollary}\label{cor:duo-normalized-word-into-bscc-of-full-product}
	If every BSCC of $I_B = \dtmc_s \times \lfdfa_q \times \pfdfa^q$ contains an accepting state then there is a $v \in \Paths^*(\dtmc_s)$ such that the lifting of $v$ to $G$ reaches a state $\langle s, q, f, a \rangle$ in a BSCC of $G$, and additionally $v$ closes a loop on $\langle s, q, f, a \rangle$ in $G$. 
\end{corollary}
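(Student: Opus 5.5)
Combine \Cref{lem:bscc-in-full-product} with the idempotent-power argument from the proof of \Cref{lem:duo-normalized-word-reaches-bscc}, now carried out in the deterministic product $G = I_B \times \automaton_p$ instead of $I_B$. Everything in $G$ is deterministic, since $\dtmc$ is fed by its own state sequence and $\lfdfa$, $\pfdfa^q$ and $\automaton$ are all DFA/DRA.

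\textbf{Step 1: reach a BSCC of $G$ at an accepting state.} By \Cref{lem:bscc-in-full-product}, $G$ has a BSCC $D$ containing a state $\langle s, q, f_0, a_0 \rangle$ with $f_0 \in \afinal^{\pfdfa^q}$. Pick a finite path $\rho$ of $G$ from its initial state to $\langle s, q, f_0, a_0 \rangle$, and let $x$ be the corresponding path of $\dtmc_s$. Then $x$ returns to $s$, closes a loop on $\langle s, q\rangle$ in $\dtmc \otimes \lfdfa$, and is accepted by $\pfdfa^q$.

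\textbf{Step 2: take an idempotent power.} Consider $G$ as a deterministic automaton and look at the state reached from $\langle s,q,f_0,a_0\rangle$ on $x^i$, for $i \ge 0$. Since $D$ is bottom, all these states lie in $D$. Since $D$ is finite and $G$ deterministic, the sequence is eventually periodic: there are $j \ge 0$ and $n\ge1$ with the state after $x^{j}$ equal to the state after $x^{j+n}$. Taking $m$ a multiple of $n$ with $m \geq j$, as in the proof of \Cref{lem:duo-normalized-word-reaches-bscc}, the state $g$ reached from $\langle s,q,f_0,a_0\rangle$ on $x^m$ satisfies $\delta^G(g, x^m) = g$. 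Set $v = x^{m+1}$. From the initial state, $x$ leads to $\langle s,q,f_0,a_0\rangle$ and then $x^m$ leads to $g$, so $G(v) = g$. Moreover $\delta^G(g, v) = \delta^G(g, x^{m+1})$, and I have to check this equals $g$.

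Since $x^{m+1}$ is not a multiple of $n$ in general, it is cleaner to define everything relative to the initial state. Let $g_i = G(x^i)$ for $i \ge 1$; this sequence is eventually periodic with preperiod $j$ and period $n$. Choose $m$ to be a multiple of $n$ with $m \ge \max(j,1)$, and set $v = x^m$. Then $G(v) = g_m$ and $\delta^G(g_m, v) = g_{2m} = g_m$, because $2m \equiv m \pmod n$ and both indices are at least $j$. Also $g_m \in D$, since $g_1 = \langle s,q,f_0,a_0\rangle\in D$ and $D$ is bottom.

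\textbf{Step 3: $g_m$ has the required form.} By construction $g_m = \langle s, q, f, a\rangle$ for some $f \in \pfdfa^q$ and $a$, since $x$ returns to $s$ and to $q$. This is all the corollary asks for: a state in a BSCC of $G$ on which $v$ closes a loop.

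\textbf{Step 4 (optional): $f$ is accepting.} It is useful for later to also know $f\in\afinal^{\pfdfa^q}$. Saturation gives $(u,x^i)\in\llbracket\fdfa\rrbracket$ for all $i$ \cite{SPFA}, so $\pfdfa^q(x^m)$ is final, i.e.\ $f \in \afinal^{\pfdfa^q}$.

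\textbf{Main obstacle.} The delicate point is getting the loop and the reaching run to coincide at one state of $G$. This needs a single joint eventual-periodicity argument over the full product $G$, rather than separate arguments for $\pfdfa^q$ and $\automaton$. Determinism of $G$ together with the multiple-of-period trick from \Cref{lem:duo-normalized-word-reaches-bscc} resolves it.
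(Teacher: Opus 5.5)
Your proof is correct and follows essentially the same route as the paper. You reach an accepting state in a BSCC of $G$ via \Cref{lem:bscc-in-full-product}, then use determinism and finiteness of $G$ to pass to an idempotent power $v = x^m$ of the reaching word, with saturation keeping the $\pfdfa^q$-component accepting. Your Step~4 is not really optional, since \Cref{lem:good-bscc-implies-acc-bscc} needs $f \in \afinal^{\pfdfa^q}$, and your choice $m \geq \max(j,1)$ is slightly more careful than the paper's $v = x^{j\cdot n}$.
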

\begin{proof}
	Let $x$ be a finite word reaching a state $\langle s, q, f, a \rangle$ in a BSCC $C$ of $G$, which exists by \Cref{lem:bscc-in-full-product}. Similar to the proof of \Cref{lem:duo-normalized-word-reaches-bscc}, the finiteness of $G$ together with the saturation of $\fdfa$ yields that there must be $j, k$ with $j < k$ and such that, after following $x^{j}$ resp. $x^{k}$, the model is in the same state $\langle s, q, f^*, a^* \rangle$ for some $f^* \in \afinal^{\pfdfa^q}$. Taking $v = x^{j \cdot n}$ for $n = k - j$ yields the claim. 
\end{proof}

We now have all the pieces in place to prove the reverse of \Cref{lem:acc-bscc-implies-good-bscc}. 
\begin{lemma}\label{lem:good-bscc-implies-acc-bscc}
	Let $u\in \Paths^*(\dtmc)$ such that the lifting of $u$ to $\dtmc \otimes \automaton$ reaches a BSCC $B_\automaton$, and the lifting of $u$ to  $\dtmc \otimes \lfdfa$ reaches a BSCC $B_\lfdfa$. If $B_\lfdfa$ is good then $B_\automaton$ is accepting.
\end{lemma}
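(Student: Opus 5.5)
We argue by contraposition, as in the proof of \Cref{lem:acc-bscc-implies-good-bscc}. Let $\langle s, q\rangle$ be the state of $\dtmc \otimes \lfdfa$ reached after $u$ and $\langle s, p \rangle \in B_\automaton$ the state of $\dtmc \otimes \automaton$ reached after $u$. Set $G = G_{B,p} = I_B \times \automaton_p$. Since $B_\lfdfa$ is good, every BSCC of $I_B$ contains an accepting state. \Cref{cor:duo-normalized-word-into-bscc-of-full-product} therefore yields a $v \in \Paths^*(\dtmc_s)$ whose lifting to $G$ reaches a state $\langle s, q, f, a\rangle$ in a BSCC $D$ of $G$, with $f \in \afinal^{\pfdfa^q}$, and on which $v$ also closes a loop. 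In particular $\lfdfa(\trace(u)) = q = \delta^{\lfdfa}(q, \trace(v))$ and $\pfdfa^q(\trace(v)) = f$. Hence $(\trace(u), \trace(v)) \in \llbracket \fdfa \rrbracket$, and thus $\trace(u)\trace(v)^\omega \in \accLanguage_\omega(\fdfa) = E$.

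Next we transfer this to $\automaton$. The fourth component of $G$ tracks the run of $\automaton$ started in $p$, so the lifting of $uv$ to $\dtmc \otimes \automaton$ ends in $\langle s, a\rangle \in B_\automaton$, and $v$ closes a loop on $\langle s, a \rangle$. This loop, however, need not visit all states of $B_\automaton$, so \Cref{lem:accepting-and-rejecting-bscc-dra-product} does not apply directly. This is the main obstacle.

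To overcome it, we use that $D$ is a BSCC of $G$. Every state $\langle s', p'\rangle \in B_\automaton$ is reachable in $\dtmc \otimes \automaton$ from $\langle s, a\rangle$ via some path $z$. Following $z$ in $G$ stays inside $D$, and from there we can return to $\langle s, q, f, a\rangle$. By concatenating such excursions for every state of $B_\automaton$, we obtain a finite $y$ with the following properties: $y$ closes a loop on $\langle s, q, f, a\rangle$ in $G$, and its projection to $\dtmc \otimes \automaton$ is a loop on $\langle s, a\rangle$ visiting every state of $B_\automaton$.

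Now consider $x = uv$ and period $y$. The pair $(\trace(x), \trace(y))$ is normalized w.r.t.\ $\lfdfa$, since both $x$ and $y$ bring $\dtmc \otimes \lfdfa$ to $\langle s,q\rangle$. It is also accepted: starting from $\langle s, q, f, a\rangle$, which is the state after $v$ in $I_B$'s progress component, the word $y$ returns to $f$.

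To make this rigorous, we take the period $vy$ instead. Reading $vy$ from the initial state of $\pfdfa^q$ yields $f \in \afinal^{\pfdfa^q}$, and $vy$ loops on $\langle s,q\rangle$. Hence $(\trace(u), \trace(vy)) \in \llbracket \fdfa \rrbracket$, and so $\trace(u)\trace(vy)^\omega \in E$. On the other hand, in $\dtmc \otimes \automaton$ the word $vy$ leads from $\langle s,p\rangle$ to $\langle s,a\rangle$ and then covers all of $B_\automaton$. It need not return to $\langle s,p\rangle$, though. So we again use the pigeonhole iteration from the proof of \Cref{lem:acc-bscc-implies-good-bscc}: iterating $(vy)^i$, the $\automaton$-component after $u(vy)^i$ eventually repeats, say at indices $j<k$. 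The decomposition $(u(vy)^j, (vy)^{k-j})$ is then a loop in $\dtmc\otimes\automaton$ visiting all of $B_\automaton$, and it represents the same ultimately periodic word $\trace(u)\trace(vy)^\omega \in E$.

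If $B_\automaton$ were rejecting, \Cref{lem:accepting-and-rejecting-bscc-dra-product} would give that this word is not in $E$, a contradiction. Hence $B_\automaton$ is accepting.
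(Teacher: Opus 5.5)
Your proof is correct and follows essentially the paper's route. Both take $v$ from \Cref{cor:duo-normalized-word-into-bscc-of-full-product}, use the BSCC of $G$ to close a loop on $\langle s,q,f,a\rangle$ whose $\automaton$-projection covers $B_\automaton$, and then exhibit an accepted decomposition whose period starts with $v$; the paper's $(uv, wy)$ with $wy = vxy$ corresponds to your $(uv, vy)$. Your final pigeonhole iteration is superfluous: since $v$ and $y$ both loop on $\langle s,a\rangle$ in $\dtmc \otimes \automaton$, the choice $j=1$, $k=2$ already works.
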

\begin{proof}
	We show the claim by contraposition, i.e., we show that if  $B_\automaton$ is rejecting then $B = B_\lfdfa$ cannot be good. Let $\langle s, q \rangle$ be the state of $\dtmc \otimes \lfdfa$ reached via $u$, $\langle s, p \rangle$ the corresponding state of $\dtmc \otimes \automaton$, and $I_B = \dtmc_s \times \lfdfa_q \times \pfdfa^q$. Towards a contradiction, assume that $B$ is good, i.e., that every BSCC of $I_B$ contains a state $\langle s, q, f \rangle$ for some $f \in \afinal^{\pfufa^q}$. 
	
	For $G = I_B \times \automaton_p$ we get from \Cref{cor:duo-normalized-word-into-bscc-of-full-product} that there is a $v \in \Paths^*(\dtmc)$ whose lifting to $G$ ends in a state $\langle s, q, f, a \rangle$ of a BSCC $C$ of $G$, and closes a loop on $\langle s, q, f, a \rangle$. 
	
	Because $B_\automaton$ is a BSCC there is a path $w = vx$ for $v$ as above and some $x$ that, when lifted to $\dtmc \otimes \automaton$ and followed from $\langle s, a \rangle$, visits all states of $B_\automaton$ at least once. Let $\langle s', q', p' \rangle$ be the state reached in $I_B$ when following (the lifting of) $w$ starting in $\langle s, q, f \rangle$, and let $\langle s', a' \rangle$ be the corresponding state reached in $\dtmc \otimes \automaton$ from $\langle s, a \rangle$ on $w$. Hence, when moving according to the lifting of $w$ starting in $\langle s, q, f, a\rangle$, the state $\langle s', q', p', a' \rangle$ of $G$ is reached. Because $\langle s, q, f, a \rangle, \langle s', q', p', a' \rangle \in C$ there is a $y$ that takes $G$ from $\langle s', q', p', a' \rangle$ back to $\langle s, q, f, a \rangle$. 
	
	From the construction above it follows that the (lifting of the) finite word $wy$ closes a loop in $B_\automaton$ on the state $\langle s, a \rangle$ reached after moving according to $uv$ in $\dtmc \otimes \automaton$, and that along this loop all states in $B_\automaton$ are visited at least once. Because $B_\automaton$ is rejecting, \Cref{lem:accepting-and-rejecting-bscc-dra-product} implies that $uv(wy)^\omega \notin E$. 
	
	On the other hand, the lifting of $vwy = vvxy$ to $I_B$ yields the sequence 
	\begin{center}
		$\langle s, q, \ainit^{\pfdfa^q} \rangle \overset{v}{\longrightarrow} \langle s, q, f \rangle \overset{v}{\longrightarrow} \langle s, q, f \rangle \overset{x}{\longrightarrow} \langle s', q', p' \rangle \overset{y}{\longrightarrow} \langle s, q, f \rangle$. 
	\end{center}
	
	As $v$ closes a loop on $\langle s, q, f \rangle$, and because all components are deterministic, the lifting of $wy = vxy$ to $I_B$ is then of the form 
	\begin{center}
		$\langle s, q, \ainit^{\pfdfa^q} \rangle \overset{v}{\longrightarrow} \langle s, q, f \rangle \overset{x}{\longrightarrow} \langle s', q', p' \rangle \overset{y}{\longrightarrow} \langle s, q, f \rangle$. 
	\end{center}
	
	It follows that $\pfdfa^q(vxy) = \pfdfa^q(wx) = f \in \afinal^{\pfdfa^q}$ and $\lfdfa(uv) = q = \delta^\lfdfa(q, wy)$. Therefore, $(uv, wy)$ is normalized w.r.t. $\lfdfa$ and $wy \in \accLanguage(\pfdfa^q)$, so $(uv, wy) \in \llbracket \fdfa \rrbracket$. But then $uv(wy)^\omega \in E$, a contradiction. 
	
	Thus, $I_B$ must contain at least one rejecting BSCC, and so $B$ cannot be good. It follows by contraposition that if $B_\lfdfa$ is good then $B_\automaton$ is accepting, as was to be shown.  
\end{proof}

\LemResidualLanguages*
\begin{proof}
	Let $p = \lfdfa(x) = \lfdfa(y)$ and let $\fdfa_p$ be like $\fdfa$ but with $\lfdfa_p$ as leading automaton. 
	
	We start by proving that $\fdfa_p$ is saturated. To see this, let $w = uv^\omega$ and let $(x_1, y_1), (x_2, y_2)$ be two normalized (w.r.t. $\lfdfa_p$) decompositions of $w$. Moreover,  let $(x_1, y_1) \in \llbracket \fdfa_p \rrbracket$, so $y_1 \in \accLanguage(\pfdfa^{q_1})$ for $q_1 = \lfdfa_p(x_1)$. 
	Now let $r \in \Sigma^*$ be such that $\lfdfa(r) = p$. Then $(rx_1, y_1)$ and $(rx_2, y_2)$ are normalized decompositions of $ruv^\omega$. Moreover, $\lfdfa(rx_1) = q_1$ and $y_1 \in \accLanguage(\pfdfa^{q_1})$, so $(rx_1, y_1) \in \llbracket \fdfa \rrbracket$. It follows from the saturation of $\fdfa$ and the fact that $rx_1y_1^\omega = rx_2y_2^\omega$ that $(rx_2, y_2) \in \llbracket \fdfa \rrbracket$, and so $(x_2, y_2) \in \llbracket \fdfa_p \rrbracket$. Hence, $\fdfa_p$ is saturated. 
	
	\smallskip 
	To prove the claim of the lemma, let $u \in \Sigma^*$, $v \in \Sigma^+$. Because $\accLanguage(\fdfa) = \UP(E)$ we have
	\begin{align}
		xuv^\omega \in E \quad &\text{ iff } \quad \text{ there are } i, j \in \mathbb{N} \text{ such that } (xuv^i, v^j) \in \llbracket \fdfa \rrbracket \nonumber \\ &\text{ iff } \quad \text{ there are } i, j \in \mathbb{N} \text{ such that } (uv^i, v^j) \in \llbracket \fdfa_p \rrbracket. \label{eq:strong-properties-bscc-eq}
	\end{align}
	
	The direction from right to left in the first equivalence is clear. For the reverse direction, observe that $(xu, v) \in \Sigma^{*+}$ is a decomposition for $xuv^\omega$. Using a normalization procedure described in \cite{FDFAAORL} we obtain values $i,j \in \mathbb{N}$ such that $(xuv^i, v^j)$ is a normalized decomposition of $xuv^\omega$. Because $xuv^\omega \in \UP(E) = \accLanguage(\fdfa)$ and the saturation of $\fdfa$ we have $(xuv^i, v^j) \in \llbracket \fdfa \rrbracket$. 
	
	Regarding the second equivalence, let first $(xuv^i, v^j) \in \llbracket \fdfa \rrbracket$. Then
	\begin{itemize}
		\item $\lfdfa_p(uv^i) = \delta^\lfdfa(p, uv^i) =\lfdfa(xuv^i) \eqcolon q$, 
		\item $\delta^{\lfdfa_p}(q, v^j) = \delta^{\lfdfa}(q, v^j) = q$ since $(xuv^i, v^j)$ is normalized w.r.t. $\lfdfa$, and 
		\item $v^j \in \accLanguage(\pfdfa^{q})$ since $(xuv^i, v^j) \in \llbracket \fdfa \rrbracket$ and the progress automata of $\fdfa$ and $\fdfa_p$ coincide.
	\end{itemize} 
	Thus, $(uv^i, v^j)$ is normalized w.r.t. $\lfdfa_p$ and accepted by $\fdfa_p$, so $(uv^i, v^j) \in \llbracket \fdfa_p \rrbracket$. 
	
	Similarly, for the reverse direction let 
	$(uv^i, v^j) \in \llbracket \fdfa_p \rrbracket$. Then
	\begin{itemize}
		\item $\lfdfa(xuv^i) = \delta^\lfdfa(p, uv^i) = \lfdfa_p(uv^i) \eqcolon q'$, 
		\item $\delta^\lfdfa(q', v^j) = \delta^{\lfdfa_p}(q', v^j) = q'$ since $(uv^i, v^j)$ is normalized w.r.t. $\lfdfa_p$, and 
		\item $v^j \in \accLanguage(\pfdfa^{q'}) = \accLanguage(\pfdfa^{\lfdfa_p(uv^i)}) = \accLanguage(\pfdfa^{\lfdfa(xuv^i)})$. 
	\end{itemize}
	It follows that $(xuv^i, v^j) \in \llbracket \fdfa \rrbracket$. In particular, we can take the same values for $i$ and $j$ in the second and third statement of \Cref{eq:strong-properties-bscc-eq}. 
	
	\smallskip
	Analogously to \Cref{eq:strong-properties-bscc-eq} we can show that $yuv^\omega \in E$ iff there are $k,l \in \mathbb{N}$ such that $(uv^k, v^l) \in \llbracket \fdfa_p \rrbracket$. Hence,
	\begin{align*}
		xuv^\omega \in E \quad \text{ iff } \quad yuv^\omega \in E
	\end{align*}
	and so for all $w = uv^\omega$ we have $w \in \UP(\resLang_x)$ iff $w \in \UP(\resLang_y)$, i.e., $\UP(\resLang_x) = \UP(\resLang_y)$. Since $\resLang_x$ and $\resLang_y$  are $\omega$-regular (an automaton accepting, e.g., $\resLang_x$ can be obtained from a DRA $\automaton$ for $E$ by changing its initial state to $\automaton(x)$) it follows from the fact that two $\omega$-regular languages coincide iff they contain exactly the same ultimately periodic words \cite{UPWROL} that $\resLang_x = \resLang_y$ 
\end{proof}

\PropComputationOfProbabilites*
\begin{proof}
	From \Cref{lem:residual-languages-coincide} it follows that $\dtmc$ almost surely generates a path whose lifting to $\dtmc \otimes \lfdfa$ reaches a BSCC $B_\lfdfa$, and whose lifting to $\dtmc \otimes \automaton$ reaches a BSCC $B_\automaton$. We have proven in \Cref{lem:accepting-bscc-iff-good-bscc} that in this case $B_\lfdfa$ is good iff $B_\automaton$ is accepting. Therefore, $\probMeasure^{\dtmc \otimes \lfdfa}(\lozenge \mathit{goodBSCC}) = \probMeasure^{\dtmc \otimes \automaton}(\lozenge \mathit{accBSCC})$, where $\mathit{accBSCC}$ is the set of accepting BSCCs of $\dtmc \otimes \automaton$. Since $\probMeasure^{\dtmc}(E) = \probMeasure^{\dtmc \otimes \automaton}(\lozenge \mathrm{acc BSCC})$ \cite{PoMC}, this implies the proposition. 
\end{proof}

\LemDoubleExponentialBlowUp*
\begin{proof}
	Let $\Sigma = \{0, 1, \$, \#\}$ and let 
	\begin{center}
		$L_n = \{x \# w \# y \$ w \#^\omega \mid x, y \in \{0, 1, \#\}^*, w \in \{0,1\}^n\}$
	\end{center}
	be the language that contains all words that contain $\$$ exactly once and are such that the word $w \in \{0,1\}^n$ occurring between $\$$ and the next $\#$ already occurred somewhere before between two $\#$. This language was also used in \cite{FLTBT} to prove that there is a double-exponential lower bound for the translation from LTL to deterministic $\omega$-automata. It was shown in the same paper that for each $n$ there is an LTL formula $\varphi_n$ of length quadratic in $n$ with $\Words(\varphi_n) = L_n$. 
	
	Thus, the only thing left to prove is that every saturated FDFA for $L_n$ has a leading automaton with at least $2 ^{2 ^n}$ states. To see this, observe that when removing the padding $\#^\omega$ at the end of words in $L_n$, we obtain the finite word language
	\begin{center}
		$L_n' = \{x \# w \# y \$ w \mid x, y \in \{0, 1, \#\}^*, w \in \{0,1\}^n\}$.
	\end{center}
	A similar language (over a ternary alphabet) was considered in \cite{Alternation}, and with arguments analogous to those used in \cite{Alternation} it follows that every DFA for $L_n'$ needs at least $2^{2^n}$ states. But then the leading automaton of any saturated FDFA $\fdfa$ accepting $L_n$ also requires at least $2^{2^n}$ states, since the non-periodic part $u$ of any decomposition $(u,v) \in \llbracket \fdfa \rrbracket$ must correspond to a word in $L_n'$. This finishes the proof.
\end{proof}

\section{\texorpdfstring{Proofs of \Cref{sec:fufa}}{Proofs of Section 4}}

\LemFDFAareFUFA*
\begin{proof}
	Because $\lfufa_\fufa$ has the same structure as $\lfdfa_\fdfa$, it is deterministic. This implies that $\vert \lfufa_\fufa(u) \vert = 1$ for every $u \in \Sigma^*$ and so it is clear that $\fufa$ is decomposition-unambiguous. 
	
	Let now $v \in \accLanguage(\pfufa_\fufa^q)$ for $v$ the period of some $q$-normalized decomposition $(u,v) \in \llbracket \fufa \rrbracket$. By construction we also have $(u,v) \in \llbracket \fdfa \rrbracket$. Because $\fdfa$ is saturated and thus \emph{power-stable} \cite{SPFA}, $v^i \in \accLanguage(\pfdfa_\fdfa^q)$ for all $i \geq 1$. As $\pfdfa_\fdfa^q$ is deterministic, the unique run of $\pfdfa_\fdfa^q$ on $v$ ends in some $f \in \afinal^{\pfufa_\fdfa^q}$, and consequently the run of $\pfdfa_\fdfa^q$ on $v^i, i \geq 1$, must be in $f$ after reading $v$ as well. Combining this with the fact that $v^i \in \accLanguage(\pfdfa_\fdfa^q)$ for all $i$, it follows that $\afinal^{\pfdfa_\fdfa^q} \cap \delta^{\pfdfa_\fdfa^q}(f, v^j) \neq \emptyset$ for all $j \geq 1$. Since $\pfdfa^q_\fdfa$ and $\pfufa^q_\fufa$ have the same structure, also $\afinal^{\pfufa_\fufa^q} \cap \delta^{\pfufa_{\fufa}^q}(f, v^j) \neq \emptyset$ for all $j \geq 1$. Therefore, $\fufa$ is power-unambiguous. 
	
	Lastly, let $w$ be an ultimately periodic word and let $(u_1, v_1), (u_2, v_2)$ be two normalized decompositions for $w$. Because $\fdfa$ is saturated, either both or none of them are accepted by $\fdfa$. Since our construction ensures that $(u,v) \in \llbracket \fufa \rrbracket$ iff $(u,v) \in \llbracket \fdfa \rrbracket$, $\fufa$ also either accepts both $(u_1, v_1), (u_2, v_2)$, or rejects both of them. Hence, $\fufa$ is saturated. 
\end{proof}

\PropSuccinctnessFUFA*
\begin{proof}
	Let $\Sigma = \{a,b,c\}$ and consider, for $n \in \mathbb{N}$, the family of regular languages
	\begin{center}
		$L_n' = (a+c)^*a(a+c)^{n-1}b$
	\end{center} 
	which contains all words having a $b$ only in the last position and an $a$ exactly $n$ steps before. 
	$L_n'$ is accepted by a UFA with $n{+}2$ states, which is similar to the automaton in \Cref{fig:leading-automaton-n-steps-before}, but with unique final state $q_{n+1}$ and no self-loop in this state. On the other hand, any DFA with language $L_n'$ needs at least $2^n$ states since it must keep track of the occurrences of $a$ in the last $n$ positions \cite{MCUA}. 
	We can extend $L_n'$ to an $\omega$-regular language $L_n$ by adding padding, i.e., by setting $L_n = L_n' \cdot \Sigma^\omega$. 
	
	$L_n$ is a co-safety property, and an $\omega$-word is in $L_n$ iff it has a prefix in $L_n'$. Hence, any family of finite automata representing $L_n$ must have a leading automaton ``recognizing'' $L_n'$. A saturated FDFA for $L_n$ therefore has a leading automaton with at least $2^n$ states, while a saturated, decomposition- and power-unambiguous FUFA of size $(n{+}2, 2)$ for $L_n$ is obtained by choosing a leading automaton like in \Cref{fig:leading-automaton-n-steps-before}, setting $\mstates = \{q_{n+1}\}$, and ensuring $\accLanguage(\pfufa^{q_{n+1}}) = \Sigma^+$. 
\end{proof}

\ThmFUFAToUBA*
\begin{proof}
	Let $n \in \mathbb{N}$ and $L_n' = (0 + (01^*)^{n-1}0)^*$. This language is also considered in \cite{SEAFAPAFA}, where it is shown that every UFA accepting $L_n'$ needs at least $2^n{-}1$ states. Let $L_n = L_n' \cdot \$^\omega$. 
	
	We start by proving that the FUFA $\fufa_n = (\lfufa_n, \{\pfufa_n^{q}\}_{q \in \{q_{n+1}\}})$ depicted in  \Cref{fig:leading-automaton-blow-up-fufa-2-uba}, which is of size $(n{+}1, 2)$,  is a saturated, decomposition- and power-unambiguous FUFA for $L_n$. 
	
	To this end, we first observe that the leading automaton $\lfufa_n$ is obtained by adding the state $q_{n+1}$ (together with its incoming and outgoing transitions) to an NFA presented in \cite[Fig. 1]{SEAFAPAFA} which accepts $L_n'$. It immediately follows that $\accLanguage(\fufa_n) = L_n$. 
	Furthermore, $\vert \mstates \vert = 1$ and so it trivially holds that $\fufa_n$ is decomposition-unambiguous. Because of the structure of $\lfufa_n$, any $(u,v) \in \llbracket \fufa_n \rrbracket$ must be $q_{n+1}$-normalized, implying that in any such decomposition $v = \$^k$ for some $k \geq 1$. But then also $v^j = \$^{k \cdot j}$ for every $j \geq 1$, yielding $\afinal^{\pfufa_n^{q_{n+1}}} \cap \delta^{\pfufa_n^{q_{n+1}}}(\$, v^j) \neq \emptyset$. Thus, $\fufa_n$ is power-unambiguous. Lastly, given a word $w$ and two $q_{n+1}$-normalized decompositions $(u_1, v_1)$ and $(u_2, v_2)$ of $w$, it follows that $q_{n+1}$ must be reachable in $\lfufa_n$ on $u_1$ and $u_2$, and both $v_1$ and $v_2$ have to be of the form $\$^{k_i}$ for some $k_1, k_2 \geq 1$. But then $v_1, v_2 \in \accLanguage(\pfufa_n^{q_{n+1}})$ and so $(u_1, v_1), (u_2, v_2) \in\llbracket \fufa_n \rrbracket$. Thus, $\fufa_n$ is saturated. 
	
	What remains to be proven is that every UBA $\automatonB_n$ for $L_n$ has at least $2^{n}{-}1$ states. Our argument takes inspiration from the proof of \cite[Thm. 6]{FABA}, where Löding and Pirogov show an exponential blow-up when translating NBA to \emph{finitely-ambiguous Büchi automata}, which are Büchi automata in which every $\omega$-word has at most finitely many accepting runs.\footnote{The proof of \cite[Thm. 6]{FABA} also uses an $\omega$-extension of the languages $L_n'$ from \cite{SEAFAPAFA}, but in contrast to the languages $L_n = L_n' \cdot \$^\omega$ the languages $L_n''$ in \cite{FABA} are of the form $L_n'' = (L_n' \cdot \$)^\omega$.} 
	
	Let $\automatonB_n$ be a UBA that accepts $L_n$. Without loss of generality we assume that $\automatonB_n$ is \emph{trim}, i.e, that for every state $q$ of $\automatonB_n$ the automaton obtained from $\automatonB_n$ by making $q$ initial has a nonempty language. This assumption can be made w.l.o.g. because any non-trim UBA can be transformed into a language equivalent trim one by removing all states from which no final state can be visited infinitely often. In particular, this transformation does not increase the number of required states. 
	Additionally, let $\automaton_n$ be the NFA over $\Sigma = \{0,1\}$ obtained from $\automatonB_n$ by making all states in $\automatonB_n$ with an outgoing $\$$-transition final, and afterwards removing all $\$$-transitions (and states that have become unreachable) from the automaton. 
	
	We show that $\automaton_n$ is a UFA. Towards a contradiction assume that this is not the case, i.e., that there is a word $w \in \{0,1\}^*$ that has two accepting runs $x_1$ and $x_2$ in $\automaton_n$. Let $f_1, f_2 \in \afinal^{\automaton_n}$ be the final states reached upon $x_1, x_2$ (where $f_1 = f_2$ is possible). Because $f_1, f_2 \in \afinal^{\automaton_n}$, there are $\$$-transitions in the states $q_1, q_2$ of $\automatonB_n$ corresponding to $f_1$ and $f_2$, and states $p_1, p_2$ reachable by these $\$$-transitions. Because $\automatonB_n$ is trim, there are $v_i \in \{0,1,\$\}^\omega$ accepted by the automata obtained from $\automatonB_n$ when making $p_i$, $i \in \{1, 2\}$, initial. Since $\accLanguage(\automatonB_n) = L_n$, and because the states $p_i$ are reachable via $\$$-transitions in $\automatonB_n$, it follows that $v_1$ and $v_2$ must be of the form $\$^\omega$. 
	Hence, both $wv_1$ and $wv_2$ equal $w\$^\omega \in L_n = \accLanguage(\automatonB_n)$. But then there are two different accepting runs of $\automatonB_n$ on $w\$^\omega$, which contradicts the unambiguity of $\automatonB_n$. Thus, any word $w$ accepted by $\automaton_n$ has exactly one accepting run in the automaton, i.e., $\automaton_n$ is unambiguous and hence a UFA. 
	
	Next, we show that $\accLanguage(\automaton_n) = L_n'$. Recall that the alphabet of $\automaton_n$ is $\{0,1\}$, and let $w \in \{0,1\}^*$. 
	If $w \in L_n'$ then $w \cdot \$^\omega \in L_n$, so there must be a state $q$ reachable on $w$ in $\automatonB_n$ that has a $\$$-transition. But then the state in $\automaton_n$ corresponding to $q$ is final and reachable on $w$, so $w \in \accLanguage(\automaton_n)$. 
	If $w \notin L_n'$ no state reachable in $\automatonB_n$ on $w$ can have a $\$$-transition. This holds since otherwise the fact that $\automatonB_n$ is trim would, as above, imply that $w \cdot \$^\omega \in \accLanguage(\automatonB_n)$, while at the same time $w \cdot \$^\omega \notin L_n' \cdot \$^\omega$, a contradiction. Hence, no final state of $\automaton_n$ is reachable on $w$, and so $w$ is not accepted by $\automaton_n$. All in all, $w \in L_n'$ iff $w \in \accLanguage(\automaton_n)$, i.e., $\accLanguage(\automaton_n) = L_n'$.
	
	Since $\automaton_n$ is a UFA that accepts $L_n'$, it has at least $2^{n}{-}1$ states \cite{SEAFAPAFA}. Furthermore, $\automaton_n$ is obtained from $\automatonB_n$ by (potentially) removing states, implying that $\automatonB_n$ has at least as many states as $\automaton_n$. Therefore, every UBA that accepts $L_n$ must have at least $2^{n}{-}1$ states.
\end{proof}

The following lemma of Calbrix \textit{et al.} \cite{UPWROL} is used in the proof of \Cref{lem:fufa-to-nba-polynomial}. \unskip
\begin{lemma}[\cite{UPWROL}]\label{lem:calbrix-lemma}
	Let $M, N \subseteq \Sigma^*$ such that $M = M \cdot N^*$ and $N = N^+$. Then for each $w \in \Sigma^\omega$ it holds that $w \in \UP(M \cdot N^\omega)$ iff there are $u \in M$ and $v \in N$ such that $w = uv^\omega$. 
\end{lemma}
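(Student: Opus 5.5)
This is Calbrix et al.'s lemma; we prove both directions directly from the two closure hypotheses $M = M\cdot N^*$ and $N = N^+$.

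\emph{Right-to-left direction.} Suppose $w = uv^\omega$ with $u \in M$ and $v \in N$. Then $w \in M\cdot N^\omega$ by definition of $N^\omega$. Note that $v$ is nonempty (we assume, as usual, $\varepsilon \notin N$, or else restrict $N^\omega$ to products of nonempty factors). Hence $w$ is ultimately periodic, and so $w \in \UP(M\cdot N^\omega)$.

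\emph{Left-to-right direction.} Let $w = xy^\omega \in M\cdot N^\omega$. Write $w = u_0 v_1 v_2 \cdots$ with $u_0 \in M$ and nonempty $v_i \in N$. Let $p_k = |u_0 v_1\cdots v_k|$ be the cut positions; they form an infinite increasing sequence. The plan is a pigeonhole (Ramsey-type) argument on these cuts.

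First, because $w$ is ultimately periodic, we have $w[p..] = w[p+|y|..]$ for all $p \geq |x|$. Hence the suffix from any position $p \ge |x|$ depends only on $p \bmod |y|$. By pigeonhole, infinitely many cuts $p_{k_1} < p_{k_2} < \cdots$ with $p_{k_1} \ge |x|$ share the same residue modulo $|y|$.

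Take two consecutive such cuts $p = p_{k_1}$ and $p' = p_{k_2}$. Set $u = u_0 v_1 \cdots v_{k_1}$. Then $u \in M\cdot N^* = M$.

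Set $v = v_{k_1+1}\cdots v_{k_2}$. Then $v \in N^+ = N$.

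Since $p' - p$ is a multiple of $|y|$ and $p \ge |x|$, the factor $v = w[p..p')$ is a power of a rotation of $y$. Because $w[p..]$ is periodic with period $|y|$, it is also periodic with period $p'-p$. Hence $w[p..] = v^\omega$, and therefore $w = uv^\omega$.

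\emph{Main obstacle.} The only delicate step is justifying $w[p..] = v^\omega$. This is exactly where we need both that $p \ge |x|$ and that $|y|$ divides $p'-p$. We do \emph{not} need $w[p'..] = w[p..]$ to be derived from the decomposition itself: it follows from the periodicity of $w$ alone. No Ramsey theorem is required, since the closure $N = N^+$ absorbs any concatenation of factors, so plain pigeonhole on residues suffices.
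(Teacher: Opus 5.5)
Your proof is correct. Take two cut positions $p<p'$ of the factorization $u_0v_1v_2\cdots$ that lie beyond $|x|$ and are congruent modulo $|y|$. This gives $u=u_0v_1\cdots v_{k_1}\in M\cdot N^*=M$ and $v=v_{k_1+1}\cdots v_{k_2}\in N^+=N$. Moreover, $w[p..]=v^\omega$, because $p'-p$ is a positive multiple of the eventual period $|y|$.

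The paper gives no proof of its own here; it only cites Calbrix, Nivat and Podelski, and your pigeonhole argument is the standard proof of their result. Your caveat $\varepsilon\notin N$ is harmless, since it holds for the languages $N_{q,f}\subseteq\Sigma^+$ to which the paper applies the lemma.
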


\LemFUFAtoNBA*
\begin{proof}
	We adjust the proof of \cite[Thm. 5.8]{FDFAAORL}---where it is shown that every saturated FDFA $\fdfa$ can be translated into an NBA of size $\mathcal{O}(n^2k^3)$ whose accepted language contains precisely the ultimately periodic words in $\accLanguage(\fdfa)$---to the setting of FUFA. 
	
	Given $q \in \mstates$ and $f \in \afinal^{\pfufa^q}$, define the regular languages
	\begin{align*}
		M_q &= \{v \in \Sigma^* \mid q \in \lfufa(v)\} \\
		N_{q,f} &= \{v \in \Sigma^+ \mid q \in \delta^{\lfufa}(q, v) \land  f \in \pfufa^q(v) \land f \in \delta^{\pfufa^q}(f, v)\}, 
	\end{align*}
	as well as the $\omega$-regular language 
	\begin{align*}
		L = \bigcup_{\{(q,f) \mid q \in \mstates \land f \in \afinal^{\pfufa^q}\}} M_q \cdot N_{q,f}^\omega. 
	\end{align*}
	
	Setting $q$ as the unique final state of an automaton which otherwise equals $\lfufa$ yields an NFA accepting $M_q$. An NFA for $N_{q,f}$ can be obtained as follows. Let $\automaton_{q,q}$ be like $\lfufa$, but with initial and unique final state $q$. Moreover, let $\automaton_{f}$ be like $\pfufa^q$, but with unique final state $f$, and let $\automaton_{f,f}$ be like $\pfufa^q$, but with unique initial and final state $f$. Then an NFA accepting the nonempty words in the intersection of $\accLanguage(\automaton_{q,q}), \accLanguage(\automaton_{f})$ and $\accLanguage(\automaton_{f,f})$ accepts precisely $N_{q,f}$, and is of size $\mathcal{O}(nk^2)$. Taking the $\omega$-closure of this NFA, which requires at most one additional state \cite[Lemma 4.34]{PoMC}, results in an NBA for $N_{q,f}^\omega$. Constructing an automaton for the concatenation of $M_q$ and $N_{q,f}^\omega$ yields an NBA $\automaton_{q,f}$ for $M_q \cdot N_{q,f}^\omega$ of size $\mathcal{O}(n + nk^2)$. By taking the union of all $\automaton_{q,f}$, of which there are at most $n k$, we obtain an NBA $\automaton$ of size $\mathcal{O}(n k \cdot (n + nk^2)) = \mathcal{O}(n^2 k^3)$ for $L$. 
	
	To show the claim, we prove that $\accLanguage(\fufa) = \UP(\accLanguage(\automaton))$. 
	
	\smallskip 
	\noindent
	\textbf{Regarding ``$\boldsymbol{\subseteq}$''.} 
	Let $w = uv^\omega \in \accLanguage(\fufa)$. Then there is a $q \in \mstates$ and a $q$-normalized decomposition $(x,y) \in \llbracket \fufa \rrbracket$ with $w = xy^\omega$. Obviously, $x \in M_q$ and $y \in \accLanguage(\pfufa^q)$. We show the existence of an $f \in \afinal^{\pfufa^q}$ and an $l \in \mathbb{N}$ such that $y^l \in N_{q,f}$.
	
	Because $y \in \accLanguage(\pfufa^q)$ there is a unique $f_1 \in \afinal^{\pfufa^q}$ with $f_1 \in \pfufa^q(y)$. As $\fufa$ is power-unambiguous we have for every $n \geq 1$ that $\afinal^{\pfufa^q} \cap \delta^{\pfufa^q}(f_1, y^n) \neq \emptyset$. In particular, there is (again a unique) $f_2 \in \afinal^{\pfufa^q}$ with $f_2 \in \delta^{\pfufa^q}(f_1, y) \cap \pfufa^q(y^2)$. 
	
	Repeating this argument yields a sequence of final states $f_1, f_2, \ldots$ such that the unique accepting run of $\pfufa^q$ on $y^n$ ends in $f_n$, and is in $f_m$ after $y^m$ for $m<n$. Because $\pfufa^q$ is finite, there must be $j, k \in \mathbb{N}$ with $j < k$ for which $f_j = f_k = f^*$. W.l.o.g. let $j$ and $k$ be minimal among all indices with these properties. It follows that the accepting runs of $\pfufa^q$ on $y^m$, $m \geq j$, repeat a cycle of length $\vert y \vert \cdot n$ \cite{FDFAAORL}.
	
	Let $l = j \cdot n$. Because $l \geq j$ the unique accepting run of $\pfufa^q$ on $y^l$ ends in some $f \in \{f_j, f_{j+1}, \ldots, f_{k-1}\}$. As argued above, the power-unambiguity of $\fufa$  implies that the unique accepting run of $\pfufa^q$ on $y^{2l}$ is in $f$ after $y^l$, and afterwards closes the aforementioned cycle a total of $j$ times (as $l = j \cdot n$), ending in $f$. Therefore, $f \in \pfufa^q(y^l)$ and $f \in \delta^{\pfufa^q}(f, y^l)$. Moreover, since $q \in \delta^\lfufa(q, y)$ as $(x,y)$ is $q$-normalized, also $q \in \delta^{\lfufa}(q, y^l)$. Hence, $y^l \in N_{q,f}$. 
	
	All in all, we have shown that $x \in M_q$ and $y^l \in N_{q,f}$, and so $xy^\omega = x(y^l)^\omega \in M_q \cdot N_{q,f}^\omega$. This implies $w = xy^\omega \in \UP(\accLanguage(\automaton))$, and thus $\accLanguage(\fufa) \subseteq \UP(\accLanguage(\automaton))$.

	\smallskip
	\noindent
	\textbf{Regarding ``$\boldsymbol{\supseteq}$''.} Let $w = uv^\omega \in \UP(\accLanguage(\automaton))$. Then there is a $q \in \mstates$ and $f \in \afinal^{\pfufa^q}$ such that $w \in M_q \cdot N_{q,f}^\omega$. We want to apply \Cref{lem:calbrix-lemma} to show that there are $x \in M_q$ and $y \in N_{q,f}$ with $w = xy^\omega$. For this, it must hold that (i) $M_q \cdot N_{q,f}^* = M_q$ and (ii) $N_{q,f}^+ = N_{q,f}$. 
	
	To prove (i), let $v = uv_1v_2 \ldots v_n$ for $u \in M_q$ and $v_1, \ldots, v_n \in N_{q,f}$, $n \geq 0$. Then 
	\begin{center}
		$\lfufa(v) = \lfufa(uv_1v_2 \ldots v_n) \supseteq \delta^{\lfufa}(q, v_1 \ldots v_n) \supseteq \delta^{\lfufa}(q, v_2 \ldots v_n) \supseteq \ldots \supseteq \delta^{\lfufa}(q, v_n) \ni q$
	\end{center}
	since $u \in M_q$ and every $v_i \in N_{q,f}$ can close a loop on $q$. Thus,  $v \in M_q$ and so $M_q \cdot N_{q,f}^* \subseteq M_q$. Since $M_q \subseteq M_q \cdot N_{q,f}^*$ holds trivially, (i) follows.
	
	To prove (ii), let $v = v_1v_2 \ldots v_n$ for $v_i \in N_{q,f}$, $n \geq 1$. Then 
	\begin{center}
		$\delta^{\lfufa}(q, v_1 \ldots v_n) \supseteq \delta^{\lfufa}(q, v_2 \ldots v_n) \supseteq \ldots \supseteq \delta^{\lfufa}(q, v_n) \ni q$
	\end{center}
	because every $v_i \in N_{q,f}$ can close a loop on $q$. Moreover,
	\begin{center}
		$\pfufa^q(v_1 \ldots v_n) \supseteq \delta^{\pfufa^q}(f, v_2 \ldots v_n) \supseteq \ldots \supseteq \delta^{\pfufa^q}(f, v_n) \ni f$
	\end{center}
	as well as 
	\begin{center}
		$ \delta^{\pfufa^q}(f, v_1 \ldots v_n) \supseteq \delta^{\pfufa^q}(f, v_2 \ldots v_n) \supseteq \ldots \supseteq \delta^{\pfufa^q}(f, v_n) \ni f$
	\end{center}
	since for any $v \in N_{q,f}$ both $f \in \pfufa^q(v)$ and $f \in \delta^{\pfufa^q}(f, v)$.  
	Thus, $v \in N_{q,f}$ and so $N_{q,f} = N_{q,f}^+$. 
	
	This proves that the requirements for \Cref{lem:calbrix-lemma} are satisfied. Hence, there are $x \in M_q$ and $y \in N_{q,f}$ with $w = xy^\omega$. By the definitions of $M_q$ and $N_{q,f}$ it holds that $q \in \mstates_x$, $q \in \delta^{\lfufa}(q, y)$ and $y \in \accLanguage(\pfufa^q)$, so $(x,y) \in \llbracket \fufa \rrbracket$. Therefore $w \in \accLanguage(\fufa)$ and $\UP(\accLanguage(\automaton)) \subseteq \accLanguage(\fufa)$. 
\end{proof}

\LemUBAtoFUFA*
\begin{proof}
	Let $\automaton$ be a UBA.
	W.l.o.g. we assume that $\automaton$ is diamond-free, trim, and that all accepting runs of $\automaton$ visit only a single final state.  
	
	We construct an FUFA $\fufa = (\lfufa, \{\pfufa^q\}_{q \in \mstates})$ from $\automaton$ as follows. 
	The leading automaton of $\fufa$ is like $\automaton$, but without final states, and $\mstates = \afinal^{\automaton}$. For every $q \in \mstates$, the progress automaton $\pfufa^q$ satisfies $\accLanguage(\pfufa^q) = \{v \in \Sigma^+ \mid q \in \delta^\automaton(q, v)\}$, i.e., $\pfufa^q$ accepts all non-empty words on which $\automaton$ can close a loop on $q$. The automaton $\pfufa^q$ can be obtained from $\automaton_q$ by adding a copy $q'$ of $q$ to the automaton and making it the single final state of $\pfufa^q$, as well as setting
	\begin{center}
		$\delta^{\pfufa^q}(p,a) = \begin{cases}
			\delta^\automaton(p,a), & \text{if} p \neq q' \text{ and } q \notin \delta^\automaton(p, a) \\
			(\delta^\automaton(p,a) \setminus \{q\}) \cup \{q'\}, & \text{if} p \neq q' \text{ and } q \in \delta^{\automaton}(p,a) \\
			\delta^\automaton(q, a), &\text{if } p = q' \text{ and } q \notin \delta^{\automaton}(q,a) \\
			(\delta^\automaton(q,a) \setminus \{q\}) \cup \{q'\}, & \text{if} p = q' \text{ and } q \in \delta^{\automaton}(q,a)
		\end{cases}$
	\end{center} 
	As any such $\pfufa^q$ has at most $\vert \automaton \vert + 1$ states, $\fufa$ is of size $\vert \fufa \vert  = (\vert \automaton \vert, \mathcal{O}(\vert \automaton \vert))$. 
	
	\smallskip
	
	We prove that $\fufa$ is a saturated, decomposition- and power-unambiguous FUFA with $\accLanguage(\automaton) = \accLanguage_\omega(\fufa)$. 
	
	\smallskip
	\noindent
	\textbf{$\boldsymbol{\fufa}$ is an FUFA.} Let $q \in \mstates$ and assume that $\pfufa^q$ is not unambiguous. Then there is $v \in \accLanguage(\pfufa^q)$ such that $\pfufa^q$ has two different accepting runs $r_1, r_2$ on $v$. Because $\accLanguage(\pfufa^q) = \{z \in \Sigma^+ \mid q \in \delta^{\automaton}(q, z)\}$ this implies that $r_1$ and $r_2$ are two different ways of $\automaton$ to close a loop on the final state $q$ when reading $v$. 
	Now let $u \in \Sigma^*$ such that $q \in \automaton(u)$ and let $r$ be a run of $\automaton$ on $u$ that ends in $q$. Then $rr_1^\omega$ and $rr_2^\omega$ are two different accepting runs of $\automaton$ on $uv^\omega$. This contradicts the unambiguity of $\automaton$. 
	
	\smallskip
	\noindent
	\textbf{$\boldsymbol{\fufa}$ is saturated.} Let $w = uv^\omega$ be an ultimately periodic word, and let $(x, y)$ be a normalized decomposition of $w$. Then there is a $q \in \mstates_x$ such that $(x,y)$ is $q$-normalized. 
	By construction, $\mstates = \afinal^{\automaton}$ and so $q$ corresponds to a final state of $\automaton$. Because $(x,y)$ is $q$-normalized, $q \in \delta^\lfufa(q, y) = \delta^{\automaton}(q, y)$. Thus, $y \in \{z \in \Sigma^+ \mid q \in \delta^{\automaton}(q, z)\} = \accLanguage(\pfufa^{q})$. It follows that $(x,y) \in \llbracket \fufa \rrbracket$.
	Hence, every normalized decomposition of $w$ is accepted by $\fufa$, so $\fufa$ is saturated. 
	
	\smallskip 
	\noindent
	\textbf{$\boldsymbol{\fufa}$ is decomposition-unambiguous.} Let $(u,v) \in \Sigma^{*+}$ and assume that there are $q_1, q_2 \in \mstates_u$ with $q_1 \neq q_2$ such that $(u,v)$ is $q_i$-normalized and $v \in \accLanguage(\pfufa^{q_i})$ for $i \in \{1, 2\}$. Then $q_1, q_2 \in \afinal^{\automaton}$ and there are finite runs $r_1, r_2$ of $\automaton$ on $u$ that end in $q_1$ and $q_2$, respectively. As $v \in \accLanguage(\pfufa^{q_1}) \cap \accLanguage(\pfufa^{q_2})$ it follows from the definition of $\pfufa^{q_i}$, $i \in \{1,2\}$, that $q_1 \in \delta^\automaton(q_1, v)$ and $q_2 \in \delta^\automaton(q_2, v)$. Hence, there are (finite) runs $r_1'$ and $r_2'$ of $\automaton$ on $v$ such that $r_i'$ closes a loop on $q_i$. But then $\pi_1 = r_1(r_1')^\omega$ and $\pi_2 = r_2(r_2')^\omega$ are two different accepting runs of $\automaton$ on $uv^\omega$.
	This contradicts the unambiguity of $\automaton$. 
	
	\smallskip 
	\noindent
	\textbf{$\fufa$ is power-unambiguous.} 
	Let $(u,v) \in \llbracket \fufa \rrbracket$ with $q$ the unique (since $\fufa$ is decomposition-unambiguous) state in $\mstates_u$ with $q \in \delta^\lfufa(q, v)$ and $v \in \accLanguage(\pfufa^q)$. As $v \in \accLanguage(\pfufa^q)$ there is a run of $\pfufa^q$ on $v$ that ends in the unique final state $q'$ of $\pfufa^q$. Because $\accLanguage(\pfufa^q) = \{z \in \Sigma^+ \mid q \in \delta^\automaton(q, z)\}$ we have $q \in \delta^\automaton(q, v)$, so also $q \in \delta^\automaton(q, v^n)$ and $v^n \in \accLanguage(\pfufa^q)$ for all $n \geq 1$. By construction, for every run of $\automaton$ starting in $q$ there is a run of $\pfufa^q$ starting in $q'$ visiting the same states, with the only difference being that every $q$ is replaced by $q'$. Since $v$ closes a loop on $q$ in $\automaton$, it hence also closes a loop on $q'$ in $\pfufa^q$. It follows that $q' \in \delta^{\pfufa^q}(q', v^n)$ and thus $\delta^{\pfufa^q}(q', v^n) \cap \afinal^{\pfufa^q} \neq \emptyset$ for all $n \geq 1$. Therefore, $\fufa$ is power-unambiguous. 
	
	\smallskip 
	\noindent
	\textbf{$\boldsymbol{\accLanguage_\omega(\fufa) = \accLanguage(\automaton)}$.}  We show $\UP(\accLanguage(\automaton))= \UP(\fufa)$. As $\accLanguage(\automaton)$ is $\omega$-regular this implies the claim.  
	
	Let $w = uv^\omega \in \UP(\accLanguage(\automaton))$. Then there is a unique accepting run $\pi$ of $\automaton$ on $w$. Since $\automaton$ is finite, there is some $f \in \afinal^\automaton$ visited infinitely often along $\pi$. Because $w$ is ultimately periodic and $\vert v \vert < \infty$, there are $x \in \Sigma^*$ and $y \in \Sigma^+$ with $xy^\omega = w$ and such that, along $\pi$, after reading $x$ the final state $f$ is reached, and a loop on $f$ is closed on $y$. But then $(x,y) \in \Sigma^{*+}$ satisfies $f \in \mstates_x$, $f \in \delta^{\lfufa}(f, y) = \delta^{\automaton}(f,y)$ and $y \in \accLanguage(\pfufa^q)$, so $(x,y) \in \llbracket \fufa \rrbracket$ and $w \in \accLanguage(\fufa)$.

	\smallskip 
	Now let $w = uv^\omega \in \accLanguage(\fufa)$. Then there is some normalized decomposition $(x,y)$ of $w$ accepted by $\fufa$. Hence, there is a $q \in \mstates_x$ such that $q \in \delta^{\lfufa}(q, y)$ and $y \in \accLanguage(\pfufa^q)$. By construction, $q \in \mstates$ corresponds to a final state of $\automaton$, and since $\accLanguage(\pfufa^q) = \{z \in \Sigma^+ \mid q \in \delta^{\automaton}(q, z)\}$ it follows that $\automaton$ can close a loop $r_2$ on $q$ when reading $y$. Hence, for $r_1$ a finite run of $\automaton$ on $x$ that ends in $q$, the run $\pi = r_1 r_2^\omega$ of $\automaton$ on $xy^\omega$ visits $q \in \afinal^{\automaton}$ infinitely often, i.e., $w = xy^\omega \in \UP(\accLanguage(\automaton))$.
\end{proof}

\CorSingleExponentialLTLtoFUFA*
\begin{proof}
	Given $\varphi \in \mathrm{LTL}$ one can construct a UBA $\automaton_\varphi$ with number of states exponential in the length of $\varphi$ and such that $\accLanguage(\automaton_\varphi) = \Words(\varphi)$ \cite{ATAAPV}. By \Cref{lem:uba-to-fufa}, $\automaton_\varphi$ can be translated into a saturated, decomposition- and power-unambiguous FUFA $\fufa$ of size polynomial in $\vert \automaton_\varphi\vert$ and such that $\accLanguage_\omega(\fufa) = \accLanguage(\automaton_\varphi) = \Words(\varphi)$. Hence, there is also a single-exponential translation from LTL to saturated, decomposition- and power-unambiguous FUFA. 
\end{proof}

\ThmExpressivePowerFUFA*
\begin{proof}
	The claim follows immediately when combining \Cref{lem:uba-to-fufa,lem:fufa-to-nba-polynomial}.
\end{proof}

 \LemMembershipFUFA*
\begin{proof}
	To decide if $(u,v) \in \llbracket \fufa \rrbracket$ one has to check if there is a $q \in \mstates_u$ such that $(u,v)$ is $q$-normalized and $v \in \accLanguage(\pfufa^q)$. Computing the set $\mstates_u$ can be done by running $\lfufa$ on $u$ and returning the intersection of $\lfufa(u)$ with $\mstates$, which is possible in time polynomial in $\vert \lfufa \vert$ and $\vert u \vert$. Afterwards, checking for every $q_i \in \mstates_u$ if the word $v$ is accepted by the automaton obtained from $\lfufa$ by making $q_i$ the unique initial and final state yields those $q \in \mstates_u$ for which $(u,v)$ is $q$-normalized. Each of these checks is possible in time polynomial in $\vert \lfufa \vert$ and $\vert v \vert$, and at most $\vert \lfufa \vert$ many such checks are necessary. Lastly, for the states $q \in \mstates_u$ for which $(u,v)$ is $q$-normalized, one can decide if $v \in \accLanguage(\pfufa^q)$ by checking membership of $v$ in the language of $\pfufa^q$. This is possible in time polynomial in $\vert \pfufa^q \vert$ and $\vert v \vert$. Overall, this yields an algorithm to check if a given $(u,v)$ is in $\llbracket \fufa \rrbracket$ that runs in time polynomial in $\vert \fufa \vert, \vert u \vert$ and $\vert v \vert$. 
\end{proof}

\LemIntersectionFUFAViaDecompositionSets*
\begin{proof}
	Let $\fufa_1$ and $\fufa_2$ be as in the lemma, and let $\mstates$ be the Cartesian product of $\mstates_1$ and $\mstates_2$. For $(q_1, q_2) \in \mstates$, the \emph{parallel product} of $\pfufa_1^{q_1}$ and $\pfufa_2^{q_2}$ is defined as
	\begin{center}
		$\pfufa^{q_1, q_2} = \pfufa_1^{q_1} \times \pfufa_2^{q_2} = (\astates^{\pfufa^{q_1}_1} \times \astates^{\pfufa^{q_2}_2}, \Sigma, \delta^{q_1, q_2}, \langle \ainit^{\pfufa_1^{q_1}}, \ainit^{\pfufa_2^{q_2}} \rangle, \afinal^{\pfufa_1^{q_1}} \times \afinal^{\pfufa_2^{q_2}} )$,
	\end{center} 
	where for all $\langle p_1, p_2 \rangle \in \pfufa^{q_1, q_2}$ and $a \in \Sigma$ it holds that
	\begin{center}
		$\delta^{q_1, q_2}(\langle p_1, p_2 \rangle, a) = \{ \langle p_1', p_2' \rangle \mid p_1' \in \delta^{\pfufa_1^{q_1}}(p_1, a) \text{ and} p_2' \in \delta^{\pfufa_2^{q_2}}(p_2, a) \}$.
	\end{center}
	In particular, we have that $\accLanguage(\pfufa^{q_1, q_2})= \accLanguage(\pfufa_1^{q_1}) \cap \accLanguage(\pfufa_2^{q_2})$. 
	
	Let $\lfufa = \lfufa_1 \times \lfufa_2$ be the parallel product of $\lfufa_1$ and $\lfufa_2$ (which is build similar to $\pfufa^{q_1, q_2}$). We define $\fufaH = (\lfufa, \{\pfufa^q\}_{q \in \mstates})$. Then (1) $\fufaH$ is an FUFA, i.e., all progress automata $\pfufa^{q_1, q_2}$ are UFA, and (2) $\llbracket \fufaH \rrbracket = \llbracket \fufa_1 \rrbracket \cap \llbracket \fufa_2 \rrbracket$. Moreover, if both $\fufa_1$ and $\fufa_2$ are (3) decomposition-unambiguous, (4) power-unambiguous or (5) saturated, then $\fufaH$ is as well. 
	
	\smallskip
	\noindent
	\textbf{Regarding (1).} Let $\langle q_1, q_2 \rangle \in \mstates$ and $w \in \Sigma^*$ such that $\pfufa^{q_1, q_2}$ has two accepting runs $\pi_1, \pi_2$ on $w$. By construction, a run of $\pfufa^{q_1, q_2}$ is accepting iff the projections to its first and second component are accepting runs in $\pfufa_1^{q_1}$ and $\pfufa_2^{q_2}$, respectively. Hence, both $\pi_1$ and $\pi_2$ induce accepting runs of $\pfufa_1^{q_1}$ and $\pfufa_2^{q_2}$ on $w$. Since $\pfufa_1^{q_1}$ and $\pfufa_2^{q_2}$ are unambiguous, it follows that the projections of $\pi_1$ and $\pi_2$ to their first resp. second component must be equal. Therefore, $\pi_1 = \pi_2$ and it follows that $\pfufa^{q_1, q_2}$ is unambiguous. 
	
	\smallskip
	\noindent
	\textbf{Regarding (2).} We start by showing ``$\subseteq$''. Let $(u,v) \in \llbracket \fufaH \rrbracket$. Then there is a $\langle q_1, q_2 \rangle \in \mstates$ reachable on $u$ in $\lfdfa$ such that $(u,v)$ is $\langle q_1, q_2 \rangle$-normalized and $v \in \accLanguage(\pfufa^{q_1, q_2})$.
	
	By the latter there is a run $\pi$ of $\pfufa^{q_1, q_2}$ on $v$ that ends in a state $\langle f_1, f_2 \rangle \in \afinal^{\pfufa_1^{q_1}} \times \afinal^{\pfufa_2^{q_2}}$. The projections of $\pi$ onto its first and second component are, respectively, runs $\pi_1$ and $\pi_2$ of $\pfufa_1^{q_1}$ and $\pfufa_2^{q_2}$ on $v$ that end in $f_1$ and $f_2$. Since $\langle q_1, q_2 \rangle \in \mstates$ iff $q_1 \in \mstates_1$ and $q_2 \in \mstates_2$, it follows that $v \in \accLanguage(\pfufa_1^{q_1})$ for a $q_1 \in \lfufa_1(u) \cap \mstates_1$, and that $v \in \accLanguage(\pfufa_2^{q_2})$ for a $q_2 \in \lfufa_2(u) \cap \mstates_2$. Lastly, because $(u,v)$ is normalized w.r.t. $\langle q_1, q_2 \rangle$ it holds that $\langle q_1, q_2 \rangle \in \delta^{\lfufa}(\langle q_1, q_2 \rangle, v)$, which implies $q_1 \in \delta^{\lfufa_1}(q_1, v)$ and $q_2 \in \delta^{\lfufa_2}(q_2, v)$. All in all, $(u,v)$ is $q_i$-normalized and $v \in \accLanguage(\pfufa_i^{q_i})$ for $i \in \{1,2\}$, so $(u,v) \in \llbracket \fufa_1 \rrbracket$ and $(u,v) \in \llbracket \fufa_2 \rrbracket$, i.e., $(u,v) \in \llbracket \fufa_1 \rrbracket \cap \llbracket \fufa_2 \rrbracket$. 
	
	\smallskip
	To show ``$\supseteq$'', let $(u,v) \in \llbracket \fufa_1 \rrbracket \cap \llbracket \fufa_2 \rrbracket$. Then, for $i \in \{1,2\}$, there are $q_i \in \lfufa_i(u) \cap \mstates_i$ such that $(u,v)$ is $q_i$-normalized and $v \in \accLanguage(\pfufa_i^{q_i})$. Hence, there are runs of $\pfufa_i^{q_i}$ on $v$ that end in states $f_i \in \afinal^{\pfufa_i^{q_i}}$. It follows that $\langle q_1, q_2 \rangle \in \lfufa(u) \cap \mstates$ and $\langle f_1, f_2 \rangle \in \pfufa^{q_1, q_2}(v) \cap \afinal^{\pfufa^{q_1, q_2}}$. As $(u,v)$ is $q_i$-normalized for $i \in \{1,2\}$, we also have $\langle q_1, q_2 \rangle \in \delta^{\lfufa}(\langle q_1, q_2 \rangle, v)$. All in all, $(u,v)$ is $\langle q_1, q_2 \rangle$-normalized for $\langle q_1, q_2 \rangle \in \mstates_u$ and $v \in \accLanguage(\pfufa^{q_1, q_2})$, so $(u,v) \in \llbracket \fufaH \rrbracket$. 
	
	\smallskip
	\noindent
	\textbf{Regarding (3).} Let $(u,v) \in \Sigma^{*+}$ and assume that there are $\langle p_1, p_2 \rangle, \langle q_1, q_2 \rangle \in \mstates_u$ for which $(u,v)$ is normalized and such that $v \in \accLanguage(\pfufa^{p_1, p_2}) \cap \accLanguage(\pfufa^{q_1, q_2})$. By definition, $v$ is accepted by $\pfufa^{x,y}$ for some $\langle x, y \rangle \in \mstates$ iff $v \in \accLanguage (\pfufa_1^{x}) \cap \accLanguage(\pfufa_2^{y})$. Moreover, $\langle x,y \rangle$ is reachable on $u$ in $\lfufa$ iff $x \in \lfufa_1(u)$ and $y \in \lfufa_2(u)$, $(u,v)$ is normalized w.r.t. $\langle x,y \rangle$ if $(u,v)$ is $x$-normalized in $\lfufa_1$ and $y$-normalized in $\lfufa_2$, and $\langle x, y \rangle \in \mstates$ iff $x \in \mstates_1$ and $y \in \mstates_2$. Applied to our setting it follows that there are two states in $\lfufa_i(u) \cap \mstates_i$, $i \in \{1,2\}$, for which $(u,v)$ is normalized and whose corresponding progress automata accept $v$, namely $p_i$ and $q_i$. This contradicts that $\fufa_1$ and $\fufa_2$ are decomposition-unambiguous. 
	
	\smallskip
	\noindent
	\textbf{Regarding (4).} Let $(u,v) \in \llbracket \fufaH \rrbracket$ and let $q = \langle q_1, q_2 \rangle \in \mstates_u$ such that $(u,v)$ is $q$-normalized and $v \in \accLanguage(\pfufa^q)$. Let $\pfufa^q(v) \cap \afinal^{\pfufa^q} = \{f\}$ with $f = \langle f_1, f_2 \rangle$. To show that $\fufaH$ is power-unambiguous we must prove that $\afinal^{\pfufa^q} \cap \delta^{\pfufa^q}(f, v^n) \neq \emptyset$ for all $n \geq 1$. 
	
	By definition, $v \in \accLanguage(\pfufa^q)$ iff $v \in \accLanguage(\pfufa_1^{q_1}) \cap \accLanguage(\pfufa_2^{q_2})$. Since both $\fufa_1$ and $\fufa_2$ are power-unambiguous, it follows that $\afinal^{\pfufa_i^{q_i}} \cap \delta^{\pfufa^{q_i}}(f_i, v^n) \neq \emptyset$ for all $n \geq 1$ and $i \in \{1,2\}$. But then also $\afinal^{\pfufa^{q}} \cap \delta^{\pfufa^{q}}(f, v^n) \neq \emptyset$ for all $n \geq 1$. Hence, $\fufaH$ is power-unambiguous. 
	
	\smallskip
	\noindent
	\textbf{Regarding (5).} Let $w$ be an ultimately periodic word and let $(u_1, v_1), (u_2, v_2)$ be two normalized decompositions of $w$. Moreover, let $(u_1, v_1) \in \llbracket \fufaH \rrbracket$. We show that $(u_2, v_2) \in \llbracket \fufaH \rrbracket$. 
	
	Since $(u_1, v_1) \in \llbracket \fufaH \rrbracket$ it follows from  $\llbracket \fufaH \rrbracket = \llbracket \fufa_1 \rrbracket \cap \llbracket \fufa_2 \rrbracket$ that $(u_1, v_1) \in \llbracket \fufa_1 \rrbracket$ and $(u_1, v_1) \in \llbracket \fufa_2 \rrbracket$. As $\fufa_i$, $i \in \{1,2\}$, is saturated the (again normalized) decomposition $(u_2, v_2)$ must also be accepted by both $\fufa_i$, i.e., $(u_2, v_2) \in \llbracket \fufa_i \rrbracket$. But then $(u_2, v_2) \in \llbracket \fufa_1 \rrbracket \cap \llbracket \fufa_2 \rrbracket = \llbracket \fufaH \rrbracket$. 
\end{proof}
\section{\texorpdfstring{Proofs of \Cref{sec:dfufa-mc}}{Proofs of Section 5}}\label{app:proofs-dfufa-mc}
\LemBSCCWithoutAState*
\begin{proof}
	Let $\automaton$ be a DRA with $\accLanguage(\automaton) = E$, and let $\langle s, a \rangle$ be the state reached in $\dtmc \otimes \automaton$ when lifting $u$ to this DTMC. We prove that all BSCCs in $\dtmc \otimes \automaton$ reachable from $\langle s, a \rangle$ are rejecting, which directly implies the claim. 
	
	Towards a contradiction assume the opposite, i.e., that there is an accepting BSCC $B_\automaton$ of $\dtmc \otimes \automaton$ reachable on some $v$ from $\langle s, a \rangle$. Because the lifting of $u$ to $\dtmc \otimes \lfufa$ reaches a BSCC, the DTMC $\dtmc$ must be in a BSCC after $u$ as well. Hence, w.l.o.g. we assume that $v$ is such that the state reached in $\dtmc \otimes \automaton$ on $uv$ is of the form $\langle s, a' \rangle$ for some $a' \in \automaton$. Since after lifting $uv$ to $\dtmc \otimes \automaton$ the BSCC $B_\automaton$ is reached, there is a finite word $z$ that closes a cycle on $\langle s, a' \rangle$ and visits all states of $B_\automaton$ at least once. As $B_\automaton$ is accepting, \Cref{lem:accepting-and-rejecting-bscc-dra-product} yields that $uvz^\omega \in \accLanguage(\automaton) = E$, so also $uvz^\omega \in \accLanguage(\fufa)$.
	
	It follows that there are $x ,y$ such that $xy^\omega = uvz^\omega$ and $(x,y) \in \llbracket \fufa \rrbracket$. Hence, $(x,y)$ is $q$-normalized for $q = \lfufa(x)$ and $y \in \accLanguage(\pfufa^q)$. In particular, $q \in \mstates$. We consider two cases. 
	
	If $u$ is a prefix of $x$ then the facts that $\lfufa$ is deterministic and that $\lfufa(x) = q$ imply that there is a state $\langle s', q \rangle \in B_\lfufa$, for $s'$ the state of $\dtmc$ reached after $x$. As $q \in \mstates$ this contradicts the assumption that $B_\lfufa$ does not contain any state whose second component is in $\mstates$. 
	
	If $x$ is a prefix of $u$ then, as $xy^\omega = uvz^\omega$, there must be a $k \in \mathbb{N}$ such that $u$ is a prefix of $xy^k$, i.e., such that when following $xy^k$ the BSCC $B_\lfufa$ is entered. As $(x,y)$ is $q$-normalized we have $q = \delta^{\lfufa}(q, y)$ and so also $\lfufa(xy^k) = \delta^{\lfufa}(q, y^k) = q$. But then again there is a state $\langle s'', q \rangle \in B$, for $s''$ the state reached in $\dtmc$ upon $xy^k$, with $q \in \mstates$, a contradiction. 
	
	Hence, no accepting BSCC in $\dtmc \otimes \automaton$ can be reachable from $\langle s, a \rangle$ and so the probability of $\dtmc$ to generate a path in $E$ with prefix $u$ equals $0$.  
\end{proof}

\LemCharacterizationGoodBSCCinDFUFA*
\begin{proof}
	Let $\automatonD(s,q)$ be a DRA such that $\accLanguage(\automatonD(s,q)) = \accLanguage(\automatonB(s,q))$. Then the lemma is equivalent to the claim that $B_\automaton$ is accepting iff $\probMeasure^{\dtmc_s}(\accLanguage(\automatonD(s,q))) = 1$, which we prove in the following.
	
	\smallskip
	\noindent
	\textbf{Regarding ``$\boldsymbol{\Rightarrow}$''.} We use contraposition, i.e., we show that $\probMeasure^{\dtmc_s}(\accLanguage(\automatonD(s,q))) < 1$ implies that $B_\automaton$ is rejecting. Let $a$ be the state reached in the DRA $\automaton$ for $E$ upon reading $u$, and consider the product model $G = \dtmc_s \times \lfufa_q \times \automaton_a \times \automatonD(s,q)$. Since $\lfufa_q, \automaton_a$ and $\automatonD(s,q)$ are deterministic and (w.l.o.g.) total, $G$ is a DTMC. 
	
	Because $\probMeasure^{\dtmc_s}(\accLanguage(\automatonD(s,q))) < 1$ there is a $x \in \Paths^*(\dtmc_s)$ such that a rejecting BSCC $B_\automatonD$ is reachable in the DTMC  $\dtmc_s \times \automatonD(s,q)$. Because $G$ is a DTMC, w.l.o.g. we can choose $x$ such that, upon lifting $x$ to $G$, a BSCC $B_G$ is reached. Moreover, the BSCC $B_\lfufa$ of $\dtmc \otimes \lfufa$ is entered after following $u$, so we can additionally assume that the state reached in $G$ upon $x$ is of the form $\langle s, q, a', d \rangle$ for some $a' \in \automaton$ and $d \in \automatonD(s,q)$. By construction, it must hold for any $\langle s', q', a', d' \rangle \in B_G$ that $\langle s', q' \rangle \in B_{\lfufa}$, $\langle s', a' \rangle \in B_{\automaton}$ and $\langle s', d' \rangle \in B_\automatonD$. 
	
	Now let $z \in \Paths^*(\dtmc_s)$ be such that (1) $z$ closes a loop on $\langle s, q \rangle$ in $\dtmc \otimes \lfufa$, (2) all states of $B_\automaton$ are visited at least once when following $z$ from \emph{any} state $\langle s, a \rangle$ in $B_\automaton$, and (3) all states of $B_\automatonD$ are visited at least once when following $z$ from \emph{any} state $\langle s, d \rangle$ in $B_\automatonD$.
	
	Such a $z$ can be obtained as follows. Let $\langle s, a_1 \rangle, \ldots, \langle s, a_m \rangle$ and $\langle s, d_1 \rangle, \ldots, \langle s, d_n \rangle$ be arbitrary orderings of the states in $B_\automaton$ and $B_\automatonD$, respectively, that have $s$ in their first component. 
	We start with the states $\langle s, a_i \rangle$ in $B_\automaton$ and choose $x_1$ to be such that upon following $x_1$ from $\langle s, a_1 \rangle$, all states in $B_\automaton$ are visited at least once. Since $B_\automaton$ is a BSCC of $\dtmc \otimes \automaton$, such an $x_1$ is guaranteed to exist. Let $\langle s_1^*, q_1^* \rangle$ be the state reached in $\dtmc \otimes \lfufa$ when following $x_1$ from $\langle s, q \rangle$. Since $\langle s, q \rangle \in B_\lfufa$ there is a finite $y_1$ such that $z_1 = x_1y_1$ closes a loop on $\langle s, q \rangle$ in $\dtmc \otimes \lfufa$. Hence, $z_1$ is such that its lifting to $\dtmc \otimes \automaton$ from $\langle s, a_1 \rangle$ visits all states of $B_\automaton$ at least once and simultaneously closes a loop on $\langle s, q \rangle$ in $\dtmc \otimes \lfufa$. 
	
	Now consider the lifting of $z_1$ to $\dtmc \otimes \automaton$ on $z_1$ starting in $\langle s, a_2 \rangle$. If, along this finite path, all states of $B_\automaton$ are visited at least once we set $z_2 = \varepsilon$. Otherwise, we set $z_2 = x_2 y_2$ such that, upon moving according to $z_1x_2$ from $\langle s, a_2 \rangle$ all states of $B_\automaton$ are visited at least once, and such that $x_2y_2$ closes a loop on $\langle s, q \rangle$ in $\dtmc \otimes \lfufa$. Then it is ensured that, upon following $z_1z_2$ from both $\langle s, a_1 \rangle$ and $\langle s, a_2 \rangle$, all states of $B_\automaton$ are visited at least once and a loop on $\langle s, q \rangle$ is closed. Repeating this procedure yields words $z_i$, $1 \leq i \leq m$, such that $z_1z_2 \ldots z_m$ satisfies (1) and (2). Afterwards, we turn our attention to $B_\automatonD$ and follow a similar approach to find $z_1', \ldots, z_n'$ such that, upon following $z_1 \ldots z_mz_1' \ldots z_i'$ from $\langle s, d_i \rangle \in B_\automatonD$ all states of $B_\automatonD$ are visited at least once and a loop on $\langle s, q \rangle$ is closed. In the end, setting $z = z_1 \ldots z_m z_1' \ldots z_n'$ ensures that (1), (2) and (3) hold. 
	
	Let now $v = xz$. Then both $x$ and $z$ close loops on $\langle s, q \rangle$ in $\dtmc \otimes \lfufa$, so $v$ does as well. Following powers of $v$ in $\dtmc_s \times \automaton_a$ yields
	\begin{align}
		\langle s, a \rangle \overset{v}{\longrightarrow} \langle s, a_1 \rangle \overset{v}{\longrightarrow} \langle s, a_2 \rangle \overset{v}{\longrightarrow} \ldots \overset{v}{\longrightarrow} \langle s, a^* \rangle \overset{v}{\longrightarrow} \ldots \overset{v}{\longrightarrow} \langle s, a^* \rangle, \label{eq:runs-on-powers-of-cycle-word}
	\end{align}
	where the finiteness of $\automaton$ implies that there are $j_1, k_1 \in \mathbb{N}$ with $j_1 < k_1$ such that the state $\langle s, a^* \rangle$ reached after $v^{j_1}$ and $v^{k_1}$ is the same. 
	Similarly, there are $j_2, k_2 \in \mathbb{N}$ with $j_2 < k_2$ and such that the state $\langle s, d^* \rangle$ reached in $\dtmc_s \times \automatonD(s,q)$ after $v^{j_2}$ and $v^{k_2}$ is the same. 
	
	Now let $n_1 = k_1 - j_1$, $n_2 = k_2 - j_2$, and $n = j_1 \cdot j_2 \cdot n_1 \cdot n_2$. Then $n \geq j_1, n \geq j_2$ and $n$ is divisible by both $n_1$ and $n_2$. It follows from arguments in \cite{FDFAAORL} that the state $\langle s, \overline{a} \rangle$ reached in $\dtmc_s \times \automaton_a$ when moving according to $v^n$ is also reached upon $v^{2n}$, i.e., that $v^n$ closes a loop on $\langle s, \overline{a} \rangle$ in this DTMC. The same holds for the state $\langle s, \overline{d} \rangle$ reached in $\dtmc_s \times \automatonD(s,q)$ on $v^n$. 
	
	Since $v = xz$ starts with $x$, the state $\langle s, \overline{d} \rangle$ is in the rejecting BSCC $B_\automatonD$ of $\dtmc_s \times \automatonD(s,q)$. Moreover, by the construction of $z$ it follows that all states of $B_\automatonD$ are visited at least once when following $v^n$ from $\langle s, \overline{d} \rangle$. Because $v^n$ additionally closes a cycle on $\langle s, \overline{d} \rangle$, \Cref{lem:accepting-and-rejecting-bscc-dra-product} implies that $(v^n)^\omega = v^\omega \notin \accLanguage(\automatonD(s,q)) = \accLanguage(\automatonB(s,q))$. Hence, no run of $I_{\langle s, q \rangle}$ on $v^\omega$ visits infinitely many states $\langle s, q, f \rangle$ with $f \in \afinal^{\pfufa^q}$. 
	
	But then it follows that $\pfufa^q(v) \cap \afinal^{\pfufa^q} = \emptyset$. To see this assume the opposite, i.e., that there is a $f \in \afinal^{\pfufa^q}$ with $f \in \pfufa^q(v)$. Because $\fufa$ is power-unambiguous, $\afinal^{\pfufa^q} \cap \delta^{\pfufa^q}(f, v^l) \neq \emptyset$ for all $l \geq 1$. As $v$ is constructed such that it closes a loop on $\langle s, q \rangle$ in $\dtmc \otimes \lfufa$, there is a lifting of $v^\omega$ to $I_{\langle s, q \rangle}$ that, after every repetition of $v$, is in a state of the form $\langle s, q, f_i \rangle$ for some $f_i \in \afinal^{\pfufa^q}$. But then $v^\omega \in  \accLanguage(\automatonD(s,q))$, a contradiction. 
	
	Therefore, $\pfufa^q(v) \cap \afinal^{\pfufa^q} = \emptyset$ and so $v \notin \accLanguage(\pfufa^q)$. Because $v$ closes a loop on $q$ in $\lfufa$, $(u,v)$ is a $q$-normalized decomposition of $uv^\omega$ that is not accepted by $\fufa$, so $uv^\omega \notin E$ because $\fufa$ is saturated.
	From \Cref{lem:accepting-and-rejecting-bscc-dra-product} it then follows that $B_\automaton$ must be rejecting, since $v^n$ closes a cycle on the state $\langle s, \overline{a} \rangle$ in $B_\automaton$ reached after lifting $uv^n$ to $\dtmc \otimes \automaton$ that visits all states of $B_\automaton$ at least once, and $uv^n(v^n)^\omega = uv^\omega \notin E = \accLanguage(\automaton)$. 
	
	Therefore, if $\probMeasure^{\dtmc_s}(\accLanguage(\automatonD(s,q))) < 1$ then $B_\automaton$ must be rejecting. Thus, if $B_\automaton$ is accepting then $\probMeasure^{\dtmc_s}(\accLanguage(\automatonD(s,q))) \geq 1$. Because $\probMeasure^{\dtmc}(\accLanguage(\automatonD(s,q))) \leq 1$ as a probability, the claim follows.
	
	\smallskip
	\noindent
	\textbf{Regarding ``$\boldsymbol{\Leftarrow}$''.} Let $\probMeasure^{\dtmc_s}(\accLanguage(\automatonD(s,q))) = 1$. We prove that $B_\automaton$ must be accepting. 
	
	From $\probMeasure^{\dtmc_s}(\accLanguage(\automatonD(s,q))) = 1$ it follows that for every $x \in \Paths^*(\dtmc_s)$ there is an extension $y$ such that, upon reading $xy$, a state $\langle s, q, f \rangle$ for an $f \in \afinal^{\pfufa^q}$ can be reached in $I_{\langle s, q \rangle}$. If this would not be the case then there is a $x \in \Paths^*(\dtmc)$ with $Cyl(x) \cap \accLanguage(\automatonD(s,q)) = \emptyset$. But then $\probMeasure^{\dtmc_s}(\accLanguage(\automatonD(s,q))) \leq 1 - \probMeasure(x) < 1$, a contradiction to our assumption. 
	
	Now consider the DTMC $G = \dtmc \otimes \lfufa \otimes \automaton$ and let $v$ be a finite extension of $u$ such that the lifting of $uv$ to $G$ reaches some BSCC $B_G$. Note that $\dtmc \otimes \automaton$ and $\dtmc \otimes \lfufa$ enter the BSCCs $B_\automaton$ and $B_\lfufa$ after $u$, and so stay there along $v$. W.l.o.g. we can thus assume that the lifting of $uv$ to $G$ ends in a state $\langle s, q, a \rangle \in B_G$ for some $a \in \automaton$, i.e., that the run of $\dtmc \otimes \lfufa$ on $uv$ ends in $\langle s, q \rangle$. Because all considered automata are deterministic and (w.l.o.g.) total, and since we are in a BSCC of $G$ after following $uv$, there is a finite $x$ such that (1) $x$ closes a loop on $\langle s, q \rangle$ in $\dtmc \otimes \lfufa$ and (2) upon reading $x$ from any state $\langle s, a' \rangle \in B_\automaton$, all state of $B_\automaton$ are visited at least once. Such a $x$ can be constructed by using a procedure similar to the one described in the first part of the proof. As argued above, because $\probMeasure^{\dtmc_s}(\accLanguage(\automatonD(s,q))) = 1$ there is an extension $y$ of $x$ such that there is a lifting of $xy$ to $I_{\langle s, q \rangle}$ that ends in some $\langle s, q, f \rangle$ with $f \in \afinal^{\pfufa^q}$. Let $z = xy$. 
	
	Similar to \Cref{eq:runs-on-powers-of-cycle-word}, the runs starting in $\langle s, a \rangle$ of $\dtmc \otimes \automaton$ on powers of $z$ must, for some $j, k \in \mathbb{N}$ with $j < k$, end in a common state $\langle s, a^* \rangle$ because $\automaton$ is finite. Let $n = k-j \geq 1$ and $m = n \cdot j$. Then $m \geq j$ and $m$ is divisible by $n$, so the determinism of $\automaton$ implies that $z^m$ closes a loop on the state $\langle s, \overline a \rangle$ reached from $\langle s, a \rangle$  on $z^m$. Moreover, since $m \geq 1$ and $z^m$ starts with $x$, a run of $\dtmc \otimes \automaton$ on $z^m$ starting in $\langle s, \overline{a} \rangle$ must visit all states of $B_\automaton$ at least once. By \Cref{lem:accepting-and-rejecting-bscc-dra-product} it follows that if $uvz^\omega \in E = \accLanguage(\automaton)$ then $B_\automaton$ is accepting. 
	
	To see that this is indeed the case observe that, by the choice of $z$, there is a lifting of $z$ to $I_{\langle s, q \rangle}$ that ends in a state $\langle s, q, f \rangle$ with $f \in \afinal^{\pfufa^q}$. By projecting to the third component we obtain a run of $\pfufa^q$ on $z$ that ends in $f$, so $z \in \accLanguage(\pfufa^q)$. Moreover, $z$ closes a loop on $\langle s, q \rangle$ in $\dtmc \otimes \lfufa$, and so $q = \delta^\lfufa(q, z)$. Because the lifting of $uv$ to $\dtmc \otimes \lfufa$ ends in $\langle s, q \rangle$, we additionally have that $q = \lfufa(uv)$. Hence,  $uvz^\omega \in \accLanguage(\fufa) \subseteq E$, finishing the proof.
\end{proof}

\LemAlmostSurelyForAllOrNoneFUFA*
\begin{proof}
	Let $u$ be a path whose lifting to $\dtmc \otimes \lfufa$ ends in $\langle s', q' \rangle$. Moreover, let $v$ be an extension of $u$ such that the lifting of $uv$ to $\dtmc \otimes \automaton$ ends in a BSCC $B_\automaton$ of $\dtmc \otimes \automaton$ and closes a loop on $\langle s', q' \rangle$ in $\dtmc \otimes \lfufa$. This is always possible since $\dtmc \otimes \automaton$ is a DTMC, so a BSCC is reachable from all of its states, and because $\langle s', q' \rangle \in B_{\lfufa}$ yields that any finite path of $\dtmc_{s'}$ can be extended such that its lifting to $\dtmc \otimes \lfufa$ closes a loop on $\langle s', q' \rangle$. 
	Lastly, let $x$ be an extension of $uv$ such that the lifting of $uvx$ to $\dtmc \otimes \lfufa$ ends in $\langle s, q \rangle$, which is guaranteed to exist because $\langle s', q' \rangle$ and $\langle s, q \rangle$ are in the same BSCC $B_\lfufa$ of $\dtmc \otimes \lfufa$. 
	
	Then $uvx$ is a finite path whose lifting to $\dtmc \otimes \lfufa$ reaches $\langle s, q \rangle \in B_\lfufa$ and whose lifting to $\dtmc \otimes \automaton$ reaches the BSCC $B_\automaton$. Since $\probMeasure^{\dtmc_s}(\accLanguage(\automatonB(s,q))) = 1$ by assumption it follows from \Cref{lem:characterization-bscc-dfufa} that $B_\automaton$ is accepting. 
	
	However, $B_\automaton$ is already entered after $uv$ and the lifting of $uv$ to $\dtmc \otimes \lfufa$ ends in $\langle s', q' \rangle \in B_\lfufa$. Therefore, it follows again from \Cref{lem:characterization-bscc-dfufa} that $\probMeasure^{\dtmc_{s'}}(\accLanguage(\automatonB(s', q'))) = 1$. 
\end{proof}

\LemZeroOnePropertyFUFA*
\begin{proof}
	Let $B_{\lfufa}$ be good. We prove that all BSCCs reachable in $\dtmc \otimes \automaton$ after $u$ are accepting. Let $\langle s, a \rangle$ be the unique state reached on the lifting of $u$ to $\dtmc \otimes \automaton$, and let $B_{\automaton}$ be a BSCC of $\dtmc \otimes \automaton$ reachable from $\langle s, a \rangle$ on some $v$. Since $B_{\lfufa}$ is good there is a state $\langle s', q' \rangle$ in $B$ with $q' \in \mstates$ and $\probMeasure^{\dtmc_{s'}}(\accLanguage(\automatonB(s',q'))) = 1$, and because $B_{\lfufa}$ is already reached upon $u$ in $\dtmc \otimes \lfufa$ there must be a word $w$ such that the lifting of $y = uvw$ to $\dtmc \otimes \lfufa$ ends in $\langle s', q' \rangle$. But then $y$ is a finite word that simultaneously reaches $B_\automaton$ and $\langle s', q' \rangle \in B_{\lfufa}$, so \Cref{lem:characterization-bscc-dfufa} implies that $B_\automaton$ is accepting. It follows that the finite paths of $\dtmc$ with prefix $u$ almost surely end up in accepting BSCCs when lifted to $\dtmc \otimes \automaton$. Therefore, almost surely a path generated by $\dtmc$ with prefix $u$ satisfies $E$. 
	
	Now let $B_{\lfufa}$ be bad. If $B_{\lfufa}$ does not contain a state $\langle s, q \rangle$ with $q \in \mstates$ then the claim follows from \Cref{lem:no-A-state-prob-0}.
	Otherwise, we can argue similar to the previous case that there are finite paths $v$ and $w$ such that upon $uv$ a BSCC $B_\automaton$ is reached in $\dtmc \otimes \automaton$, and such that the lifting of $uvw$ to $\dtmc \otimes \lfufa$ ends in some $\langle s, q \rangle \in B_{\lfufa}$ with $q \in \mstates$. But then $\probMeasure^{\dtmc_s}(\accLanguage(\automatonB(s,q))) < 1$ because $B_{\lfufa}$ is bad, and it follows from \Cref{lem:characterization-bscc-dfufa} that $B_\automaton$ must be rejecting. Hence, all BSCCs in $\dtmc \otimes \automaton$ reachable after $u$ are rejecting. Therefore, almost surely a path generated by $\dtmc$ with prefix $u$ violates $E$.
\end{proof}
\section{Model Checking DTMCs Against UFA with Prefix-dependent Languages}\label{app:cex-ufa-mc}
\begin{figure}[t]
	\centering
	\begin{tikzpicture}[->,>=stealth',shorten >=1pt,auto, semithick]
		\tikzstyle{every state} = [text = black, scale = 0.9]
		
		\node[state] (q0) {$q_0$}; 
		\node[right of = q0, node distance = 2cm] (temp) {}; 
		\node[state, above of = temp, accepting, node distance = 0.8cm] (q1) {$q_1$};
		\node[state, right of = q1, node distance = 2cm] (q2) {$q_2$};
		\node[state, below of = temp, node distance = 0.8cm] (q3) {$q_3$};
		\node[state, right of = q3, accepting, node distance = 2cm] (q4) {$q_4$};
		\node[left of = q0, node distance = 1cm] (init) {}; 
		
		\path 
		(init) edge (q0)
		(q0) edge node {$s$} (q1)
		(q0) edge node {$s$}(q3)
		(q1) edge [bend left = 10] node {$s$}(q2)
		(q2) edge [bend left = 10] node {$s$}(q1)
		(q3) edge [bend left = 10] node {$s$}(q4)
		(q4) edge [bend left = 10] node {$s$}(q3)
		;

		\node[right of = temp, node distance = 4cm, state] (s) {$s$}; 
		\node[left of = s, node distance = 1cm] (initm) {}; 
		
		\path 
		(s) edge [loop above] node {$1$} (s)
		(initm) edge (s)
		;
		
		\node[state, right of = s, node distance = 2cm] (q0) {$s,q_0$}; 
		\node[right of = q0, node distance = 2cm] (temp) {}; 
		\node[state, above of = temp, accepting, node distance = 0.8cm] (q1) {$s,q_1$};
		\node[state, right of = q1, node distance = 2cm] (q2) {$s,q_2$};
		\node[state, below of = temp, node distance = 0.8cm] (q3) {$s,q_3$};
		\node[state, right of = q3, accepting, node distance = 2cm] (q4) {$s,q_4$};
		
		\path 
		(q0) edge (q1)
		(q0) edge (q3)
		(q1) edge [bend left = 10] (q2)
		(q2) edge [bend left = 10] (q1)
		(q3) edge [bend left = 10] (q4)
		(q4) edge [bend left = 10] (q3)
		;
		
		\node[above of = initm, node distance = 0.7cm] {$\dtmc$}; 
		\node[above of = init, node distance = 0.7cm] {$\automaton$}; 
		\node[above of = q0, node distance = 0.7cm, xshift = -0.3cm] {$G$}; 
	\end{tikzpicture}
	\caption{A UFA $\automaton$ for the language $s^+$ (left), a DTMC $\dtmc$ (middle) and the directed graph $G$ constructed according to the definitions given in \cite{MCMCAUBA} (right).}
	\label{fig:cex-ufa-mc}
\end{figure}

Let $\dtmc$ be a DTMC and  $\automaton$ a UFA. Moreover, let $\stateSpace$ be the set of states of $\dtmc$, let $Q$ be the set of vertices of $\automaton$, and let $\accLanguage(\automaton)$ be the language accepted by $\automaton$. W.l.o.g. we assume that the alphabet of $\automaton$ coincides with $\stateSpace$ and that $\labelFunction(s) = \{s\}$ for all $s \in \stateSpace$, which can always be achieved by existential renaming \cite{MCMCAUBA}. The model checking procedure proposed in \cite{MCMCAUBA} to compute the probability that $\dtmc$ generates a path with a prefix accepted by $\automaton$, i.e., to compute $\probMeasure^{\dtmc}(\accLanguage(\automaton) \cdot S^\omega)$, works as follows:  

\begin{enumerate}
	\item Construct a directed graph $G$ with vertices in $V = \stateSpace \times Q$ and where there is an edge from $(s,q)$ to $(s', q')$ iff $\prob(s,s') > 0$ and $q' \in \delta^\automaton(q, s')$. 
	\item Set $V^{acc} = \{(s,q) \mid q \in \afinal^{\automaton}\}$, $V^{dead}$ the set of all states in $V$ that cannot reach $V^{acc}$, and $V^? = V \setminus (V^{acc} \cup V^{dead})$. 
	\item Solve the following system of linear equations: 
	\begin{alignat*}{100}
		x_{s,q} &= 1 &&\quad \text{if} (s,q) \in V^{acc} \\
		x_{s,q} &= 0 &&\quad \text{if} (s,q) \in V^{dead} \\
		x_{s,q} &= \sum_{s' \in S} \sum_{q' \in \delta^\automaton(q, s')} \prob(s,s') \cdot x_{s',q'} &&\quad \text{if } (s,q) \in V^?
	\end{alignat*}
	\item Return $\probMeasure^{\dtmc}(\accLanguage(\automaton) \cdot \stateSpace^\omega) = x_{\initialState, \ainit}$, where $\initialState$ is the initial state of $\dtmc$ and $\ainit$ is the initial state of $\automaton$.
\end{enumerate}

Now consider the UFA $\automaton$ on the left of \Cref{fig:cex-ufa-mc} and the DTMC $\dtmc$ in the middle of the figure. The (reachable fragment of the) directed graph $G$ obtained from $\dtmc$ and $\automaton$ is shown on the right of the figure. States represented as double-circles in $G$ correspond to states in $V^{acc}$, and $V^{dead} = \emptyset$. Solving the system of linear equations from above yields
\begin{align*}
	x_{s, q_0}	&= \sum_{s' \in \stateSpace} \sum_{q'\in \delta^\automaton(q_0, s')} \prob(s,s') \cdot x_{s', q'} 
	= \prob(s,s) \cdot \underbrace{x_{s, q_1}}_{= 1 \text{ as} (s,q_1) \in V^{acc}} + \prob(s,s) \cdot x_{s, q_3}
	\\&= 1 + x_{s, q_3}
	= 1 +  \sum_{s' \in \stateSpace} \sum_{q' \in \delta^\automaton(q_3, s')} \prob(s,s') \cdot x_{s', q'} 
	= 1 + \underbrace{x_{s, q_4}}_{= 1 \text{ as } (s, q_4) \in V^{acc}} 
	= 2.
\end{align*}
Since $\probMeasure^\dtmc(\accLanguage(\automaton) \cdot \stateSpace^\omega)$ is a probability, it cannot coincide with $x_{s, q_0} = 2$ in this case. 

The reason for the erroneous computation of $x_{s, q_0}$ in the above example is that the system of linear equations counts paths of $\dtmc$ starting with $s$ multiple times. More precisely, in the directed graph $G$ as in \Cref{fig:cex-ufa-mc}, any infinite path that starts with $s$ is already ``accepted'' when moving to $\langle s, q_1 \rangle \in V^{acc}$. However, there are two outgoing edges of $q_0$ on $s$, yielding a second transition of $G$ from $\langle s, q_0 \rangle$ to $\langle s, q_3 \rangle$. To obtain $x_{s, q_0}$, the system of linear equations sums up the values of $x_{s, q_1}$ and $x_{s, q_3}$, which both equal $1$ since the DTMC $\dtmc_s$ almost surely generates a path accepted by the automata $\automaton_{q_1}$ and $\automaton_{q_3}$. Hence, the probability of $\dtmc$ to generate a path starting with $s$ is counted twice, yielding $x_{s, q_0} = x_{s, q_1} + x_{s, q_3} = 2$. 

This behavior cannot occur if the language of $\automaton$ is prefix-free or, more generally, if for all $x, xy \in \accLanguage(\automaton)$ the accepting run of $\automaton$ on $x$ is a prefix of the accepting run of $\automaton$ on $xy$. Given that (one of) these requirements are (is) satisfied, the algorithm of \cite{MCMCAUBA} correctly computes the desired probabilities.
We are not aware of any results that allow the computation of satisfaction probabilities of DTMCs against arbitrary UFA-specifications in polynomial time. 
\end{document}